\title{Capacitated Fair-Range Clustering: Hardness and Approximation Algorithms\thanks{Authors listed in alphabetical order.}}
\author{%
   Ameet Gadekar \\
   CISPA Helmholtz Center for Information Security\\
   Saarbr{\"u}cken, Germany\\
   \texttt{{firstname.lastname@cispa.de}}
   \And
   Suhas Thejaswi \\
   Max Planck Institute for Software Systems \\
   Kaiserslautern, Germany \\
   \texttt{{lastname@mpi-sws.org}}
}
\begin{document}

\maketitle

\begin{abstract}
Capacitated fair-range $k$-clustering generalizes classical $k$-clustering by incorporating both capacity constraints and demographic fairness. In this setting, data points are categorized as clients and facilities, where each facility has a capacity limit and may belong to one or more possibly intersecting demographic groups. The task is to select $k$ facilities as centers and assign each client to a center such that: ($a$) no center exceeds its capacity, ($b$) the number of centers selected from each group lies within specified lower and upper bounds---defining the fair-range constraints, and ($c$) the clustering cost---\eg, $k$-median or $k$-means---is minimized.

Prior work by Thejaswi et al. (KDD 2022) showed that even satisfying fair-range constraints is \np-hard, thereby making the problem inapproximable to any polynomial factor. We strengthen this result by showing that inapproximability persists even when the fair-range constraints are trivially satisfiable, highlighting the intrinsic computational complexity of the clustering task itself. Assuming standard complexity-theoretic conjectures, we further show that no non-trivial approximation is possible without exhaustively enumerating all $k$-subsets of the facility set. Notably, our inapproximability results hold even on tree metrics and even when the number of groups is logarithmic in the size of the facility set.

In light of these strong inapproximability results, we focus our attention to a more practical setting where the number of groups is constant. In this regime, we design two approximation algorithms: ($i$) a polynomial-time $\bigO(\log k)$- and $\bigO(\log^2 k)$-approximation algorithm for the $k$-median and $k$-means objectives, and ($ii$) a fixed-parameter tractable algorithm parameterized by $k$, achieving $(3+\epsilon)$- and $(9 + \epsilon)$-approximation, respectively. These results match the best-known approximation guarantees for capacitated clustering without fair-range constraints and resolves an open question posed by Zang et al. (NeurIPS 2024).
\end{abstract}

\section{Introduction}
\label{sec:intro}

Clustering is the task of partitioning a set of data points into clusters by choosing $k$ representative points, known as cluster centers (or simply centers, when the context is clear), and assigning each data point to a center to form a clustering solution.
The quality of a clustering solution is typically measured using clustering cost, most commonly defined by the $k$-median (or $k$-means) objective, where the goal is to minimize the sum of (squared) distances between each data point and its assigned center.
In a more general setting, data points are distinguished as clients and facilities---which may or may not overlap---with a constraint that cluster centers must be chosen from the set of facilities.
Further, in capacitated clustering, each facility is also associated with a capacity that limits the number of clients that can be assigned to it. Here, the task is to choose $k$ centers and assign clients to centers in a way that does not exceed their capacities, while minimizing the clustering cost.

In real-world applications, data points can be associated with attributes such as sex, education level, or language skills, forming possibly intersecting groups corresponding to these attributes. In this setting, and with the growing focus on algorithmic fairness, clustering problems that require selecting centers from different demographic groups have been studied under the umbrella of fair clustering~\cite{chhabra2021anoverview}. This line of work, specifically addressing cluster center fairness, has introduced several problem formulations that impose lower bounds, upper bounds, or equality constraints on the number of centers chosen from each group~\cite{hajiaghayi2012local,krishnaswamy2011matroid,kleindessner2019fair,thejaswi2021diversity,thejaswi2022clustering} A more general variant, known as fair-range clustering that has both lower and upper bounds on the number of centers chosen from each group~\cite{hotegni2023approximation,zhang2024parameterized,thejaswi2024diversity}. While prior efforts have primarily focused on fairness in uncapacitated settings, many real-world applications often impose capacity limitations for cluster centers, which is the focus of our work.

To further motivate the relevance of studying this setting, consider a university mentorship initiative to support incoming students from diverse academic, socioeconomic, and cultural backgrounds. The program aims to assign each student (client) to a mentor (facility) who will serve as their primary point of contact for guidance. Each mentor has a limited capacity---they can support only a fixed number of students due to time constraints---and mentors belong to one or more demographic groups---\eg, based on sex, country of origin, or academic discipline---forming possibly intersecting groups. To ensure the program to be effective, the university should solve a clustering task: ($i$) assigning students to mentors based on shared academic goals or proximity in fields of study (minimizing a clustering objective), ($ii$) respecting mentor capacity limits, and ($iii$) ensuring diversity in mentor selection---\eg, ensuring representation from women, international faculty, or underrepresented scientific disciplines.

This example highlights a broader class of real-world (clustering) problems where diversity, capacity, and proximity must all be considered when designing algorithmic decision-support systems. Such problems can be formalized as the \emph{capacitated fair-range clustering} problem, where the goal is to select $k$ centers from a set of facilities and assign each client to a center such that the number of clients assigned to each center does not exceed its capacity (\emph{capacity constraints}), and ensure that the number of centers selected from each group lies within specified lower and upper bounds (\emph{fair-range constraints}). The clustering objective can be $k$-median or $k$-means, resulting in the \emph{capacitated fair-range $k$-median} or \emph{capacitated fair-range $k$-means} problem.

In light of the growing interest in fair clustering, there has been remarkable progress towards understanding the computational complexity as well as design of algorithms for these problems, both in polynomial-time and fixed-parameter tractable (\fpt) setting.%
\footnote{Informally, a (parameterized) problem $P$ is fixed-parameter tractable (approximable) if there exists an algorithm that for any instance $(x,k) \in P$  computes an exact (approximate) solution in time  $f(k) \cdot |x|^{\bigO(1)}$, for some computable $f$;  $k$ is called the parameter of the problem. We denote by $\fpt(k)$ for such running times. }
When the groups are disjoint, polynomial-time approximation algorithms are known for fair-range clustering~\cite{hotegni2023approximation}. However, when the groups intersect, Thejaswi et al.~\cite{thejaswi2021diversity} showed\footnote{In fact, their reduction produces instances with lower-bound only requirements. However, our results can be extended to produce instances with lower-bound only requirements. See Appendix~\ref{app:hardness} for details.} that the problem is inapproximable to any multiplicative factor. A key insight in their result is that, with intersecting groups, even satisfying the fair-range constraints becomes \np-hard regardless of the clustering objective to be optimized. As a consequence, the fair-range $k$-median ($k$-means) problem is inapproximable to any multiplicative factor, both in polynomial-time and in \(\fpt(k)\)-time, even for structured inputs such as Euclidean and tree metrics. Naturally, these results extend to the capacitated variants of these fair-range clustering problems, as they capture the corresponding uncapacitated versions.

While their inapproximability result is significant, it falls short to capture the true complexity of the underlying clustering task, as it focuses solely on the hardness of satisfying the fair-range constraints. In practice, there exist many instances---including those with intersecting groups---where a feasible solution (\ie, one satisfying fair-range constraints) can be found efficiently (or in polynomial-time). For example, a simple greedy strategy that selects facilities covering the most constraints may produce a feasible solution. However, such solutions can be arbitrarily far from being optimal in terms of the clustering cost. To further strengthen the complexity landscape of this problem, we ask: 
\begin{tcolorbox}
{\bf Question:} \emph{Is it possible to approximate the (capacitated) fair-range clustering problem when feasible solutions can be found in polynomial-time?}
\end{tcolorbox}
In this work, we answer this question negatively, revealing the intrinsic hardness of the underlying clustering problem. Additionally, we identify instances that are of practical interest but bypass the above hardness result, and design polynomial-time and $\fpt(k)$-time approximation algorithms. In detail, our contributions are  as follows.\footnote{All proofs are available in the Appendix.} We use $n$ to denote the number of data points in the instance.

\xhdr{Hardness of Approximation}
We strengthen the inapproximability landscape by showing that the hardness does not arise solely from the complexity of satisfying the fair-range constraints. Specifically, we prove that the fair-range $k$-median (and $k$-means) problem remains \np-hard to approximate to any polynomial factor, even when feasible solutions can be found in polynomial-time. While our inapproximability factor matches that of \citet{thejaswi2021diversity}, our result is fundamentally stronger, as the hardness arises from the underlying clustering task itself (see Theorem~\ref{thm:hard:np1} for a precise statement). 
Since capacitated variants generalize their uncapacitated counterparts, our inapproximability results naturally extend to the capacitated setting. We further strengthen our hardness result in two ways. First, observe that any feasible solution, which can be found efficiently in this case, is a $\Delta$ (or $\Delta^2$) approximate solution for fair-range $k$-median (or $k$-means), where $\Delta$ is the distance aspect ratio of the instance.%
\footnote{\label{foot:aspect-ratio} In a metric space $(X,d)$, the aspect ratio $\Delta$ is the ratio between the maximum and minimum pairwise distances, \ie, $\Delta := \frac{d_{\max}}{d_{\min}}$, where $d_{\max}=\max_{x,y \in X}d(x,y)$ and $d_{\min}=\min_{x,y \in X}d(x,y)$.}
In stark contrast, we show that this factor is essentially optimal under $\p \neq \np$ conjecture (see Theorem~\ref{thm:hard:npdelta} for details).
Next, assuming \GAPETH
\footnote{Roughly speaking, \gapeth says that there exists an $\epsilon>0$ such that there is no $2^{o(n')}$ time algorithm that decides if the given \threesat formula $\phi$ on $n'$ variables has a satisfying assignment or every assignment satisfies at most $(1-\epsilon)$ fraction of clauses of $\phi$. See Hypothesis~\ref{hyp: gapeth} for a precise formulation.\label{foot:gapeth}},
we show a stronger result (see Theorem~\ref{thm:hard:enu1}): there is no $n^{o(k)}$-time algorithm
that can approximate the (capacitated) fair-range $k$-median (or $k$-means) problem to any polynomial
factor, even when feasible solutions can be found in polynomial-time. 
Note that the trivial
brute-force algorithm, which enumerates all $k$-tuples of facilities, runs in time $n^{\bigO(k)}$.
Our hardness result implies that this is essentially the best possible---even when seeking only an
approximate solution. Furthermore, our inapproximability result holds even when the number of groups
is logarithmic in the size of the facility set, and even on tree metrics.

\xhdr{Approximation Algorithms}
In light of strong inapproximability results, we turn our attention to identifying instances, for which we can obtain non-trivial approximations. One regime that bypasses the above theoretical hardness barrier, and is simultaneously of practical interest is when the number of groups is constant. This setting has been extensively studied in prior work~\cite{kleindessner2019fair,thejaswi2021diversity,thejaswi2022clustering,zhang2024towards}.
In this setting, we show that design of non-trivial factor approximation algorithms are indeed possible.

\textit{Polynomial-time approximation algorithms.}
For constant many groups,  we present $\bigO(\log k)$- and $\bigO(\log^2 k)$-approximation algorithms for the $k$-median and $k$-means objectives, respectively (see Theorem~\ref{thm:polyapx}). Our algorithms run in polynomial-time and match the best-known approximation factors for their non-fair counterparts~\cite{charikar1998rounding}.
Our approach relies on embedding the original instance into a tree metric, followed by solving the problem exactly on the tree using dynamic programming. Such tree embeddings are well-studied~\cite{bartal1996probabilistic,bartal1998onapproximating,fakcharoenphol2004atight}, and have been applied to obtain approximation algorithms for clustering problems~\cite{charikar1998rounding,bartal1998onapproximating,adamczyk2019constant}, among other optimization problems. However, naively embedding all data points into a tree yields $\bigO(\log n)$-approximation ($\bigO(\log^2 n)$ resp.), since these embeddings suffer from $\bigO(\log n)$ distortion in the distances. Our approximation algorithms achieve significantly better approximation factors, \textit{viz.}, $\bigO(\log k)$ and $\bigO(\log^2 k)$ factors for the $k$-median and $k$-means objectives, respectively.

\textit{Constant-factor $\fpt(k)$-approximation algorithms.}
In pursuit of constant-factor approximation algorithms, we explore the \fpt regime with respect to parameter $k$, the number of centers in the solution. While our inapproximability result rules out $n^{o(k)}$-time approximation algorithms in the general setting, this hardness result no longer applies when the number of groups is constant.
As our next contribution, in Theorem~\ref{thm:fptapx}, we give $(3 + \epsilon)$ and $(9 + \epsilon)$-approximation algorithms, for any $\epsilon >0$, for the capacitated fair-range $k$-median and $k$-means problems, respectively. These algorithms run in time $(\bigO(k \epsilon^{-1} \log n))^{\bigO(k)} \cdot n^{\bigO(1)}$,  for constant number of groups, and match the best-known approximation guarantees for their unfair counterparts~\cite{cohenaddad2019on}.
Our algorithm is based on the leader-guessing framework, which has been successfully applied to solve several clustering problems in recent years~\cite{cohenaddad2019on, cohenaddad2019tight,thejaswi2022clustering,zhang2024parameterized,chen2024parameterized}.
A key challenge, however, in directly applying this framework is that the chosen facilities may be infeasible, since they must simultaneously satisfy both capacity  and fairness constraints---which  prior approaches are not equipped to handle.

The rest of the paper is organized as follows: Section~\ref{sec:related} reviews related work, Section~\ref{sec:problem} defines the problem, Section~\ref{sec:hardness} presents inapproximability results, Section~\ref{sec:alg} describes our approximation algorithms, and Section~\ref{sec:discussion} offers conclusions, limitations and broader impact of our work.

\section{Related work}
\label{sec:related}

Our work builds on prior research in clustering and algorithmic fairness. For comprehensive surveys on clustering and fair clustering, we refer the reader to these surveys~\cite{jain1999data, chhabra2021anoverview}.

Clustering is a fundamental problem in computer science, extensively studied in both theoretical and applied domains~\cite{jain1988algorithms,vazirani2001approximation}. Among the most well-known clustering formulations are the $k$-median and $k$-means problems~\cite{vazirani2001approximation}, along with their capacitated variants, where each facility can serve only a limited number of clients~\cite{charikar1998rounding}.
A seminal line of work by \citet{bartal1996probabilistic} introduced approximation algorithms based on probabilistic tree embeddings, yielding an $\bigO(\log^2 n)$-approximation for capacitated $k$-median, and later improved to $\bigO(\log n)$~\cite{fakcharoenphol2004atight}. Despite their practical relevance, the best-known polynomial-time approximations remain at $\bigO(\log k)$ for $k$-median and $\bigO(\log^2 k)$ for $k$-means~\cite{adamczyk2019constant}, with no improvements in recent years.
In the \fpt regime, \citet{adamczyk2019constant} gave a $(7 + \epsilon)$-approximation for capacitated $k$-median in $2^{\bigO(k \log k)} \cdot n^{\bigO(1)}$, and it was later improved to $(3 + \epsilon)$ and $(9 + \epsilon)$ for capacitated $k$-median and $k$-means in  $(\bigO(k \epsilon^{-1} \log n))^{\bigO(k)} \cdot n^{\bigO(1)}$ time~\cite{cohenaddad2019on}.

Fairness in unsupervised machine learning tasks---such as clustering, feature selection, and dimensionality reduction---has gained prominence in recent years as part of a broader focus on algorithmic fairness~\cite{matakos2024fair, gadekar2025fair,kleindessner2019fair,chierichetti2017fair,samadi2018theprice,abbasi2023parameterized}. However, fair clustering was studied even before algorithmic fairness became a prominent research focus. For example, the red-blue median problem limited the maximum number of servers chosen from each type (\eg, red or blue)~\cite{hajiaghayi2012local}, and its generalization, the matroid median problem, captured broader fairness-like constraints~\cite{krishnaswamy2011matroid}. Related problems also appear in robustness-based clustering, which aims to prevent disproportionately high costs for any clients~\cite{bhattacharya2014new}.
Our work focuses on cluster center fairness, which has seen substantial progress in recent years through formulations imposing lower bounds, upper bounds, or equality constraints on the number of centers selected from each group~\cite{gadekar2025fair,kleindessner2019fair,thejaswi2021diversity,thejaswi2022clustering, jones2020fair}. We study the most general formulation---fair-range clustering---which enforces both lower and upper bounds on the number of centers selected from each group.

\citet{hotegni2023approximation} gave a polynomial-time approximation algorithm for the uncapacitated fair-range clustering with disjoint groups under $(\ell,p)$-norm objective.~\citet{thejaswi2024diversity,thejaswi2022clustering} addressed the case of intersecting groups, giving $(1 + \frac{2}{e} + \epsilon)$- and $(1 + \frac{9}{e} + \epsilon)$-approximations for $k$-median and $k$-means, respectively, in $\fpt(k)$-time, when the number of groups is constant. More recently, \citet{zhang2024parameterized} presented a $(1+\epsilon)$-approximation for fair-range $k$-median in Euclidean metrics in $\fpt(k)$-time, and asked about the possibility of designing $\fpt$-approximation algorithms when facilities have capacity constraints. 
\citet{quy2021fair} studied fair clustering with capacity constraints, but their setting differs to us in two ways: first, fairness is imposed on clients via proportional fairness, and capacities limit the size of each cluster. In contrast, we impose fairness on center selection with lower and upper bounds on the number of centers per group. Our capacity limits are tied to facilities---each facility with its own limit---so the cluster size depends on the selected center.

\section{The Capacitated Fair-Range Clustering Problem}
\label{sec:problem}

We formally define of our problem. 

\begin{definition}[The capacitated fair-range $k$-median (and $k$-means) problem]
An instance $\Ical = \CFRkClustIns$ of the \emph{capacitated fair-range $k$-clustering problem} is defined by positive integers $k$ and $t$, a set $C$ of clients, a set $F$ of facilities, and a metric $d$ over $C \cup F$. 
Each facility in $F$  belongs to one or more demographic groups, forming possibly intersecting groups denoted by $\GG=\{G_i\}_{i \in [t]}$. 
Each group $G_i$ is associated with a lower bound requirement $\alpha_i$ and an upper bound requirement $\beta_i$. The requirements are represented by vectors $\alphavec=(\alpha_i)_{i \in [t]}$ and $\betavec=(\beta_i)_{i \in [t]}$. 
Furthermore, each facility $f \in F$ has a capacity $\zeta: F \rightarrow \ZZ_{\geq 0}$.
The task is to select a subset $S \subseteq F$ of at most $k$ facilities and find an assignment function $\rho:C \rightarrow S$ that assigns each client $c \in C$ to a facility $f \in S$, to form a clustering solution $(S,\rho)$. A solution $(S,\rho)$ is feasible if it satisfies:
\squishlist
\item $\forall G_i \in \GG$, the number of selected centers from $G_i$ lie within  $\alpha_i$ and $\beta_i$, \ie, $\alpha_i \leq |S \cap G_i| \leq \beta_i$,
\item $\forall f \in S$, $f$ is assigned at most $\zeta(f)$ clients, \ie, $|\{c\in C: \rho(c) = f\}| \leq \zeta(f)$. 
\squishend
The objective of the \emph{capacitated fair-range $k$-median} is to minimize 
${\cost_\cI(C,S) := \sum_{c \in C} d(c, \rho(c))}$,
while for  capacitated fair-rage $k$-means, the objective is to minimize 
$\cost_\cI(C,S) := \sum_{c \in C} d(c, \rho(c))^2$,
over all feasible solutions $(S,\rho)$.
We succinctly denote these problems as \CFRkMed and \CFRkMeans, respectively.
\end{definition}
When facilities have unlimited capacities and can serve any number of clients, 
the problem is referred as the fair-range $k$-median (and $k$-means) problem and denoted succinctly as \FRkMed (and \FRkMeans). 
When the client set $C$ is clear from context, we write $\cost_{\Ical}(S)$ for $\cost_{\Ical}(C, S)$; when both $\Ical$ and $C$ are clear, we use $\cost(S)$.
For discrete metrics, we assume that $d$ is defined by a weighted graph $H$ whose vertex set  contains $C \cup F$, and where $d$ corresponds to the shortest-path metric on $H$. We say that $d$ is a tree  metric if $H$ is tree.
We use $n$ to denote the size of the vertex set of $H$. For a positive integer $\kappa$, we use $[\kappa] = \{1, \dots, \kappa\}$. We assume the distance aspect ratio, denoted as $\Delta$, of the metric space of the given instance is polynomially bounded, \ie, $\Delta = n^{\bigO(1)}$.

\section{On the Hardness of Approximation of (Capacitated) Fair-Range Clustering}
\label{sec:hardness}

As mentioned in the introduction, the prior work \citet{thejaswi2024diversity} established inapproximability results for \FRkMed (\FRkMeans) without capacity constraints, but their hardness stems from the \np-hardness of satisfying fair-range constraints when groups intersect. However, in many practical settings---including those with intersecting groups---feasible solutions can often be found efficiently, for example by greedily selecting facilities that satisfy the most constraints.
In this section, we show that even when feasible solutions can be found trivially, no polynomial-time algorithm can approximate (capacitated) fair-range clustering to any polynomial factor, assuming $\p \neq \np$. To formalize this, we define \FRkMedO (and \FRkMeansO) as those instances of \FRkMed (and \FRkMeans, resp.) which admit a polynomial-time algorithm for finding feasible solutions. We establish the following hardness of approximation result for these instances.

\begin{restatable}{theorem}{nphardpoly}
\label{thm:hard:np1}
    There is no polynomial-time algorithm that can approximate \FRkMedO (or \FRkMeansO) to any polynomial factor, unless $\p = \np$. The hardness holds even on tree metrics.
\end{restatable}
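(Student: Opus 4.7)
My plan is to reduce 3-SAT to a gap version of \FRkMedO on a tree metric, adapting the reduction of \citet{thejaswi2021diversity} by introducing an ``escape'' facility per variable that trivializes fair-range feasibility without closing the clustering gap. Fix a constant $c\ge 1$. Given a 3-SAT instance $\phi$ with variables $x_1,\dots,x_n$ and clauses $C_1,\dots,C_m$, I set $k=n$, $W=n^{c+1}$, and build a tree $T$ rooted at $r$: the root has $n$ children $v_1,\dots,v_n$ each attached by an edge of weight $W$, and each $v_i$ has children $f_i^T, f_i^F$ (weight $0$), an ``escape'' facility $f_i^0$ (weight $W$), and a client $c_i$ (weight $1$). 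I take the variable group $G_i=\{f_i^T,f_i^F,f_i^0\}$ with $\alpha_i=\beta_i=1$ and, for each clause $C_j$, a clause group $H_j$ consisting of the three facilities encoding the \emph{falsifying} assignment of each literal of $C_j$ (so $f_i^F\in H_j$ if $x_i\in C_j$ and $f_i^T\in H_j$ if $\neg x_i\in C_j$), with $\alpha_{H_j}=0,\beta_{H_j}=2$.

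The first observation is that $\{f_i^0 : i\in[n]\}$ trivially satisfies all group bounds and is produced in polynomial time, so the instance lies in \FRkMedO. Next, any feasible $S$ selects exactly one facility per $G_i$, and since any cross-variable facility is at distance $\ge 2W+1 > 1+W$ from $c_i$ via the root, each $c_i$ is optimally served by the facility chosen from its own $G_i$, at cost $1$ if that facility is $f_i^T$ or $f_i^F$ and cost $1+W$ if it is $f_i^0$. In the YES case, a satisfying assignment induces the selection $\{f_i^{\sigma_i}\}$ with $\sigma_i\in\{T,F\}$: each clause has at least one true literal, so $|S\cap H_j|\le 2$ and the solution is feasible with cost exactly $n$. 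In the NO case, no solution supported entirely on $\{f_i^T,f_i^F\}_{i\in[n]}$ can be feasible, since its induced Boolean assignment would violate some $|S\cap H_j|\le 2$; hence at least one $f_i^0$ must be selected, giving $\mathrm{OPT}\ge (n-1)+(1+W)>W$.

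This yields a multiplicative gap $\mathrm{OPT}_{\mathrm{NO}}/\mathrm{OPT}_{\mathrm{YES}}\ge W/n=n^c$, and since $c$ is arbitrary and the aspect ratio $\Delta=O(W)$ is polynomial in the instance size, no polynomial-time algorithm approximates \FRkMedO within any polynomial factor unless $\p=\np$. For \FRkMeansO the same construction applies verbatim after squaring distances, yielding YES cost $n$ and NO cost $\ge (1+W)^2 > W^2$, again a polynomial gap.

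The main technical obstacle I anticipate is to set the two edge weights (root-to-$v_i$ and $v_i$-to-$f_i^0$) in a coordinated fashion so that (i) cross-variable assignments through the root are strictly worse than using one's own escape facility, so that a client cannot bypass the $1+W$ cost paid for $f_i^0$, and (ii) the escape facility, being outside every clause group, genuinely decouples its variable from all clause constraints---this is precisely what makes feasibility trivial while preserving the clustering gap. A secondary subtlety, as promised in the paper's footnote, is adapting the construction to instances with only lower-bound (or only upper-bound) group requirements, which can be handled by replacing each $H_j$ with the complementary ``at least one satisfying literal'' group and redirecting the escape facilities accordingly.
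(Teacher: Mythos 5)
Your construction is correct in its essentials, but it takes a genuinely different route from the paper. The paper reduces at the granularity of clauses: it sets $k=m$, creates one client per clause and eight facilities per clause (one per partial assignment), uses clause groups with $\alpha=\beta=1$ to force one partial assignment per clause and \emph{assignment groups} to force consistency across clauses; the gap comes from weight-$D$ edges to non-satisfying partial assignments, and the quantitative bound is driven by H{\aa}stad's \MakeLowercase{inapproximability} of \threesat. You instead reduce at the granularity of variables: $k=n$, one client per variable, literal facilities plus a cheap-to-find but expensive-to-use escape facility per variable, with upper-bound-only clause groups ($\beta_{H_j}=2$) that forbid the falsifying triple; the gap comes entirely from the escape facilities and you only need the plain decision version of \threesat, no PCP machinery. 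Your argument is more elementary and fully suffices for Theorem~\ref{thm:hard:np1}: the all-escape solution certifies membership in \FRkMedO, feasible assignment-type solutions exist iff $\phi$ is satisfiable, clients are always served inside their own gadget because cross-gadget distances exceed $1+W$, and the YES/NO gap of $W/n=n^c$ with $c$ arbitrary rules out every polynomial factor. What the paper's heavier construction buys is that it extends directly to the sharper Theorem~\ref{thm:hard:npdelta} (a gap of $\Theta(\Delta)$ rather than your $\Theta(\Delta/n)$, since in your instance the NO-case cost exceeds the YES-case cost by only one escape detour) and, after bundling clauses into super-clauses, to the Gap-ETH-based $n^{o(k)}$ bound of Theorem~\ref{thm:hard:enu1}; neither follows from your gadget as stated, but neither is claimed by Theorem~\ref{thm:hard:np1}. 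Two small repairs: give the $v_i$--$f_i^T$ and $v_i$--$f_i^F$ edges a positive weight (say $1$) rather than $0$, since the paper assumes a polynomially bounded aspect ratio $\Delta=d_{\max}/d_{\min}$ and zero-length edges make $d_{\min}=0$ (and collapse distinct points); and your one-line fix for the lower-bound-only variant needs the escape facility $f_i^0$ to be added to every ``satisfying-literal'' clause group of the clauses containing $x_i$, otherwise the all-escape solution stops being feasible and the oracle property is lost.
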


We further strengthen this result in two ways: first,  we show a stronger inapproximability factor for polynomial time algorithms, and second, we rule out super polynomial time algorithms  with the same inapproximability guarantee as in Theorem~\ref{thm:hard:np1}.
For the former, note that, any feasible solution is a $\Delta$-approximation, where recall that $\Delta$ is the distance aspect ratio of $d$. In the following theorem, we show that this trivial bound is optimal.

\begin{theorem}[Informal version of Theorem~\ref{thm:hard:np2}]\label{thm:hard:npdelta}
    Assuming $\p \neq \np$, there is no polynomial time algorithm to approximate \FRkMedO  to a factor within $(\Delta-2)/16$. Furthermore, the hardness result holds on tree metrics.
\end{theorem}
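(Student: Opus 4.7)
The plan is to reduce X3C to \FRkMedO on a tree (in fact a star) metric so that the optimum costs on YES and NO instances differ by exactly the factor $\Delta$, whence the claimed $(\Delta-2)/16$-inapproximability follows immediately.

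Given an X3C instance $(U, \mathcal{S})$ with $|U|=3q$ and each $S \in \mathcal{S}$ of size $3$, and a scale parameter $M \ge 4$, I will construct the following instance. Let $F = \{f_S : S \in \mathcal{S}\} \cup \{f^*\}$ and $C = \{c_u : u \in U\}$. For each $u \in U$, set $G_u = \{f_S : u \in S\} \cup \{f^*\}$ with $\alpha_u = \beta_u = 1$, and let the budget be $k = q$. The tree is a star with a Steiner hub $v$: attach every $c_u$ and every $f_S$ to $v$ via a unit-weight edge, and attach $f^*$ to $v$ via an edge of weight $M$. The induced distances are $d(c_u, f_S) = 2$ for all $u, S$, and $d(c_u, f^*) = M+1$ for every $u$, so $d_{\min}=2$, $d_{\max}=M+1$, and $\Delta=(M+1)/2$.

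A short case analysis shows that the constraints $\alpha_u = \beta_u = 1$ combined with $f^* \in G_u$ for every $u$ force every feasible solution $S^\star \subseteq F$ to be either (a) the singleton $\{f^*\}$---since once $f^* \in S^\star$, adding any $f_S$ would push $|S^\star \cap G_u|$ above $\beta_u$ for any $u \in S$---or (b) a set of $q$ facilities whose underlying sets partition $U$, i.e., an exact cover of the X3C instance. Option (a) is always feasible and is returned by a polynomial-time feasibility oracle, so the instance indeed lies in \FRkMedO.

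If the X3C instance is YES, option (b) attains cost $3q \cdot 2 = 6q$; otherwise, only option (a) is feasible, with cost $3q(M+1)$. The ratio of the two optima is therefore exactly $(M+1)/2 = \Delta$, so any polynomial-time algorithm with approximation ratio strictly less than $\Delta$ would distinguish YES from NO X3C instances, contradicting $\p \neq \np$. Since $(\Delta-2)/16 < \Delta$ for $\Delta \ge 2$, the $(\Delta-2)/16$-inapproximability on tree metrics follows.

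The main difficulty one might anticipate---that a tree metric cannot faithfully encode the overlap structure of X3C, since a client close to several facilities forces those facilities to be mutually close---is sidestepped here by taking the simplest possible tree (a star) and making all non-$f^*$ pairwise distances uniformly equal to $2$. The fair-range structure alone carries the X3C combinatorics, while the tree is used solely to place $f^*$ far from every client. Consequently there is no substantial obstacle in executing the plan; the only slack is the factor of $16$ between the tight ratio $\Delta$ achieved here and the bound $(\Delta-2)/16$ in the statement.
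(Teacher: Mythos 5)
Your proof is correct, and it takes a genuinely different route from the paper. The paper reduces from gap-\threesat (via H{\aa}stad's theorem): clients are clauses, facilities are partial assignments, clause groups with $\alpha_i=\beta_i=1$ force one assignment per clause, and assignment groups force consistency, so that feasible solutions are exactly in bijection with global assignments and the cost gap between the satisfiable and the $(7/8+\epsilon)$-unsatisfiable case yields the $(1-\epsilon)(\Delta-2)/16$ factor. You instead reduce from X3C with a single ``escape'' facility $f^*$ placed far away, and your case analysis of feasibility (either $\{f^*\}$ alone or an exact cover) is airtight; the resulting gap is exactly $\Delta$, which is tighter than the paper's $(\Delta-2)/16$ and matches the trivial $\Delta$-approximation upper bound exactly, while only needing plain NP-hardness rather than PCP-based gap hardness. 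Two trade-offs are worth noting. First, the paper's \threesat gadget is reused almost verbatim (with clauses bundled into super-clauses) to prove the \gapeth-based $n^{o(k)}$ lower bound of Theorem~\ref{thm:hard:enu1}, where a gap version of the source problem is essential; your X3C reduction would not extend there. Second, in your No-instances the feasible region collapses to the single solution $\{f^*\}$, so the instance formally lies in \FRkMedO\ but the hardness is arguably still ``about feasibility'' (deciding whether any second feasible solution exists), whereas in the paper's construction every assignment yields a feasible solution and the hardness genuinely lives in choosing the cheapest among exponentially many structurally similar feasible solutions --- a distinction the formal statement does not require, but which matches the paper's stated motivation more closely.
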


Next, we strengthen the running time of Theorem~\ref{thm:hard:np1}.
Note that (capacitated) \FRkMed and \FRkMeans can be  exactly solved in time $n^{O(k)}$, by enumerating all $k$-sized subsets of the facility set. Our next result shows that, under $\gapeth^{\ref{foot:gapeth}}$,
even for finding a non-trivial approximation requires \( n^{\Omega(k)} \) time, making brute-force algorithm essentially our best hope, even in this case. This also rules out any $\fpt(k)$-time approximation algorithms for these problems.

\begin{restatable}{theorem}{enuhard}\label{thm:hard:enu1}
    Assuming \gapeth, there is no $f(k)\cdot n^{o(k)}$ algorithm, for any computable function $f$, that can approximate \FRkMedO (or \FRkMeansO) to any polynomial factor. The hardness result holds even when the number of groups is $\bigO(k^3 \log n)$, and even when the metric space is a tree.
\end{restatable}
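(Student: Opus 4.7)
The plan is to reduce from a \gapeth-hard gap problem, specifically a colored $k$-max-coverage gap problem (in the style of Karthik--Laekhanukit--Manurangsi) where one must distinguish a YES case in which $k$ chosen sets (one from each of $k$ color classes) cover a universe $U$ entirely, from a NO case in which any such $k$-selection leaves at least an $\epsilon$ fraction of $U$ uncovered. Under \gapeth, this gap problem admits no $f(k)\cdot n^{o(k)}$ algorithm for any computable $f$.

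First, I would build a tree-metric instance of \FRkMedO whose facility set is partitioned into $k$ blocks $F_1,\dots,F_k$, one per color class, with each block arranged as leaves of its own subtree hanging off a common root $r$. To force ``one facility per block,'' I would define block groups $G_i := F_i$ with $\alpha_i = \beta_i = 1$; a feasible selection is then trivially produced by picking any facility from each block, so the instance lies in \FRkMedO. Additional groups, amounting to $\bigO(k^3 \log n)$ in total, are used to support the coverage gadgets---encoding $\bigO(\log n)$-bit names of the elements of $U$ across $\bigO(k^2)$ block-pair indicators---but they are given loose bounds such as $\alpha=0$ and $\beta=k$, so they do not affect feasibility. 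Their role is purely structural: they shape the cost geometry without changing the feasibility check.

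Second, I would install ``coverage clients'' so that the clustering cost distinguishes YES from NO. For each element $u \in U$, attach a client $c_u$ to the tree so that covering facilities are at distance $\bigO(1)$ and non-covering ones are at distance $\Delta = n^{\Omega(1)}$ via routes that pass through $r$. Then in a YES instance the total cost is $\bigO(|U|)$, while in a NO instance at least $\epsilon |U|$ clients have no nearby selected facility, incurring cost $\Omega(\epsilon|U|\cdot\Delta)$---a polynomial-factor gap that transfers the $n^{\Omega(k)}$ lower bound to \FRkMedO. Hardness for \FRkMeansO then follows by squaring distances in the standard way, with the gap blowing up correspondingly.

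The main obstacle will be realizing the required ``close-to-covering, far-from-non-covering'' structure on a \emph{tree}, where pairwise distances must satisfy the four-point condition and one cannot simply declare arbitrary closeness patterns. I would address this using a hierarchically well-separated tree in which, for each element $u$, the client $c_u$ is placed below a carefully designed internal node whose descendants contain exactly the facilities covering $u$, so that non-covering facilities are forced to route via higher ancestors. When several elements have overlapping but non-nested covering sets, I would duplicate clients across block-local subtrees---one copy of $c_u$ per block---and use the block-group $\alpha_i = \beta_i = 1$ constraint to ensure only one copy per block can be served cheaply, without disturbing feasibility. Making the block-pair indicator groups orchestrate this duplication cleanly (so that the $\bigO(k^3 \log n)$ bound on groups is preserved and the four-point condition holds throughout) is the chief technical hurdle.
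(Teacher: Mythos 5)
There is a genuine gap, and it is the one you yourself flag as the ``chief technical hurdle'': on a tree metric you cannot realize the pattern ``client $c_u$ is at distance $\bigO(1)$ from exactly the facilities covering $u$ and at distance $\Delta$ from the rest'' when the covering sets of different elements overlap arbitrarily across blocks. If $c_u$ is within distance $\bigO(1)$ of two facilities, those two facilities are within distance $\bigO(1)$ of each other, so the ``close sets'' of all clients would have to be nearly nested clusters---which a general coverage instance does not provide. Your proposed fix (one copy of $c_u$ per block, placed in that block's subtree) breaks the YES case: in a YES instance of max coverage each element is covered by \emph{some} selected set, not by the selected set of \emph{every} color class, so for most blocks $i$ the copy $c_u^i$ sits at distance $\Delta$ from the facility chosen in $F_i$, and the YES cost blows up to $\Omega(k|U|\Delta)$ instead of $\bigO(|U|)$, destroying the gap. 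A second, conceptual error: groups with bounds $\alpha=0$, $\beta=k$ are vacuous---in fair-range clustering a group only restricts which sets $S$ are feasible and has no influence on distances or cost---so your $\bigO(k^3\log n)$ ``structural'' groups cannot ``shape the cost geometry'' and do no work in the construction.

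The paper's proof sidesteps exactly this obstacle by never asking the metric to encode a cross-block condition. It reduces directly from \gapeth-hard \threesat, partitions the $m=O(n')$ clauses into $k$ super-clauses, and for each super-clause creates one client plus $2^{O(n'/k)}$ facilities (one per partial assignment), at distance $1$ if the partial assignment satisfies all local clauses and $D$ otherwise; a single dummy root makes this a depth-$2$ tree. Each client's cost thus depends only on the facility chosen from its own block. The global condition---consistency of the $k$ partial assignments---is enforced entirely by the group constraints: super-clause groups with $\alpha_i=\beta_i=1$, plus assignment groups $G^{(\tilde C_i,\tilde C_{i'})}_{X_j\mapsto a}$ with tight bounds $1$ that forbid two selected facilities from assigning conflicting values to a shared variable. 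This is where the $\bigO(k+n'k^2)=\bigO(k^3\log n)$ group count comes from (since $\log n=\Theta(n'/k)$). The running-time lower bound then follows because the instance has size $n=k\cdot 2^{\Theta(n'/k)}$, so an $f(k)\,n^{o(k)}$ algorithm would decide gap-\threesat in $2^{o(n')}$ time. If you want to salvage your route, you would have to move the coverage/consistency information out of the metric and into tightly constrained intersecting groups, at which point you have essentially reconstructed the paper's argument.
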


\begin{remark}
Our inapproximability results are stated for the range setting, where both upper and lower bounds are specified for each group. In contrast, Thejaswi et al.~\cite{ thejaswi2021diversity} show hardness results even when only lower bounds are present, making their result appear formally stronger. However, we note that our hardness constructions can be adapted to obtain the same inapproximability guarantees under lower-bound-only constraints as well. See Appendix~\ref{app:hardness} for details.
\end{remark}

\xhdr{Technical Overview}
Our inapproximability results are based on reductions from the \threesat problem to \FRkMedO (and \FRkMeansO). The high level idea is that (see Theorems~\ref{thm:redthreesat} and~\ref{thm:redgapthreesat}), given an instance $\phi$ of \threesat on $n'$ variables and $m'$ clauses and $D \geq 1$, we construct an instance $\Ical$ of \FRkMed such that $(i)$ the distance aspect ratio  in $\Ical$ is $D$, $(ii$) if $\phi$ is satisfiable, then there is a feasible solution to $\Ical$ with cost at most $k$, and $(iii)$ if every assignment satisfies at most $(1-\epsilon)m$ clauses of $\phi$, for some constant $\epsilon>0$, then every feasible solution to $\Ical$ has cost $\Omega(D)$. 
This immediately rules out $o(D)$-approximation for the problem, for arbitrary values of $D$. The main crux of our construction is that we  use the group structure with requirements to create instances where $D$ can have arbitrary values, without breaking the metric property. This is in contrast with the hardness constructions for the vanilla $k$-Median ($k$-Means) problem, where we do not have such a flexibility.

\begin{wrapfigure}{r}{0.37\textwidth}
	\begin{center}
		\vspace{-20pt}
		\includegraphics[width=0.38\textwidth]{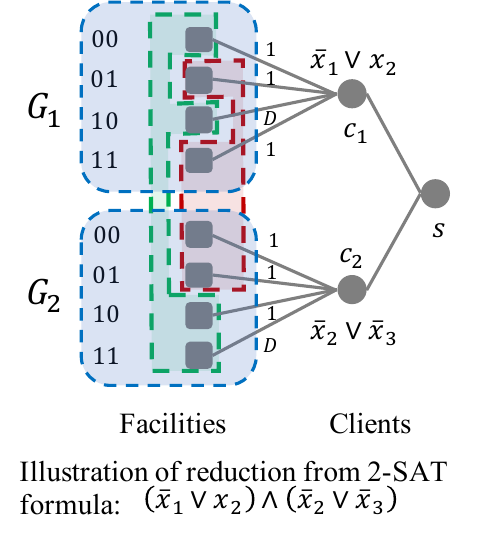}
		\vspace{-15pt}
	\end{center}
	\vspace{-15pt}
\end{wrapfigure}

In more detail, for each clause $C_i$ in $\phi$, we create a client $c_i$ and a set $F_i$ of facilities, each facility representing a possible assignment to the variables in $C_i$. We connect $c_i$ to all facilities in $F_i$, assigning edge weight $1$ if the assignment satisfies $C_i$, and $D$ otherwise. Finally, we add a dummy client $s$ and connect it to each $c_i$ with an edge of weight $D$.
Observe that this creates a tree metric and hence the metric property works for all values of $D$. 
However, note that since we are enumerating all the partial assignments of every clause, we would also like to enforce the constraint that the partial assignments corresponding to the selected facilities in the solution must be consistent. This can precisely be achieved by creating suitable groups and adding corresponding requirements.
In particular, we create two types of groups---\emph{clause groups} and \emph{assignment groups}. For each clause $C_i$ in $\phi$, we create a clause group $G_i$ that contains all the facilities in $F_i$ and set the requirements $\alpha_i = \beta_i=1$, to enforce the selection of exactly one partial assignment for $C_i$. 
To ensure consistency across assignments, we create assignment groups: for each variable $X_j$, each pair of clauses $C_i, C_{i’}$ containing $X_j$, and each assignment $a \in \{0,1\}$, we create a group $G^{(C_i,C_{i’})}_{X_j \mapsto a}$. This group includes all facilities in $F_i$ assigning $a$ to $X_j$, and all in $F_{i’}$ assigning $1-a$ to $X_j$, with both lower and upper bounds set to $1$. Finally, we set $k = m$.
The idea is that any feasible solution to this \CFRkMed instance should correspond to a set of consistent partial assignments that allows us to obtain a global assignment. Furthermore, note that, for every assignment to the variables of $\phi$, the corresponding set of facilities form a feasible solution. Hence, we can find feasible solutions to this instance trivially. 
An illustration of the reduction is shown on the right. Due to space constraints, we depict a reduction from a $2$-SAT formula with two clauses (the construction for \threesat is similar): $C_1=\bar{x}_1 \vee x_2$ and $C_2=\bar{x}_2 \vee \bar{x}_3$. We highlight clause groups $G_1$ and $G_2$ in blue, and two assignment groups---$G^{(C_1,C_2)}_{x_2 \mapsto 0}$ and $G^{(C_1,C_2)}_{x_2 \mapsto 1}$---in green and red, respectively. 

\section{Approximation Algorithms for Constant Number of Groups} 
\label{sec:alg}
In this section, we focus our attention towards a setting that is more practical, but simultaneously avoids the hardness results of the previous section. Specifically, we consider the problem when the number of groups is constant. 
Moreover, this setting  has been extensively explored in the literature (\eg, \cite{kleindessner2019fair,thejaswi2021diversity,thejaswi2022clustering,zhang2024towards}) across various notions of fair clustering. We believe that studying the capacitated fair-range setting under this regime is both natural and promising.
To this end, we present polynomial-time approximation algorithms in Section~\ref{sec:polytime} and $\fpt(k)$-approximation algorithms in Section~\ref{sec:fpttime}.
\subsection{Polynomial-time approximation algorithms}
\label{sec:polytime}

In this subsection, we design polynomial-time $\bigO(\log k)$- and $\bigO(\log^2 k)$-approximation algorithms for \CFRkMed and \CFRkMeans, respectively, when the number of groups is constant. For simplicity, we focus on $\bigO(\log k)$-approximation algorithm for \CFRkMed. Our approach can be easily generalized to \CFRkMeans to obtain $\bigO(\log ^2k)$-approximation (see Appendix~\ref{app:polytime} for details).

\begin{restatable}{theorem}{polyapx} \label{thm:polyapx}
There exists a $\bigO(\log k)$ $(\text{and } \bigO(\log^2 k))$ approximation algorithm for \CFRkMed (\CFRkMeans, resp.) that runs in $(nk^t)^{\bigO(1)}$ time.
\end{restatable}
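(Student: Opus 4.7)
The plan is to combine two classical ingredients into a single algorithm: a probabilistic embedding of the input metric $d$ into a tree metric $T$ with $O(\log k)$ expected distortion (on objective-relevant distances), and an exact dynamic program for \CFRkMed on tree metrics. I would run the DP on a sample tree from the embedding and output its clustering; since $T$ dominates $d$, the original-metric cost is bounded by the tree cost, which in expectation is $O(\log k) \cdot \mathrm{OPT}$.

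\emph{Embedding step.} Classical Bartal / FRT embeddings give only $O(\log n)$ expected distortion, which would yield an $O(\log n)$-approximation. To tighten this to $O(\log k)$, I would first compute a constant-factor approximation to the \emph{unconstrained} $k$-median problem on $(C, F, d)$; this is standard and returns a set $A \subseteq F$ of $k$ ``witness'' facilities such that every client lies within $O(\mathrm{OPT})$ of $A$. Following the technique of Charikar et al.~\cite{charikar1998rounding}, one then builds an HST whose per-scale stretch is charged to only the $k$ points of $A$ (and, in the analysis, to the $k$ centers of the unknown optimal solution $S^\star$) rather than to all $n$ points. This yields a distribution $\mathcal{D}$ over tree metrics with $d(x,y) \le d_T(x,y)$ deterministically and $\mathbb{E}_{T \sim \mathcal{D}}[\mathrm{cost}_T(C, S^\star)] = O(\log k) \cdot \mathrm{cost}_d(C, S^\star)$.

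\emph{Exact DP on a tree.} Root $T$ at an arbitrary vertex. For each vertex $v$, maintain a DP table indexed by $(a, \vec g, \beta)$, where $a \in \{0, \dots, k\}$ counts centers opened in the subtree $T_v$; $\vec g \in \{0, \dots, k\}^t$ records, for each group, how many opened centers in $T_v$ lie in that group (a single opened facility updates every coordinate $i$ with $f \in G_i$, so intersecting groups are handled correctly); and $\beta \in \{0, \dots, n\}$ is the number of clients in $T_v$ still awaiting assignment to a center outside $T_v$. The stored value is the minimum edge-cost inside $T_v$ incurred by locally assigned clients. Internal-node transitions convolve children's tables over $(a, \vec g, \beta)$; at each facility $f$ we branch on whether to open $f$ and, if so, on how many local clients to absorb (up to $\zeta(f)$); each edge contributes its length times the flow crossing it. At the root we output the minimum-cost entry with $\beta = 0$ and $\alpha_i \le g_i \le \beta_i$ for every $i \in [t]$. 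The per-vertex state space is $O(k^{t+1} n)$, merging two tables costs $(k^{t+1} n)^{O(1)}$, and there are $O(n)$ vertices, giving total runtime $(nk^t)^{O(1)}$. For \CFRkMeans, the DP adapts by summing squared path lengths, and the embedding contributes squared distortion $O(\log^2 k)$, yielding the claimed factor.

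\emph{Main obstacle.} The delicate step is the $O(\log k)$ distortion bound itself: naive FRT loses a $\log n$ factor, and tightening this requires a per-scale charging argument against the small witness set $A$, essentially the Charikar-et-al.\ refinement adapted to coexist with capacities and fair-range constraints (the embedding is oblivious to these constraints, so only the cost analysis must be reworked). The DP is routine once the state is set up as above; the only non-standard feature---the intersecting-group structure---is cleanly absorbed into the vector $\vec g$ and produces only the $k^t$ factor visible in the runtime. Derandomization proceeds by sampling $O(\log n)$ trees from $\mathcal{D}$ and returning the best, or via standard constructions of a deterministic tree cover.
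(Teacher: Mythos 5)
Your overall skeleton matches the paper's: compute a constant-factor solution $A$ to the unconstrained $k$-median instance, use it as a witness set to obtain a tree embedding with $\bigO(\log k)$ rather than $\bigO(\log n)$ distortion, and then solve \CFRkMed exactly on the tree by a dynamic program whose state carries a per-group vector of opened centers and a flow variable across each edge. The paper realizes the first step slightly more concretely---it first reroutes every point through its nearest witness in $A$ to form a ``clique-star'' metric at a constant-factor loss (Lemma~\ref{lemma:cliquestaremb}), and only then applies the Bartal/FRT embedding to the $k$-node clique (Lemma~\ref{lemma:treemetric})---whereas you describe a per-scale charging argument against $A$ in the style of Charikar et al.; these are two presentations of the same idea and either can be made rigorous.

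There is, however, a concrete gap in your dynamic program. You restrict the flow coordinate to $\beta \in \{0,\dots,n\}$, interpreted as ``clients in $T_v$ still awaiting assignment to a center \emph{outside} $T_v$,'' and you only let a facility $f$ absorb \emph{local} clients. With capacities, an optimal tree solution may need to route clients \emph{into} a subtree: a client $c$ in child subtree $T_u$ may be served by a facility $f$ with spare capacity in the sibling subtree $T_w$. In your formulation $c$ contributes $+1$ to $\beta_u$, but $T_w$ has no state expressing ``I can absorb one external client,'' and a convolution of two nonnegative $\beta$'s can never cancel a deficit in one child against a surplus in the other. As written, the DP therefore cannot represent (let alone optimize over) such assignments, and on capacitated instances it may return a suboptimal or even infeasible tree solution, which breaks the exactness needed for the $\bigO(\log k)$ guarantee. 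The fix is exactly what the paper does in Lemma~\ref{lemma:exact-tree}: make the flow signed, $b \in \{-n,\dots,+n\}$, with $b>0$ meaning $|b|$ clients routed out of the subtree and $b<0$ meaning $|b|$ clients routed in, pay $|b|\cdot d''(e)$ on each edge, and require $b=0$ at the root. With that correction (and the routine root-at-an-edge/binarization preprocessing), your runtime accounting $(nk^t)^{\bigO(1)}$ and the rest of the argument go through as in the paper.
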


At a high level, the algorithm proceeds in two steps. First, given an instance $\cI$ of \CFRkMed on general metrics, we embed it into a tree metric. Second, we design a polynomial-time exact dynamic program to solve \CFRkMed on the resulting tree metric.
As mentioned earlier, standard techniques~\cite{bartal1998onapproximating, fakcharoenphol2004atight} allow embedding any metric $\Mcal$ on $n$ points into a tree metric with $\bigO(\log n)$ distortion in the distances.\footnote{The distortion is on expectation over the probabilistic embedding of $\Mcal$ based on a distribution on tree metrics. However, such embeddings can be derandomized. See Appendix~\ref{app:polytime} for details.} Thus, if we can solve \CFRkMed exactly on tree metrics, combining this with the tree embedding yields a $\bigO(\log n)$-approximation for \CFRkMed on general metrics.
To obtain $\bigO(\log k)$-approximation, we build on the ideas of~\cite{adamczyk2019constant}, who designed a $\bigO(\log k)$-approximation algorithm for capacitated $k$-median, extending the techniques from \cite{charikar1998rounding}.

An overview of our approach is shown in Figure~\ref{fig:polytime}. 
In Phase~$1$, we embed the given instance $\Ical$ of \CFRkMed on metric $d$ into a new instance $\Ical'$ on metric $d'$ such that $d'$ dominates $d$,\footnote{That is, $d(u,v) \le d'(u,v)$, for all pairs $u,v \in C \cup F$.} and has properties that enable us to obtain better approximation guarantees. We remark that instances $\Ical'$ and $\Ical$ differ only in the underlying metric.
Specifically, $d'$ corresponds to the shortest-path metric on a graph, consisting of a complete graph (or clique) on $k$ nodes, and remaining $n-k$ nodes connected to exactly one node in the clique. Here $n:= C \cup F$. We refer to this metric as \kmet. To construct such an embedding we make use of a polynomial-time $\bigO(1)$-approximation algorithm $\Acal$ for $k$-Median (\cite{cohenaddad2025a2epsilon, aryalocal, ahmadian2019better}). Below we state our result formally.

\begin{restatable}{lemma}{cliquestaremb} \label{lemma:cliquestaremb}
Given an instance $\Ical$ of \CFRkMed on a general metric $d$, and a polynomial-time $\eta$-approximation algorithm $\Acal$ for $k$-median, we can construct, in $n^{\bigO(1)}$ time, an instance $\Ical'$ of \CFRkMed on \kmet metric $d'$ such that
\[ \cost_{\cI'}(O') \le \cost_{\Ical'}(O) \leq (4 \eta + 3) \cdot \cost_{\Ical}(O), \]
where $O, O' \subseteq F$ are optimal solutions to $\cI$ and $\cI'$, respectively.
\end{restatable}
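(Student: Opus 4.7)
The plan is to use the $\eta$-approximate $k$-median solution produced by $\Acal$ as the ``skeleton'' of the target \kmet metric and route every other point through it. Concretely, we first run $\Acal$ on the unconstrained $k$-median instance $(C, F, d)$ to obtain $S \subseteq F$ with $|S| = k$, and define $\sigma(p) := \arg\min_{s \in S} d(p, s)$ for each $p \in C \cup F$. Since any feasible $O$ to $\Ical$ induces a valid (unconstrained) $k$-median solution by assigning each client to its nearest facility in $O$, the optimal $k$-median cost is at most $\cost_{\Ical}(O)$, so the $\eta$-approximation guarantee of $\Acal$ yields $\sum_{c \in C} d(c, \sigma(c)) \leq \eta \cdot \cost_{\Ical}(O)$. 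We then build a weighted graph $H'$ on vertex set $C \cup F$: a clique on $S$ with edge weights $d(s, s')$, and, for every $p \in (C \cup F) \setminus S$, a single edge to $\sigma(p)$ of weight $d(p, \sigma(p))$. Setting $d'$ to be the shortest-path metric on $H'$ yields precisely a \kmet metric, and $\Ical'$ is obtained from $\Ical$ by replacing $d$ with $d'$ while leaving $C, F, \GG, \alphavec, \betavec, \zeta, k$ untouched, so feasibility of every solution is preserved.

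Dominance $d \leq d'$ is immediate: every edge weight in $H'$ equals the $d$-distance between its endpoints, so any $u$-$v$ path in $H'$ has total weight at least $d(u, v)$ by repeated application of the triangle inequality in $d$. This also gives the easy direction $\cost_{\Ical'}(O') \leq \cost_{\Ical'}(O)$ from optimality of $O'$ in $\Ical'$. For the main bound, we evaluate $\cost_{\Ical'}(O)$ under the very same assignment $\rho$ that witnesses $\cost_{\Ical}(O)$. For each $c \in C$, the path $c \to \sigma(c) \to \sigma(\rho(c)) \to \rho(c)$ in $H'$ witnesses
\[
d'(c, \rho(c)) \;\leq\; d(c, \sigma(c)) + d(\sigma(c), \sigma(\rho(c))) + d(\sigma(\rho(c)), \rho(c)),
\]
and bounding the middle term by the triangle inequality through $c$ and $\rho(c)$ gives
\[
d'(c, \rho(c)) \;\leq\; 2\,d(c, \sigma(c)) + d(c, \rho(c)) + 2\,d(\rho(c), \sigma(\rho(c))).
\]
Summing over $c \in C$: the first term contributes at most $2\eta \cdot \cost_{\Ical}(O)$ by the guarantee on $\Acal$, the second contributes exactly $\cost_{\Ical}(O)$, and the third is controlled via $d(\rho(c), \sigma(\rho(c))) \leq d(\rho(c), \sigma(c)) \leq d(c, \rho(c)) + d(c, \sigma(c))$, summing to at most $2(1 + \eta) \cdot \cost_{\Ical}(O)$. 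Adding these yields exactly $(4\eta + 3) \cdot \cost_{\Ical}(O)$.

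The step requiring the most care is the cost bound itself, specifically handling the ``clique edge'' $d(\sigma(c), \sigma(\rho(c)))$, which is paid directly neither by $\rho$ nor by the $k$-median solution $S$. The key trick is to route it through both $c$ and $\rho(c)$ via triangle inequality, which doubles both the $\Acal$-cost and $\Ical$-cost contributions in the final sum and produces the $(4\eta + 3)$ factor; a one-sided bound would leave an unaccounted clique distance. Feasibility under the metric swap is immediate since capacity and fair-range constraints depend only on the chosen facility set and the integer counts $|\{c : \rho(c) = f\}|$, not on distances, and the construction runs in $n^{\bigO(1)}$ time because $\Acal$ is polynomial and $H'$ together with its shortest-path metric is computable in polynomial time.
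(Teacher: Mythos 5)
Your proposal is correct and follows essentially the same route as the paper: run the $k$-median algorithm to get a (possibly infeasible) set $S$, build the clique-star graph on $S$ with all other points pendant to their nearest $S$-node, and bound $d'(c,\rho(c))$ by the three-hop path $c \to \sigma(c) \to \sigma(\rho(c)) \to \rho(c)$, expanding the clique edge by the triangle inequality through $c$ and $\rho(c)$ to get the per-client bound $4\,d(c,\sigma(c)) + 3\,d(c,\rho(c))$ and hence the $(4\eta+3)$ factor. Your handling of the comparison between the fair-range optimum and the unconstrained $k$-median optimum, and of why feasibility is preserved under the metric swap, matches the paper's argument.
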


\begin{figure}
\centering
\footnotesize
\begin{tabular}{c@{\hspace{1.6cm}}c@{\hspace{0.7cm}}c}
     & \textbf{Lemma~\ref{lemma:cliquestaremb}} & \textbf{Lemma~\ref{lemma:treemetric}} \\ [0.1em]
     & $\opt(\Ical') = \bigO(\eta) \cdot \opt(\Ical)$ & $\opt(\Ical'') = \bigO(\log k) \cdot \opt(\Ical')$ \\ [0.5em]
    {\includegraphics[width=0.2\textwidth]{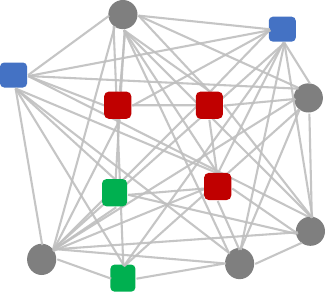}} & 
    {\includegraphics[width=0.2\textwidth]{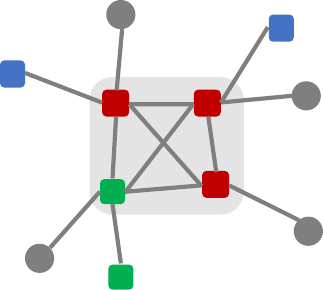}} &
    {\includegraphics[width=0.2\textwidth]{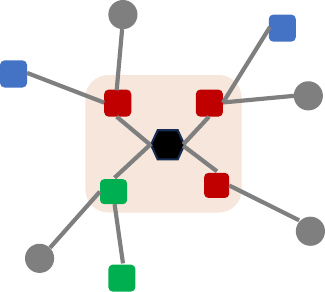}} \\[0.5em]
    \raisebox{0.5\height}{{\bf (a)}} \shortstack[l]{~Instance $\Ical$ \\~Metric $d$} & \raisebox{0.5\height}{{\bf (b)}}\shortstack[l]{~Instance $\Ical'$\\~Clique-star $d'$} & \raisebox{0.5\height}{{\bf (c)}}\shortstack[l]{~Instance $\Ical''$\\~Tree $d''$}
    \end{tabular}
    \caption{\label{fig:polytime} Overview of our algorithm for Theorem~\ref{thm:polyapx}: squares represent facilities, (gray) circles represent clients, and (black) hexagons are dummy nodes introduced in the tree embedding. Colors (red, blue and green) indicate facility groups. 
    Panel (a) shows the original instance $\Ical$ of \CFRkMed in a general metric space $d$. Panel (b) depicts the transformed instance $\Ical'$ in \kmet $d'$ obtained from Lemma~\ref{lemma:cliquestaremb} using an $\eta$-approximation solution $S$ treating $\cI$ as a vanilla $k$-median instance; $S$ is  highlighted with shaded area. Panel (c) illustrates the instance $\Ical''$ in the tree metric $d''$ obtained from Lemma~\ref{lemma:treemetric} 
    with the tree embedding of $S$ again highlighted with shaded area.}
    \vspace{-6mm}
\end{figure}

The idea to build $d'$ using $\Acal$ is as follows. First, we obtain an $\eta$-approximate set $S$ to $\cI$ using $\Acal$. Note, however, that since $\Acal$ works only for $k$-median, the set $S$ may not be feasible (both capacity and fairness wise). In $d''$, we create a clique on the nodes of $S$ with weights of the edges being the distance between the pairs in $d$. Finally, we connect the remaining points to the closest node in $S$ with weight being the corresponding distance in $d$.

In Phase~$2$, we design $\bigO(\log k)$-approximation algorithm for \CFRkMed on \kmet metrics. Towards this, we first replace the clique of $d'$ on $k$ vertices by a tree obtained from tree embeddings mentioned earlier~\cite{fakcharoenphol2004atight}, to obtain a tree metric $d''$. Note that, for any pair $u,v \in C \cup F$, we have $d'(u,v)\le d''(u,v) \le \bigO(\log k)\cdot d'(u,v)$, due to the guarantees of the embedding. In fact, we prove the following stronger result.

\begin{restatable}{lemma}{treemetric}\label{lemma:treemetric}
	Given an instance $\Ical'$ of \CFRkMed of size $n$ on a \kmet metric $d'$, we can construct, in time $n^{\bigO(1)}$, an instance $\Ical''$ of \CFRkMed on a tree metric $d''$ such that for any set $S'$ of facilities, it holds that
	\[ \cost_{\Ical'} (S') \leq  \cost_{\Ical''}(S') \leq  \bigO(\log k) \cdot \cost_{\Ical'}(S'). \]
\end{restatable}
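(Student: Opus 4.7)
The plan is to obtain $d''$ by applying an FRT-style tree embedding only to the $k$-vertex clique of $d'$ and re-attaching the remaining pendant points with their original weights. Recall that $d'$ is the shortest-path metric of a graph consisting of a clique on a center set $S$ of size $k$ together with, for each $u\in(C\cup F)\setminus S$, a single pendant edge from $u$ to its anchor $\pi(u)\in S$ of weight $w(u):=d'(u,\pi(u))$. Every pair $u,v\in C\cup F$ therefore admits the exact decomposition
\[
d'(u,v)\;=\;w(u)+d'(\pi(u),\pi(v))+w(v),
\]
with the convention $w(u):=0$ and $\pi(u):=u$ for $u\in S$. First I would invoke FRT on the metric induced on $S$ (only $k$ points) to obtain a tree $T_S$ satisfying $d_{T_S}(a,b)\ge d'(a,b)$ for all $a,b\in S$ and $\mathbb{E}[d_{T_S}(a,b)]\le \bigO(\log k)\cdot d'(a,b)$. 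Then I form $T$ by attaching each pendant $u$ as a fresh leaf of $\pi(u)$ with edge weight $w(u)$, and take $d''$ to be the shortest-path metric of $T$. Since every non-clique vertex has degree one in $T$, shortest paths in $T$ factor through anchors, yielding the analogous decomposition $d''(u,v)=w(u)+d_{T_S}(\pi(u),\pi(v))+w(v)$.

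Comparing the two decompositions pointwise and using $d_{T_S}\ge d'|_S$ gives $d''(u,v)\ge d'(u,v)$ for every pair $u,v$, which directly implies $\cost_{\Ical'}(S')\le \cost_{\Ical''}(S')$ for every $S'$. For the reverse direction, fix any $S'$ and let $\rho^*$ be an optimal assignment witnessing $\cost_{\Ical'}(S')$; because $\Ical'$ and $\Ical''$ share the same capacity and fair-range constraints, $\rho^*$ remains feasible in $\Ical''$, so $\cost_{\Ical''}(S')\le \sum_{c\in C} d''(c,\rho^*(c))$. A deterministic pointwise bound $d''(c,f)\le \bigO(\log k)\cdot d'(c,f)$ for all $(c,f)\in C\times F$ would then immediately yield $\cost_{\Ical''}(S')\le \bigO(\log k)\cdot\cost_{\Ical'}(S')$ for every $S'$.

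The main obstacle is converting FRT's expected per-pair distortion into a deterministic per-pair guarantee that must hold simultaneously for all $S'$. The plan is to derandomize the FRT sampling (the random vertex permutation and the random radius scaling) on the induced metric on $S$ via the method of conditional expectations; because $|S|=k$, this derandomization runs in $k^{\bigO(1)}\cdot n^{\bigO(1)}$ time. The potential used in the conditional-expectations argument should simultaneously control all $|C\times F|=n^{\bigO(1)}$ client-facility pair distortions, in the spirit of the Charikar-Chekuri-Goel-Guha-Plotkin derandomization of Bartal-style embeddings. If a fully pointwise $\bigO(\log k)$ bound proves delicate, a clean fallback is Bartal's deterministic HST embedding on the $k$-point metric $S$, which loses at most an $\bigO(\log\log k)$ factor in the worst case and can be absorbed into the constants of the $\bigO(\log k)$-approximation claim in the regime of interest.
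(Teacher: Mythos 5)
Your proposal follows essentially the same route as the paper's proof: apply a Bartal/FRT-style tree embedding only to the $k$-point clique, reattach each pendant vertex to its anchor with its original edge weight, obtain the lower bound from domination, and obtain the upper bound from the $\bigO(\log k)$ distortion on clique pairs via the anchor decomposition $d(u,v)=w(u)+d(\pi(u),\pi(v))+w(v)$. The derandomization subtlety you flag (turning the expected per-pair distortion into a guarantee usable for all $S'$) is present in the paper's proof as well, which dispatches it by citing the known derandomization of Bartal's embedding rather than elaborating on it.
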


Finally, we show a simple dynamic programming algorithm for \CFRkMed on tree metrics.
\begin{restatable}{lemma}{exacttree}\label{lemma:exact-tree}
There exists an exact algorithm for \CFRkMed on tree metrics in $k^{2t} \cdot n^{\bigO(1)}$ time.
\end{restatable}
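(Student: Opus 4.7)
The plan is to design a bottom-up dynamic program on the tree $T$ defining the metric, rooted at an arbitrary vertex $r$. The key structural observation is that on a tree every client-to-center path is unique, so the total cost decomposes over edges as $\sum_{e \in T} w(e)\cdot |\mathrm{flow}(e)|$, where $\mathrm{flow}(e)$ is the net number of client-to-assignment paths that cross $e$. Consequently, for each subtree $T_v$, all that the rest of the tree needs to know is: (i) how many centers from each group $G_i$ are opened inside $T_v$, (ii) the total number of opened centers in $T_v$ (which is not implied by the group counts since the groups may intersect; equivalently, one can add an artificial universal group with bound $k$), and (iii) the net flow $\phi$ across the edge $e_v$ from $v$ to its parent, where $\phi>0$ means that many unassigned clients are routed upward and $\phi<0$ means opened centers in $T_v$ have that much residual capacity awaiting clients from outside.

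For each vertex $v$, I maintain a table
\[
A_v[g_1,\ldots,g_t,\, k',\, \phi] \;=\; \min \sum_{e \in T_v} w(e)\cdot |\mathrm{flow}(e)|,
\]
where the minimum is over all choices of opened facilities in $T_v$ (respecting their capacities) and all assignments that are consistent with opening $g_i$ centers of group $G_i$, $k'$ centers in total, and leaving residual flow $\phi$ on $e_v$. The base case is a leaf: if $v$ is a facility, branch on opening it and, if opened, over how many of its $\zeta(v)$ slots absorb flow from outside (contributing negatively to $\phi$); if $v$ is a client, either $\phi=1$ (sent upward) or $\phi=0$ (served locally, provided $v$ is also an opened facility). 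For an internal vertex $v$, merge children one at a time: combining the partial table accumulated from $u_1,\ldots,u_{j-1}$ with that of $u_j$ amounts to enumerating pairs of states whose group-count vectors and total counts add entry-wise, while paying $w(v,u_j)\cdot|\phi_{u_j}|$ for the edge to $u_j$; after all children are merged, $v$'s own contribution (its client demand, or its offered capacity if $v$ is opened as a facility) is folded in, yielding $A_v$. The optimum is then $\min\{A_r[\mathbf{g},k',0] : k' \le k,\ \alpha_i \le g_i \le \beta_i\ \forall i \in [t]\}$, enforcing that no flow leaks out of the root and all fair-range constraints hold.

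For the complexity, each table has $(k+1)^{t+1}\cdot \bigO(n)$ entries; merging two tables enumerates pairs of compatible states and takes $\bigO(k^{2t+2})\cdot n^{\bigO(1)}$ work per merge. With $\bigO(n)$ merges in total, the running time is $k^{2t+2}\cdot n^{\bigO(1)} = k^{2t}\cdot n^{\bigO(1)}$, absorbing $k^2 \le n^2$ into the polynomial factor. The step I expect to be most delicate is the capacity bookkeeping at an opened facility: the recurrence at $v$ must allow $v$ to absorb any integer amount in $\{0,1,\ldots,\zeta(v)\}$ of the aggregated incoming flow, with unused capacity encoded as a negative contribution to $\phi$, so that each slot of $\zeta(v)$ is counted exactly once and the edge-cost contribution at $e_v$ is charged correctly. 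Once the state and transitions are formalized, correctness follows by a straightforward induction on the tree height, and the running time is immediate from the state-space analysis.
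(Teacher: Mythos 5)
Your proposal is essentially the paper's proof: the same tree dynamic program whose state records the per-group counts of opened centers in a subtree together with the net client flow across the edge to the parent, combined by splitting the count vector and the flow between children and charging $w(e)\cdot|\mathrm{flow}(e)|$ per edge, yielding the same $k^{2t}\cdot n^{\bigO(1)}$ bound. Your additional coordinate $k'$ for the total number of opened centers (which the per-group counts do not determine when groups intersect) is a sensible refinement of the bookkeeping, but otherwise the approach and analysis coincide with the paper's.
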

We begin by transforming the given tree into a rooted full binary tree where all clients and leaves appear at leaves using standard techniques. We then design a dynamic program over this binary tree.
At a high level, the dynamic programming table $T(e, \kappavec, b)$ stores the minimum cost (and the corresponding solution) over all feasible solutions for the subtree $T_e$ rooted at edge $e$, with respect to $\kappavec$ and $b$. Here, $\kappavec=(\kappa_i)_{i\in [t]}$ specifies that the solution must open $\alpha_i \le \kappa_i \le \beta_i$ facilities  from group $G_i$ in $T_e$, and
$b \in \{-n, \dots, +n\}$  indicates  $|b|$ clients must be routed  through the edge $e$  (routed out of $T_e$ when $b >0$ and routed into $T_e$ when $b <0$). 
To compute this entry,  we  split $\kappa$ and $b$ between left and right subtrees---connected via edges $e^\ell$ and $e^r$---such that $\kappavec = \kappavec^\ell + \kappavec^r$ and $b = b^\ell + b^r$. For each configuration $(e,\kappavec,b)$, we select the tuple $(\kappavec^\ell, \kappavec^r, b^\ell, b^r)$ that minimizes the cost and proceed in bottom-up fashion from the leaves to the root to find an optimal solution.

Combining Lemmas~\ref{lemma:cliquestaremb}, \ref{lemma:treemetric} and \ref{lemma:exact-tree} yields $\bigO(\log k)$- and $\bigO(\log^2 k)$-approximation algorithms for \CFRkMed and \CFRkMeans, respectively, running in $(nk^t)^{\bigO(1)}$ time. 
\subsection{Fixed parameter tractable time approximation algorithms}
\label{sec:fpttime}

In this subsection, we present  $(3+\epsilon)$ and $(9+\epsilon)$-approximation algorithms for \CFRkMed and \CFRkMeans, respectively, that run in $\fpt(k)$-time, for constant number of groups. Our approach is based on the leader-guessing framework, which has been successfully used to obtain \fpt-approximation algorithms for several clustering problems~\cite{cohenaddad2019on,cohenaddad2019tight,thejaswi2022clustering,zhang2024parameterized,chen2024parameterized}.  Our result is formally stated below.
\begin{restatable}{theorem}{fptapx}\label{thm:fptapx}
For any $\epsilon > 0$, there exists a randomized $(3 + \epsilon)$-approximation algorithm for \CFRkMed with running time $(\bigO(2^t k \epsilon^{-1} \log n))^{\bigO(k)} \cdot n^{\bigO(1)}$. With the same running time, a  $(9 + \epsilon)$-approximation algorithm exists for \CFRkMeans.
\end{restatable}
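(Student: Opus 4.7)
The plan is to adapt the leader-guessing framework of Cohen-Addad, Gupta, and Lee to simultaneously satisfy capacity and fair-range constraints. We describe the approach for \CFRkMed; the \CFRkMeans bound follows by replacing distances with squared distances and paying the usual factor of $3$ in the triangle-inequality arguments, yielding the $(9+\epsilon)$-factor.

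\textbf{Preprocessing and candidate leaders.} First, use a polynomial-time $O(1)$-approximation for vanilla $k$-median (ignoring capacity and fair-range constraints) to obtain $S \subseteq F$ with $|S|\le k$. Since any feasible solution to $\Ical$ is also vanilla-feasible, $\cost(S) = O(\opt(\Ical))$. Using $S$, construct a set $L$ of $O(k\,\epsilon^{-1}\log n)$ candidate leaders by including, around each $s \in S$, the $O(\epsilon^{-1}\log n)$ clients at exponentially spaced distances. A standard argument then shows that for every optimal cluster $C_i^*$ with center $o_i^*$ and cost $D_i$, some $\ell \in L$ satisfies $d(\ell,o_i^*) \le (1+\epsilon)\,D_i/|C_i^*|$.

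\textbf{Guess leaders, radii, and group signatures.} For each $i \in [k]$, enumerate a triple $(\ell_i, r_i, \sigma_i)$, where $\ell_i \in L$, $r_i$ ranges over $O(\epsilon^{-1}\log n)$ powers of $(1+\epsilon)$ scaled by $d_{\min}$, and $\sigma_i \subseteq [t]$ is the \emph{group signature} recording which groups contain $o_i^*$. The total number of tuples is $(O(2^t \cdot k\,\epsilon^{-1}\log n))^k$, matching the claimed runtime. For each tuple, define the candidate set $F_i = \{f \in F : d(\ell_i,f)\le r_i, \text{sig}(f)=\sigma_i\}$, so that under the correct guess $o_i^* \in F_i$. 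Before continuing, verify that the profile $(\sigma_1,\dots,\sigma_k)$ satisfies fair-range, namely $\alpha_g \le |\{i : g \in \sigma_i\}| \le \beta_g$ for every $g \in [t]$; otherwise discard this tuple. With fair-range enforced by construction, what remains is to pick one $f_i \in F_i$ per cluster and produce a capacitated assignment $\rho$ minimizing $\sum_{c} d(c,\rho(c))$. This is exactly the restricted subproblem solved by Cohen-Addad et al.\ for capacitated $k$-median: an LP with facility-opening variables $y_{i,f}$ and client-assignment variables $x_{i,c}$ admits polynomial-time (randomized) rounding to an integral solution of cost at most $(3+\epsilon')$ times the LP optimum. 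Since fair-range no longer appears in the LP, their rounding carries over verbatim; we output the best feasible solution over all tuples. Correctness follows by the standard triangle inequality between $o_i^*$, $\ell_i$, and any $f \in F_i$: the LP cost under the correct guess is at most $(1+O(\epsilon))\opt(\Ical)$, which composed with the rounding loss yields the $(3+\epsilon)$-bound after rescaling $\epsilon$.

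\textbf{Main obstacle.} The core difficulty, explicitly flagged earlier in the paper, is that fair-range constraints---if kept inside the LP---destroy the half-integrality / matroid-polytope structure that the capacitated rounding relies on, so prior capacitated-clustering LP-rounding machinery cannot be invoked as a black box. We circumvent this by \emph{guessing away} the fair-range constraints through the $(2^t)^k$-factor signature enumeration, reducing the per-tuple optimization to an instance of the unconstrained capacitated subproblem. A secondary technical point is that the LP must be constructed so that distinct clusters with the same signature do not ``collide'' onto a single candidate facility; this is handled by introducing one copy of each candidate per cluster index (contributing only polynomial blowup) and by noting that the optimal solution can be assumed to have distinct centers without loss of generality.
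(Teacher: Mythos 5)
Your overall architecture matches the paper's: enumerate the group structure of the optimal centers (your signatures $\sigma_i$ play the role of the paper's feasible constraint patterns) so that the fair-range constraints are ``guessed away,'' and then run leader-guessing on the residual capacitated problem. The gap is in how you resolve the per-tuple subproblem. You claim that once leaders, radii, and signatures are fixed, choosing one $f_i \in F_i$ per cluster together with a capacitated assignment is handled by an LP with opening variables $y_{i,f}$ and assignment variables $x_{i,c}$ whose rounding loses only a $(3+\epsilon')$ factor, ``carrying over verbatim'' from Cohen-Addad et al. No such rounding exists: the natural LP relaxations for capacitated $k$-median and capacitated facility location have unbounded integrality gap, and the $(3+\epsilon)$ FPT result of Cohen-Addad and Li is not an LP-rounding theorem. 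What actually makes capacitated leader-guessing work---and what the paper does---is a combinatorial selection rule: within each ball $F_i$ (restricted to the correct group), open the facility of \emph{maximum capacity}. Under the correct guess $o_i^* \in F_i$, that facility has capacity at least $\zeta(o_i^*)$, so the load served by $o_i^*$ in the optimum can be rerouted to it feasibly, and the triangle inequality through $\ell_i$ gives the $(3+O(\epsilon))$ bound directly; the assignment, once facilities are fixed, is an exact min-cost flow rather than a rounding step. As written, your proof contains no argument that the facility you open can absorb the load that $o_i^*$ carried, which is precisely the obstacle the paper flags.

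Second, your treatment of collisions is backwards. Introducing one copy of each candidate facility per cluster index does not prevent two clusters from opening copies of the same physical facility; it enables it, and the combined solution would then place load up to $2\zeta(f)$ on a facility of capacity $\zeta(f)$. The observation that the \emph{optimal} solution has distinct centers constrains the optimum, not the algorithm's choices. The paper handles this explicitly by tracking duplicates and, when the max-capacity candidate in $F_i$ has already been opened for another cluster, falling back to the next-highest-capacity unopened facility. Your alternative leader-candidate set (exponentially spaced clients around an $O(1)$-approximate vanilla solution, in place of the paper's coreset) is a legitimate variant and fits within the claimed running time, but it does not repair either of these two issues.
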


For the sake of clarity, we focus on presenting our algorithm for \CFRkMed.
As a first step towards our goal, we reduce \CFRkMed to several instances of the one-per-group weighted capacitated fair-range  $k$-median problem, abbreviated as \OPGWCkMedDis. In this variant, the facility groups are disjoint, clients are weighted, and the objective is to select exactly one facility from each group to minimize the clustering cost. Our reduction guarantees that  there exists one instance of \OPGWCkMedDis that has the same optimal cost as the \CFRkMed instance. Hence, finding an approximate solution to each of these \OPGWCkMedDis instances, would yield an approximate solution to \CFRkMed.
To perform this transformation, we define a characteristic vector $\chi_f \in \{0,1\}^t$ for each facility $f \in F$, where the $i$-th bit is set to $1$ if $f \in G_i$, and $0$ otherwise. Facilities sharing the same characteristic vector $\gamma$ are grouped into $E(\gamma) = \{f \in F : \chi_f = \gamma\}$. This induces a partition $\EE = \{E(\gamma) : \gamma \in \{0,1\}^t\}$ of the facility set $F$. Since there are at most  $2^t$ distinct characteristic vectors, this results in at most  $2^t$ disjoint facility groups.
In Lemma~\ref{lemma:feasiblecp}, we show that enumerating all possible $k$-multisets $\{\gamma_1, \dots, \gamma_k\}$ of characteristic vectors---referred to as constraint patterns---and checking for the feasibility, \ie, whether they satisfy the fair-range constraints $\alphavec \leq \sum_{i \in [k]} \gamma_i \leq \betavec$ where the inequalities are taken element-wise, can be done in time $2^{tk} \cdot n^{\bigO(1)}$.
\begin{restatable}{lemma}{feasiblecp} \label{lemma:feasiblecp}
 For any instance $\CFRkClustIns$ of \CFRkMed (or \CFRkMeans resp.), there exists a deterministic algorithm to enumerate all feasible constraint patterns in time $2^{tk} \cdot n^{\bigO(1)}$.
\end{restatable}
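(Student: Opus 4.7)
The plan is a direct enumeration. A constraint pattern corresponds to a $k$-multiset of characteristic vectors from $\{0,1\}^t$, which can equivalently be encoded as an ordered $k$-tuple $(\gamma_1,\ldots,\gamma_k) \in (\{0,1\}^t)^k$. Since there are exactly $(2^t)^k = 2^{tk}$ such ordered tuples, I would iterate through them via $k$ nested loops, each ranging over the $2^t$ possible characteristic vectors. Enumerating over ordered tuples is the right notion of bound here: enumerating multisets directly by fixing a multiplicity $m(\gamma) \in \{0,1,\ldots,k\}$ for each of the $2^t$ vectors $\gamma$ would give $(k+1)^{2^t}$ possibilities, which is far worse when $t$ is small relative to $k$.

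For each ordered tuple I compute the elementwise sum $s = \sum_{j=1}^{k} \gamma_j$ in $\bigO(tk)$ time and verify that $\alpha_i \leq s_i \leq \beta_i$ for every $i \in [t]$, which takes an additional $\bigO(t)$ time. Tuples that pass the test are canonicalized (e.g., by lex-sorting their coordinates) and inserted into a hash table, so that each distinct feasible multiset is reported exactly once. The per-tuple overhead is thus polynomial in $t$ and $k$; since both parameters are part of the input and hence bounded by the instance size $n$, the total running time is $2^{tk} \cdot n^{\bigO(1)}$, matching the claimed bound.

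There is no genuine technical obstacle here, since the statement is essentially a counting observation paired with a trivial feasibility check. The only slightly subtle point I would want to write carefully is the bookkeeping that avoids double-counting multisets, which the sort-and-hash step handles. A cleaner alternative---not needed to meet the stated bound, but worth mentioning---would be a branching procedure that successively fixes the multiplicity $m(\gamma)$ for each of the distinct characteristic vectors $\gamma \in \{0,1\}^t$, pruning whenever the running elementwise sum already exceeds $\betavec$ or cannot reach $\alphavec$ within the remaining budget of positions; this removes the ordering-induced redundancy at the source rather than deduplicating afterwards.
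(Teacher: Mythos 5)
Your proposal is correct and matches the paper's proof in substance: the paper likewise builds the partition $\PP$ of facilities by characteristic vector and enumerates the $k$-multisets over $\PP$, bounding their number by $|\PP|^k \leq 2^{tk}$ and checking the elementwise inequalities $\alphavec \leq \sum_i \gammavec_i \leq \betavec$ in polynomial time per candidate. Your ordered-tuple-plus-deduplication variant is just a minor implementation difference with the same $2^{tk}\cdot n^{\bigO(1)}$ bound.
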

While choosing exactly one facility from each group $E(\gamma_i)$ gives a feasible solution, it may result in an unbounded approximation ratio. To address this, we design an algorithm that returns a subset of facilities with bounded approximation factor,  for each such instance.
Towards this, we first reduce the number of clients to a small subset of weighted clients $(W, \omega)$, using  coreset, where $W \subseteq C$ and $\omega:W \rightarrow \RR_{\geq 0}$. The guarantee of $(W,\omega)$ is that the cost of every $k$-sized subset of facilities is approximately preserved in the coreset. For capacitated $k$-Median and $k$-Means, such coresets were constructed by~\citet{cohenaddad2019on}. We observe that these constructions also yield coresets for the corresponding fair-range variants, as the fair-range constraints impose restrictions only on the selection of facilities.

By using Lemma~\ref{lemma:feasiblecp} on the coreset (instead of $C$), we construct instances of \OPGWCkMedDis corresponding to all feasible constraint patterns. Each such instance has:
($i$) the facility groups $\EE = \{E(\gamma_i)\}_{i \in [k]}$, derived from each feasible \pattern from Lemma~\ref{lemma:feasiblecp}, and
($ii$) the weighted client set $(W, \omega)$, obtained from the coreset, and
($iii$) the weighted clients in $W$ can be fractionally assigned to the selected facilities $S$, via an assignment function $\mu:W \times S \rightarrow R_{\geq 0}$ such that $\forall c \in C, \sum_{f \in S} \mu(c,f) = \omega(c)$, and the capacity constraints are respected \ie, $\forall f \in S, \sum_{c \in W} \mu(c,f) \leq \zeta(f)$.
While satisfying the fair-range constraints, it may be necessary to select multiple facilities from the same group. To treat such selections as disjoint, we duplicate facilities and place their copies at a small distance $\epsilon_2 > 0$ apart. As a result, the facility sets in the transformed instance become disjoint.
However, this duplication must be handled carefully during facility selection, as choosing multiple copies of the same facility may violate capacities. To address this, we maintain a mapping $M$ that tracks such duplicates and ensures that no original facility is selected more than once. This issue is unique to the capacitated fair-range setting and does not arise in the uncapacitated variant. 
The resulting instance of \OPGWCkMedDis is denoted by $(k, W, \EE, \zeta, \omega)$, where the objective is to choose exactly one facility from each group $E(\gamma_i)$ to minimize the clustering cost, $\fraccost(W,S) := \sum_{c \in W, f \in S} \mu(c,f) d(c,f)$. See Section~\ref{app:fpt-apx} for a formal definition.

Our next step is to design approximation algorithm for \OPGWCkMedDis, which is stated in Lemma~\ref{lemma:opg-fptapx}. 
\begin{lemma} \label{lemma:opg-fptapx}
For any $\epsilon > 0$, there exists a randomized $(3 + \epsilon)$-approximation algorithm for \OPGWCkMedDis in time $(\bigO(k \epsilon^{-1} \log n))^{\bigO(k)} \cdot n^{\bigO(1)}$. Similarly, there exists a randomized  $(9 + \epsilon)$-approximation algorithm for \OPGWCkMeansDis with the same running time.
\end{lemma}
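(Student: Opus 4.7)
The plan is to adapt the leader-guessing framework of~\cite{cohenaddad2019on} to the one-per-group capacitated setting. Fix an optimal solution $O=\{o_1,\dots,o_k\}$ with $o_i\in E(\gamma_i)$, let $\mu^\star$ denote its optimal fractional assignment, and write $\opt_i:=\sum_{c\in W}\mu^\star(c,o_i)\,d(c,o_i)$ for the cost of cluster $i$, with mean service radius $\bar r_i:=\opt_i/\omega(C^*_i)$ where $C^*_i=\{c:\mu^\star(c,o_i)>0\}$. The strategy is to construct, for each group, a short list of candidate facilities that provably contains a near-optimal substitute for $o_i$, enumerate the Cartesian product of these lists, and verify each candidate $k$-tuple via a polynomial-size transportation LP.

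First I would perform \emph{leader sampling}: draw $N=\Theta(k\epsilon^{-1}\log n)$ clients from $W$ with probability proportional to $\omega$, producing a set $L$. A Markov-type argument shows that a single $\omega$-weighted draw conditioned on $C^*_i$ lands on a good leader $\ell$ with $d(\ell,o_i)\le(1+\epsilon)\bar r_i$ with probability $\Omega(\epsilon)$; iterating $N$ times and a union bound over the $k$ groups yields, with high probability, a good $\ell_i^\star\in L$ for every $i$. Given $L$, I build the candidate pool $\mathcal C_i\subseteq E(\gamma_i)$ by adding, for every $\ell\in L$ and every capacity threshold $\tau$ from a geometric grid $\{\zeta_{\min}(1+\epsilon)^j\}$ of size $\bigO(\epsilon^{-1}\log n)$, the facility of $E(\gamma_i)$ with $\zeta(f)\ge\tau$ that is closest to $\ell$. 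The resulting pool has size $(\bigO(k\epsilon^{-1}\log n))^{\bigO(1)}$, so the full enumeration produces at most $(\bigO(k\epsilon^{-1}\log n))^{\bigO(k)}$ many $k$-tuples.

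For each tuple $S=(f_1,\dots,f_k)$, solve the transportation LP that minimises $\sum_{c\in W,\,f\in S}\mu(c,f)\,d(c,f)$ subject to $\sum_f\mu(c,f)=\omega(c)$, $\sum_c\mu(c,f)\le\zeta(f)$, and the duplication constraint from $M$: whenever several selected facilities are copies of the same original $f^\star\in F$, their combined load must not exceed $\zeta(f^\star)$. The LP has size polynomial in $|W|+k$, absorbed into the $n^{\bigO(1)}$ factor. Correctness is witnessed by the ``lucky'' tuple where each $f_i$ is the representative of $(\ell_i^\star,\tau_i^\star)$ with $\tau_i^\star$ the largest grid value $\le\zeta(o_i)$: by construction $d(\ell_i^\star,f_i)\le d(\ell_i^\star,o_i)$, and the triangle inequality gives $d(c,f_i)\le d(c,o_i)+2d(\ell_i^\star,o_i)$ for every $c\in C^*_i$; routing $\mu^\star$ through $f_i$ therefore costs at most $\opt+2\sum_i\omega(C^*_i)d(\ell_i^\star,o_i)\le(3+2\epsilon)\,\opt$, and rescaling $\epsilon$ yields the $(3+\epsilon)$ bound. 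For \OPGWCkMeansDis, the weak squared-triangle inequality $(a+b)^2\le(1+\epsilon)a^2+(1+\epsilon^{-1})b^2$ turns the same bookkeeping into a $(9+\epsilon)$ factor.

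The main obstacle I anticipate is validating the witness assignment against the capacities of $S$ under the mapping $M$: although $\mu^\star$ is feasible for $O$, rerouting to $f_i\ne o_i$ may overload $f_i$ when $\zeta(f_i)<\zeta(o_i)$, or overload some original $f^\star\in F$ through its copies. The capacity-discretisation step above addresses the first issue by forcing $\zeta(f_i)\ge(1-\epsilon)\zeta(o_i)$, so the witness can be uniformly scaled by $(1+\bigO(\epsilon))$ while incurring only a $(1+\bigO(\epsilon))$ factor in cost. For the second issue, an additional enumeration over the $\bigO(k)$ possible multiplicity assignments per original facility---guided by the feasible constraint patterns from Lemma~\ref{lemma:feasiblecp}---keeps the overall count within the stated $(\bigO(k\epsilon^{-1}\log n))^{\bigO(k)}$ bound, completing the proof.
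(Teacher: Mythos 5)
Your overall architecture (guess leaders, build small candidate pools per group, enumerate $k$-tuples, verify by a transportation LP, bound the witness via the triangle inequality) is the same leader-guessing route the paper takes, and your $(3+2\epsilon)$ accounting for the witness tuple is sound. However, two of your steps have genuine gaps. First, the leader-\emph{sampling} step fails: drawing $N=\Theta(k\epsilon^{-1}\log n)$ clients with probability proportional to $\omega$ only hits cluster $C^*_i$ at all with probability $\omega(C^*_i)/\omega(W)$ per draw, and this ratio can be $1/\mathrm{poly}(n)$ or smaller (e.g., a single low-weight coreset point forming its own expensive cluster). Your Markov argument is conditioned on landing in $C^*_i$, but nothing guarantees you land there, so the union bound over the $k$ clusters does not go through and the algorithm can miss the leader of a cluster that carries most of the cost. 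The paper avoids this entirely by brute-force enumerating all $|W|^k$ leader multisets from the coreset, which fits the time budget because $|W|=\mathrm{poly}(k,\epsilon^{-1},\log n)$; you should do the same (your candidate-pool sizes then remain $(\bigO(k\epsilon^{-1}\log n))^{\bigO(1)}$, so this is a drop-in fix).

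The second gap is more substantive: your capacity-threshold grid leaves the witness with $\zeta(f_i)\ge(1-\Theta(\epsilon))\zeta(o_i)$ but possibly $\zeta(f_i)<\zeta(o_i)$, and ``uniformly scaling the witness by $(1+\bigO(\epsilon))$'' is not a valid repair under hard capacities. Scaling the flow into $f_i$ down by $(1-\epsilon)$ leaves an $\epsilon$-fraction of client mass that must still be assigned somewhere; if the optimal solution is capacity-tight there is no spare capacity anywhere and the scaled witness is simply infeasible, and even when spare capacity exists the displaced mass may have to travel arbitrarily far, so no $(1+\bigO(\epsilon))$ cost bound follows. The paper dodges this by discretizing the \emph{distance} rather than the capacity: it guesses the radius $r_i^*$ (up to a $(1+\epsilon)$ factor, costing only $(1+\epsilon)$ in the distance bound) and then takes the \emph{maximum-capacity} facility of the guessed group within that radius, which guarantees $\zeta(f_i)\ge\zeta(f_i^*)$ exactly and lets the optimal assignment be rerouted verbatim. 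You should replace your capacity grid with this radius-grid/max-capacity selection (this also interacts correctly with the duplicate-tracking map $M$, since one can always fall back to the next-highest-capacity unused facility). Your LP verification and the $(9+\epsilon)$ extension for \OPGWCkMeansDis are otherwise fine.
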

\vspace{-2mm}
To this end, we build upon the leader-guessing framework introduced by \citet{cohenaddad2019tight}, where the algorithm guesses the set of leaders $L^* = \{\ell_i^*\}_{i \in [k]}$ such that each $\ell_i \in W$  is a client closest to the center $f_i^*$ in the optimal solution $S^* = \{f_i^*\}_{i \in [k]}$; $\ell_i$ called the leader of cluster $i$ in $S^*$. 
Additionally, we guess the radii $\{r_i^*\}_{i \in [k]}$ of the leaders, where $r_i$ corresponds to the (approximate) distance of $\ell_i$ to  $f_i^* \in S^*$. Given the leader $\ell_i$ and the corresponding radius $r^*_i$, we carefully choose exactly one facility from $E(\gamma_i)$ within the radius $r_i^*$ of $\ell_i^*$ since the chosen facilities need to satisfy both capacity and fair-range constraints simultaneously.  For instance, it may happen that two facilities $f_i^*, f_j^*$, belonging to different groups, are both located  at a distance $r_i^*$ from $\ell_i$. While $\ell_i$ is the leader for the cluster centered at $f_i^*$, a naive selection might choose $f_j^*$, whose capacity maybe insufficient to serve the intended cluster. 
Moreover, we must avoid choosing a duplicate facility, if the original facility was already chosen. 
Despite these challenges, we show how to handle such cases carefully and   obtain a $(3 + \epsilon)$- and $(9 + \epsilon)$-approximation for \OPGWCkMedDis and \OPGWCkMeansDis, respectively.

For the running time, note that the construction of~\cite{cohenaddad2019on} has
the coreset sizes of $|W| = \bigO(k^2 \epsilon_1^{-3} \log^2 n)$ and $|W| = \bigO(k^5 \epsilon_1^{-3} \log^5 n)$ for \CFRkMed and \CFRkMeans, respectively, where $\epsilon_1$ is the guarantee parameter of the coreset. To guess the leaders, we brute-force enumerate all $k$-multisets from the coreset, which takes time $\bigO(|W|^k)$. For guessing the radius, since the distance aspect ratio of the  metric space is bounded by $n^{\bigO(1)}$, we enumerate  at most $\bigO(\epsilon_3^{-1} \log n)$ values, to consider all values such that there is a combination that is within a multiplicative factor of $(1+\epsilon_3)$ from the actual radii, for some $\epsilon_3 > 0$. 
Thus, brute-force enumeration over all $k$-multisets of radii takes  $(\bigO({\epsilon_3^{-1} \log n}))^k$  time. Combining these steps, the overall running time for Lemma~\ref{lemma:opg-fptapx} is $(\bigO(k \epsilon^{-1} \log n))^{\bigO(k)} \cdot n^{\bigO(1)}$, by choosing $\epsilon_1,\epsilon_2,\epsilon_3$ appropriately.

To recap, towards proving Theorem~\ref{thm:fptapx},  we enumerate all constraint patterns using Lemma~\ref{lemma:feasiblecp}, to identify feasible ones, and transform each of them into an instance of \OPGWCkMedDis. Then, using Lemma~\ref{lemma:opg-fptapx}, we find an approximate solution to each of them and select the solution corresponding to the instance with the minimum cost. Since optimal solution lies in one of them, this yields a $(3 + \epsilon)$- and $(9 + \epsilon)$-approximation for \CFRkMed and \CFRkMeans, respectively. For the running time, note that, we generate at most ${2}^{\bigO(tk)}$ instances \OPGWCkMedDis. The overall running time is $(\bigO(2^t k \epsilon^{-1} \log n))^{\bigO(k)} \cdot n^{\bigO(1)}$. When $t$ is constant, the running time is  $(\bigO(k \epsilon^{-1} \log n))^{\bigO(k)} \cdot n^{\bigO(1)}$.

\section{Conclusions and Discussion}
\label{sec:discussion}

In this section, we present our conclusions, outline the limitations of our work, and discuss the broader impact of our research.

\xhdr{Conclusions}
In this paper, we provide a comprehensive analysis of the computational complexity of capacitated fair-range clustering with intersecting groups, focusing on its inapproximability. Most notably, assuming \gapeth, we show that no $n^{o(k)}$-time algorithm can approximate the problem to any non-trivial factor, even when feasible solutions can be found in polynomial time, in contrast to the hardness result of~\cite{thejaswi2021diversity, thejaswi2024diversity}. This effectively make brute-force enumeration over all $k$-tuples of facilities---running in $n^{\bigO(k)}$-time---the only viable approach for approximation. 
On a positive note, we identify tractable settings where design of approximation algorithms is possible. When the number of groups $t$ is constant, we present polynomial-time $\bigO(\log k)$- and $\bigO(\log^2 k)$-approximation algorithms for the $k$-median and $k$-means objectives, respectively. We also give $(3 + \epsilon)$- and $(9 + \epsilon)$-approximation algorithms that run in $\fpt(k)$-time for these objectives. These approximation factors match the best-known guarantees for their respective unfair counterparts.

\xhdr{Limitations}
In theory, coreset construction is expected to introduce only a small $\epsilon$-factor of distortion in distances. However, in practice, building smaller-sized coresets that enable us to do brute-force enumeration requires a larger $\epsilon$, resulting to higher approximation factors. As noted in prior work~\cite{thejaswi2022clustering}, the proposed $\fpt(k)$-algorithms may not scale to large problem instances, given that the exponential factors are large. In contrast, we believe that our polynomial-time approximation algorithms are more practical and easier to implement.

\xhdr{Broader impact}
Our work focuses on fair clustering, offering theoretical insights into the challenges of designing (approximation) algorithms that support design of responsible algorithmic decision-support systems. While our algorithms provide theoretical guarantees under a formal notion of fairness, this does not automatically justify indiscriminate application. The definition of demographic groups and fairness metrics plays a crucial role in how fairness is realized in practice. Accordingly, we emphasize that our contributions are theoretical, and due caution is necessary when applying these methods in real-world settings.

\section*{Acknowledgments}
Suhas Thejaswi is supported by the European Research Council (ERC) under the European Union'{}s Horizon $2020$ research and innovation program ($945719$).

\newpage
{ 
\small
\bibliographystyle{unsrtnat}
\bibliography{references}
}

\clearpage
\newpage

\appendix

\newpage
\newpage
\section{On the Hardness of Approximation of (Capacitated) Fair-Range Clustering}
\label{app:hardness}

We present our reduction for uncapacitated \FRkMed. Since the reduction is independent of the clustering objective, the inapproximability result also applies to \FRkMeans. As capacitated variants generalize the uncapacitated case, the hardness extends to them as well.

\begin{definition}[\threesat]
An instance $(\phi,X)$ of the \threesat problem is defined on Boolean formula $\phi = C_1 \wedge C_2 \wedge \dots \wedge C_m$ consisting of $m$ clauses, where each clause $C_i =  (\ell_{i,1} \vee \ell_{i,2} \vee \ell_{i,3})$ is a disjunction of exactly three literals $\ell_{i,j} \in \{x_1, \xbar_1, \dots,x_{n'}, \xbar_{n'}\}$ over a set of variables $X=\{x_1,\dots,x_{n'}\}$. The goal is to decide whether there exists an assignment to the variables in $X$ that evaluates $\phi$ to true.
\end{definition}

We use the following hardness result for \threesat from \citet{hastad2001some} in our inapproximability proofs.

\begin{theorem}[\citet{hastad2001some}]\label{thm:3satnphard}
    For every $\epsilon>0$, it is \np-hard to decide if a given \threesat formula $\phi$ has a satisfying assignment or all assignments satisfy $<7/8+\epsilon$ fraction of clauses.
\end{theorem}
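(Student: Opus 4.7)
The plan is to derive the theorem by constructing, for every $\epsilon>0$, a probabilistically checkable proof (PCP) for \np with the following exact parameters: three queries into a binary proof, perfect completeness $1$, soundness at most $7/8 + \epsilon$, and acceptance predicate equal to a disjunction of three (possibly negated) queried bits. Such a PCP immediately yields the desired gap: enumerate all random strings $r$ of the verifier, write down the $3$-literal clause it would check on the corresponding oracle bits, and take $\phi$ as the conjunction of these clauses. Perfect completeness makes $\phi$ satisfiable on yes-instances, while soundness bounds the maximum fraction of clauses satisfiable on no-instances by $7/8 + \epsilon$, establishing the stated hardness for \threesat.

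The starting point is the \np-hardness of gap Label Cover, which follows from the PCP theorem combined with Raz's parallel repetition theorem: for every $\eta>0$, there is a polynomial-time reduction from Boolean satisfiability to bipartite projection constraint instances $L=(U,V,E,\{\pi_e:\Sigma_U\to\Sigma_V\})$ that are either fully satisfiable or admit no labeling satisfying more than an $\eta$ fraction of the edges. I would take $\eta$ small, chosen as a function of $\epsilon$, and use this as a black box.

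From $L$ the PCP is built via H{\aa}stad's long-code framework. The proof contains, for each $u\in U$ and $v\in V$, the long code $A_u:\{-1,+1\}^{\Sigma_U}\to\{-1,+1\}$ and $B_v:\{-1,+1\}^{\Sigma_V}\to\{-1,+1\}$; for an honest proof these should be the dictator functions encoding a purported labeling $\sigma$. The verifier picks a random edge $(u,v)$, samples $x$ uniformly from $\{-1,+1\}^{\Sigma_U}$, sets $y\in\{-1,+1\}^{\Sigma_V}$ to a function of $x$ determined by $\pi_e$, samples a noise string $z$ from a carefully tuned product distribution, queries $A_u(x)$, $B_v(y)$, and $B_v(y\cdot z)$, and accepts iff a fixed OR of three literals over these bits holds. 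Folding the proofs so that $f(-x)=-f(x)$ lets the verifier flip any subset of queried bits on the fly, which is what allows us to cast the acceptance predicate as a genuine disjunction of literals rather than a parity. The soundness analysis then expands $A_u$ and $B_v$ in the Fourier--Walsh basis, rewrites the acceptance probability as a character sum using the product structure of the noise, and applies the Bonami--Beckner hypercontractive inequality together with Cauchy--Schwarz to show that any proof accepted with probability exceeding $7/8+\epsilon$ must exhibit non-negligible correlation concentrated on low-degree Fourier coefficients; a randomized decoding step extracts labels for $U$ and $V$ from the top Fourier weights of the $A_u$ and $B_v$ so that the induced labeling satisfies strictly more than $\eta$ fraction of the edges of $L$, contradicting hardness of Label Cover once $\eta$ is chosen small enough in terms of $\epsilon$.

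The main obstacle is achieving perfect completeness (probability exactly $1$, not $1-\delta$) together with the tight $7/8+\epsilon$ soundness. Perfect completeness forces the noise distribution on $(x,y,z)$ to place zero weight on triples that a dictator proof would reject, requiring a delicate algebraic cancellation in the design of the three-bit test; this cancellation must simultaneously be compatible with the clean Fourier expansion exploited in the soundness step, and it is exactly this tension that makes H{\aa}stad's argument here considerably subtler than the analogous MAX-3LIN inapproximability result, where imperfect completeness is unavoidable.
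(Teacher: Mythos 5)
This statement is not proved in the paper at all: it is imported verbatim from H{\aa}stad (2001) and used as a black box in the reductions of Appendix~A, so there is no in-paper argument to compare against. Your outline is a faithful high-level summary of H{\aa}stad's actual proof --- Label Cover hardness via the PCP theorem plus parallel repetition, long-code encodings with folding, a three-query test whose predicate is an OR of literals, Fourier-analytic soundness with a decoding step back to Label Cover --- and you correctly flag the genuinely delicate point, namely reconciling perfect completeness with the tight $7/8+\epsilon$ soundness, which is what distinguishes this theorem from the easier imperfect-completeness version obtained by gadget reduction from MAX-3LIN. As a proof proposal it identifies the right route; of course, turning it into a complete proof would require the full technical machinery of H{\aa}stad's paper (the precise non-uniform noise distribution, the conditioning that preserves completeness, and the hypercontractivity-based bound on the relevant character sums), none of which the present paper needs to reproduce since it only invokes the theorem as a starting point for its reductions.
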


To formalize our hardness results, we define a subset of fair-range $k$-median (and $k$-means) problem---denoted \FRkMedO (and \FRkMeansO)---as instances of \FRkMed (and \FRkMeans) that admit a polynomial-time algorithm for finding feasible solutions. We establish our hardness of approximation results for this variant. Recall that a solution is feasible if it satisfies the fair-range constraints.

Next, we present polynomial-time inapproximability results for \FRkMedO in Section~\ref{ss:hard:np}, followed by parameterized inapproximability with respect to $k$ in Section~\ref{ss:hard:fpt}.

\subsection{Polynomial time inapproximability}\label{ss:hard:np}
In this subsection, we prove the following hardness result.

\nphardpoly*

Note that the trivial algorithm for \FRkMedO (or \FRkMeansO) that returns any feasible solution obtained from the oracle is a factor $\Delta$ (or $\Delta^2$) approximation, where $\Delta$ is the distance aspect ratio of the input instance.%
$^{\ref{foot:aspect-ratio}}$
%
Towards proving~\Cref{thm:hard:np1}, we show the following stronger statement that implies that this factor is essentially our best hope. For any function $g:\mathbb{N} \rightarrow \mathbb{R}_{>0}$, we denote by $g(n)$-\FRkMedO as the problem of solving \FRkMedO on instances of size $n$ with distance aspect ratio of the metric bounded by $g(n)$.

\begin{theorem}\label{thm:hard:np2}
For every
polynomial $g:\mathbb{N} \rightarrow \mathbb{R}_{>0}$ and every constant $\epsilon>0$, it is \np-hard to approximate $g(n)$-\FRkMedO to a factor  $(1-\epsilon) \frac{g(n)-2}{16}$  on tree metrics. Furthermore, for general metrics,  it is \np-hard to approximate $g(n)$-\FRkMedO to a factor $(1-\epsilon) \frac{g(n)-2}{8}$. 

In particular, the following holds assuming $\p \neq \np$. For every polynomial $g$ and for every constant $\epsilon>0$, there is no $n^{\bigO(1)}$ time algorithm that can  decide if a given instance of $g(n)$-\FRkMedO has cost at most $k$ or every feasible solution has cost $> (1-\epsilon) \frac{g(n)-2}{8}\cdot k$.
  
  Finally, there is a trivial algorithm for $g(n)$-\FRkMedO that is a factor $g(n)$-approximation.
\end{theorem}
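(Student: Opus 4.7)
The plan is to reduce from the gap version of \threesat guaranteed by Theorem~\ref{thm:3satnphard}: for any constant $\delta>0$, it is \np-hard to decide whether a \threesat formula $\phi$ on $m'$ clauses is satisfiable or every assignment leaves at least $(1/8-\delta)m'$ clauses unsatisfied. Given such a $\phi$ and the target polynomial $g$, I would produce an instance $\Ical$ of $g(n)$-\FRkMedO controlled by a parameter $D$ tied to $g(n)$, satisfying (i) $\opt(\Ical)\le \bigO(m')$ in the yes case, and (ii) $\opt(\Ical)\ge (1/8-\delta)\,m'D$ otherwise; taking $k:=m'$ and $D=\Theta(g(n))$ will then give the claimed factor-$\Omega(g(n))$ inapproximability gap.

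The construction follows the blueprint sketched in the technical overview of Section~\ref{sec:hardness}. For each clause $C_i$ I introduce a client $c_i$ and a facility set $F_i=\{f_i^a:a\in\{0,1\}^3\}$, one per partial assignment to $C_i$'s three variables. In the tree case, the base tree is a star centred at a node $s$: each $c_i$ hangs off $s$ via an edge of weight $L$, and each $f_i^a$ off $c_i$ via an edge of weight $1$ if $a$ satisfies $C_i$ and $D$ otherwise. All combinatorial content is carried by the groups: a clause group $G_i=F_i$ with $\alpha_i=\beta_i=1$ forces exactly one facility per clause, and for every pair $(C_i,C_{i'})$ sharing a variable $X_j$ and every $a\in\{0,1\}$ an assignment group $G^{(C_i,C_{i'})}_{X_j\mapsto a}$ with $\alpha=\beta=1$---containing the $F_i$-facilities assigning $a$ to $X_j$ and the $F_{i'}$-facilities assigning $1-a$---rules out any pair of selected facilities that disagree on $X_j$. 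For the general-metric version I would drop the tree and prescribe the pairwise distances directly, taking $d(c_i,c_j)=D$ while keeping the two local client-facility distances $1$ and $D$, which avoids the detour blow-up inherent in tree routings.

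The analysis splits into three parts. \emph{Feasibility:} every truth assignment $\tau$ induces the feasible set $S_\tau=\{f_i^{\tau|_{C_i}}\}_i$, so a feasible solution can be emitted in polynomial time, certifying that $\Ical$ is an instance of \FRkMedO. \emph{Completeness:} if $\phi$ is satisfied by some $\tau^*$, each $c_i$ pays distance $1$ under $S_{\tau^*}$, so the whole cost is $\bigO(m')$. \emph{Soundness:} in any feasible $S$, the clause groups pin $|S\cap F_i|=1$ and the assignment groups enforce agreement on every shared variable; by transitivity, $S$ thus corresponds to a global assignment $\tau$, which by Theorem~\ref{thm:3satnphard} leaves $\ge (1/8-\delta)m'$ clauses unsatisfied. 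Provided the metric has been calibrated so that no $c_i$ can be served more cheaply by going to a facility outside $F_i$---requiring $L\gtrsim D/2$ in the tree case and automatic in the general case---each of those unsatisfied $c_i$ pays at least $D$, giving $\opt(\Ical)\ge (1/8-\delta)\,m'D$.

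The main obstacle will be the metric calibration that turns this $\Omega(m'D)$ cost gap into the precise factors $(1-\epsilon)(g(n)-2)/16$ (tree) and $(1-\epsilon)(g(n)-2)/8$ (general). For the general case I would define all remaining pairwise distances to lie in the window $[D-1,D+1]$, so the diameter is $D+\bigO(1)$; verifying the triangle inequality is routine, and taking $D=g(n)-2$ yields $\Delta\le g(n)$ and ratio $(1-\epsilon)(g(n)-2)/8$ after absorbing the gap from Theorem~\ref{thm:3satnphard} into the slack $\epsilon$. In the tree case, $L$ must be large enough to block rerouting through $s$, which inevitably pushes the diameter to $\Theta(D)$ beyond the local $D$-weight edges; an optimal choice of $L$ and $D$ yields $\Delta=\Theta(D)$ and ratio $(1-\epsilon)(g(n)-2)/16$, after padding $\phi$ to ensure $m'\gg D$ (which preserves the gap from Theorem~\ref{thm:3satnphard}). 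Finally, the matching trivial $g(n)$-approximation follows since any feasible solution has cost at most $|C|\cdot d_{\max}\le g(n)\cdot |C|\cdot d_{\min}\le g(n)\cdot\opt(\Ical)$.
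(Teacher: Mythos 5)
Your reduction is essentially the paper's (Theorem~\ref{thm:redthreesat}): one client per clause, one facility per partial assignment at distance $1$ or $D$, clause groups with $\alpha=\beta=1$, and pairwise assignment groups enforcing consistency, with the gap driven by the free parameter $D$; the yes/no analysis, feasibility claim, and the trivial $g(n)$-approximation are all handled the same way. You also correctly isolate the one delicate point, namely that a client whose own gadget offers only a distance-$D$ facility might be served more cheaply through the hub $s$ by a facility from another gadget.

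However, the metric calibration --- which is exactly what produces the constants $16$ and $8$ in the statement --- is not actually carried out, and as proposed it does not yield them. In the tree case you need roughly $2L+1\ge D$ to block rerouting, but then the diameter is $2D+2L\approx 3D$, so $\Delta\approx 3D$ and the no-case bound $\tfrac{1}{8}Dk$ is only about $\tfrac{\Delta}{24}k$, not $(1-\epsilon)\tfrac{\Delta-2}{16}k$. The ``optimal choice of $L$ and $D$'' you defer to must resolve a genuine tension: small $L$ keeps $\Delta=2D+O(1)$ but re-opens the rerouting shortcut, while large $L$ blocks it but inflates $\Delta$; it is not shown that the constant $16$ survives this trade-off. (The paper's appendix takes unit hub edges to get $\Delta=2D+2$ and then asserts unsatisfied clients pay $D$ --- which is precisely the rerouting issue you flag --- so your concern is legitimate, but your fix forfeits the claimed constant rather than recovering it.) In the general-metric case, placing \emph{all} remaining pairwise distances in $[D-1,D+1]$ violates the triangle inequality: two facilities of the same gadget that are both at distance $1$ from their client must be at distance at most $2$. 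The paper instead adds only the missing edges with weight $D$ and takes the shortest-path closure, which is automatically a metric, keeps intra-gadget distances small, and gives $\Delta=D$ exactly; you should do the same. Finally, the padding step ``$m'\gg D$'' is unnecessary: the yes/no gap is purely multiplicative in $D$ and does not depend on the relative magnitudes of $m'$ and $D$.
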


We prove the above theorem by showing a reduction from \threesat to $g(n)$-\FRkMedO.

\subsubsection{Reduction from \threesat to $g(n)$-\FRkMedO}
Here we show the following result.
\begin{theorem}\label{thm:redthreesat}
    Given a polynomial $g:\mathbb{N} \rightarrow \mathbb{R}_{>0}$, a constant $\epsilon>0$,  and an instance  $\phi$ of \threesat on $n'$ variable set $X=\{X_1,\dots, X_{n'}\}$ with $m$ clauses $C = \{C_1,\dots, C_m\}$, there is a $(m\,n')^{\bigO(1)}$ time algorithm that computes an instance $\Izcal$ of $g(n)$-\FRkMedO of size $n$ such that the following holds.
    \begin{enumerate}
        \item Parameters: $|C|=m, |F|=8m, n=9m+1, k=m, t\le 2n'm^2+m$
        \item (Yes case) If there is a satisfying assignment $\sigma$ to $X$ that satisfies all the clauses of $\phi$, then there is a feasible solution $S_\sigma$ to $\Izcal$ that has cost at most $k$
        \item (No case) If every assignment satisfies  $<(7/8+\epsilon)$ fraction of clauses, then every feasible solution to $\Izcal$ has cost $>(1-8\epsilon)\frac{g(n)-2}{16}k$.
    \end{enumerate}
\end{theorem}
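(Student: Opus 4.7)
The plan is to realize the reduction from the paper's technical overview as a polynomial-time computable instance $\Izcal$ whose underlying metric is a tree. Concretely, I would build a weighted tree $H$ with a Steiner vertex $s$, one client $c_i$ per clause $C_i$, and a set $F_i$ of eight facilities per clause (one per partial assignment to the three variables of $C_i$). Attach each $c_i$ to $s$ via a spoke edge of some fixed weight $w$, and each $f \in F_i$ to $c_i$ via an edge of weight $1$ if $f$'s partial assignment satisfies $C_i$ and of weight $D$ otherwise; the values $w$ and $D$ will be fixed at the end in terms of $g(n)$. Let $d$ be the shortest-path metric of $H$ and set $k = m$. The counts $|C| = m$, $|F| = 8m$, $n = 9m + 1$ are immediate. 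I then impose a clause group $G_i := F_i$ with $\alpha_i = \beta_i = 1$ for each clause, together with, for every triple $(X_j, \{C_i, C_{i'}\}, a)$ where $X_j$ appears in both $C_i$ and $C_{i'}$ and $a \in \{0,1\}$, an assignment group $G^{(C_i,C_{i'})}_{X_j \mapsto a}$ (also with $\alpha = \beta = 1$) containing the facilities in $F_i$ that assign $X_j \mapsto a$ together with those in $F_{i'}$ that assign $X_j \mapsto 1-a$. Counting variables times clause-pairs times polarities yields $t \le 2 n' m^2 + m$, and the construction is clearly $(mn')^{\bigO(1)}$ time.

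Next I would show $\Izcal \in$ \FRkMedO via an explicit polynomial-time feasibility oracle: given any Boolean assignment $\sigma : X \to \{0,1\}$, output $S_\sigma := \{ f_i^\sigma : i \in [m]\}$, where $f_i^\sigma \in F_i$ encodes the restriction of $\sigma$ to $C_i$'s variables. Every clause group is hit exactly once by construction, and for each assignment group $G^{(C_i,C_{i'})}_{X_j \mapsto a}$ exactly one of $f_i^\sigma, f_{i'}^\sigma$ lies in it, depending on whether $\sigma(X_j)$ equals $a$ or $1-a$, so $S_\sigma$ is feasible. In the yes case, if $\sigma$ satisfies $\phi$ then $d(c_i, f_i^\sigma) = 1$ for every $i$ and the assignment $\rho(c_i) := f_i^\sigma$ has cost $m = k$. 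In the no case, the crux is a \emph{decoding} lemma: by the clause groups any feasible $S$ must contain exactly one $f_i \in S \cap F_i$ per clause, and by the assignment groups any two selected facilities must agree on every shared variable (else two of them would both populate the single-element group $G^{(C_i,C_{i'})}_{X_j \mapsto a}$, violating $\beta = 1$). Hence $S$ decodes into a consistent global assignment $\sigma_S$. Applying the gap version of \threesat from~\cite{hastad2001some} to the hypothesis of the theorem, $\sigma_S$ must leave strictly more than $(\tfrac18 - \epsilon) m$ clauses unsatisfied, and for each such clause the ``local'' facility $f_i \in S \cap F_i$ lies at distance exactly $D$ from $c_i$ while every non-local facility of $S$ is strictly farther, because any alternative path in $H$ must traverse $s$. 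Summing over the unsatisfied clauses yields $\cost(S) > \tfrac{1 - 8\epsilon}{8}\, k\, D$, and plugging in the calibrated value of $D$ recovers the stated $\tfrac{(1 - 8\epsilon)(g(n) - 2)}{16}\, k$ lower bound.

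The main obstacle---and the source of the exact $(g(n) - 2)/16$ constant---is the joint calibration of $w$ and $D$ against $g(n)$. Two tight constraints must be reconciled simultaneously: first, the diameter of $H$, which is realized between two unsatisfying facilities lying in distinct clause-subtrees, must not exceed $g(n)$, so that $\Izcal$ is indeed a $g(n)$-aspect-ratio instance; second, the path from any $c_i$ to a facility outside $F_i$ (which must pass through $s$) must still cost at least $D$, otherwise an unsatisfied client could reroute through $s$ and the per-unsatisfied-clause contribution would drop below $D$, collapsing the lower bound to a constant factor. Balancing these two inequalities pins down the values of $w$ and $D$ as specific functions of $g(n)$ and thus fixes the final inapproximability constant. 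The remaining steps---verifying that $d$ is a tree metric (immediate from the construction), the parameter count, polynomial running time, and invoking~\cite{hastad2001some}---are routine.
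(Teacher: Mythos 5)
Your construction, group structure, feasibility oracle, yes-case bound, and no-case decoding argument are all the same as the paper's, and they are correct. The one place where you and the paper diverge is the spoke weight $w$ and the resulting constant, and here your proposal does not quite deliver what it promises. You correctly observe that for each unsatisfied client to pay $\Theta(D)$ one must prevent rerouting through $s$ to a distance-$1$ facility of another clause, which forces $2w+1\ge D$, i.e. $w=\Omega(D)$. But then the aspect ratio is $2w+2D\ge 3D-1$, not $2D+2$, so the per-unsatisfied-clause payment is at most about $g(n)/3$ rather than $(g(n)-2)/2$, and the calibration yields a bound of roughly $(1-8\epsilon)\frac{g(n)+1}{24}k$ rather than the stated $(1-8\epsilon)\frac{g(n)-2}{16}k$. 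The two constraints you identify are genuinely in tension, so no choice of $w$ and $D$ "recovers" the exact stated constant; the paper obtains $(g(n)-2)/16$ precisely by taking unit-weight spokes and asserting that every client in $C\setminus C'$ has its nearest center at distance $D$ --- an assertion your rerouting concern shows needs justification, since with unit spokes such a client can reach a satisfying facility of another clause at distance $3$. So your version is the more careful one, but you should either accept the weaker constant (which still suffices for every downstream use of the theorem, since only the $\Omega(g(n))$ growth matters) or explicitly restate the theorem with your constant; as written, the final sentence claiming to recover $(g(n)-2)/16$ is the one step that would fail.
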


\begin{proof}
Let $(\phi,X)$ be the given instance of \threesat. We construct an instance $\Izcal$ of \FRkMedO using $(\phi,X)$ as follows. 

\xhdr{Construction}
Let $D\ge 1$ be a fixed number.
For every clause $C_i$ of $\phi$, we create a client $c_i$, and create $8$ facilities $f^1_i,\dots,f^8_i$ in $\Izcal$, corresponding to the partial assignments to the variables of $C_i$. 
Next, we add a dummy node $s$.
Now, we construct a metric over $C \cup F \cup \{s\}$ of $\Izcal$ as follows. First, we create  a weighted bipartite graph with left partition $C$ and right partition $F$. For each $i \in [m]$, add edges between $c_i$ and $f^j_i$, for $j \in [8]$. Furthermore, if the partial assignment corresponding to $f^j_i$ satisfies the clause $C_i$, corresponding to $c_i$, assign the weight of the edge to be $1$, otherwise assign the weight to be $D$. Finally, add unit weight edges from $s$ to all clients.
Next, we create the groups in $\GG$ as follows. Specifically, we create two types of groups -- \emph{clause groups} and \emph{assignment groups}. For every clause $C_i$ of $\phi$, create a clause group $G_i$ that contain all the facilities $f^1_i,\dots, f^8_i$, and set $\alpha_i=\beta_i=1$.\footnote{In fact, we can set $\alpha_i=1$ and $\beta_i=m$, which captures the lower bound setting of~\cite{thejaswi2021diversity}. See Section~\ref{ss:lowerbound:hard} more details.}
Let $G_C=\{G_1,\dots,G_m\}$ be the set of clause groups.
Next, for every variable $X_j$, for every assignment $a \in \{0,1\}$, and for every pair of clauses $C_i$ and $C_{i'}$ containing $X_j$,  we create an assignment group $G_{X_j \mapsto a}^{(C_i,C_{i'})}$ that contains all facilities in $G_i \in G_C$ that assign $a$ to $X_j$ and all facilities in $G_{i'} \in G_C$ that assign $1-a$ to $X_j$.
We set the corresponding requirements as $\alpha_{X_j \mapsto a}^{(C_i,C_{i'})} =  \beta_{X_j \mapsto a}^{(C_i,C_{i'})}=1$.
Next, we set $k=m$. This completes the construction.
See Section~\ref{sec:hardness} for a pictorial depiction of the reduction.
\\

We first verify the parameters of the instance $\Izcal$. Since, we create client for every clause, we have $|C|=m$. For every clause, we create $8$ facilities, and hence $|F|=8m$. Additionally, we have a dummy node in the metric space, implying $n=9m+1$. Finally, we create $m$ clause groups, and at most $2n'm^2$ assignment groups, and hence $t \le m+2n'm^2$, as desired. Now, we claim that $\Izcal$ is an instance of \FRkMedO.
\begin{claim}\label{cl:assgnsol}
    For every assignment $\sigma: X \rightarrow \{0,1\}$, there is a feasible solution $S_\sigma$ to $\Izcal$. Therefore, $\Izcal$ is an instance of \FRkMedO.
\end{claim}
\begin{proof}
     Consider the solution $S_\sigma \subseteq F$ of size $k$ that contains, for every $i \in [k] = [m]$, a facility $f^j_i\in G_i, j \in [8]$ such that $f^j_i$ corresponds to the partial assignment on the variables of $C_i$ due to $\sigma$. First, we claim that $S_\sigma$ is a feasible solution to $\Izcal$. To see this, note that, for every clause group $G_i \in G_C$, we have $|S_\sigma \cap G_i|=1$, by construction. Furthermore, for a variable $X_j$, and clauses $C_i$ and $C_{i'}$ containing $X_j$, we claim that $|S_\sigma \cap G_{X_j \mapsto \sigma(X_j)}^{(C_i,C_{i'})}|=1$. To see this, let $G_{i, X_j \mapsto \sigma(X_j)} \subseteq G_i$ be the set of facilities of $G_i$ that correspond to the partial assignments to the variables of $C_i$ that assign $\sigma(X_j)$ to $X_j$. Similarly, let  $G_{i', X_j \mapsto 1-\sigma(X_j)} \subseteq G_i$ be the set of facilities of $G_i$ that correspond to the partial assignments to the variables of $C_i$ that assign $1-\sigma(X_j)$ to $X_j$. Then, note that $G_{X_j \mapsto \sigma(X_j)}^{(C_i,C_j)} = G_{i, X_j \mapsto \sigma(X_j)} \cup G_{i', X_j \mapsto 1-\sigma(X_j)}$. Now, let
$f_i\in G_i$ and $f_{i'} \in G_{i'}$ be the facility corresponding to the partial assignment to the variable of $C_i$ and $C_{i'}$, respectively, due to $\sigma$. 
First note that $f_i,f_{i'} \in S_\sigma$.
Next we have, $f_i \in G_{i, X_j \mapsto \sigma(X_j)}$, hence  $f_i \in G_{X_j \mapsto \sigma(X_j)}^{(C_i,C_j)}$. Now, observe that $|S_\sigma \cap  G_{i, X_j \mapsto \sigma(X_j)}|=1$, since $G_{i, X_j \mapsto \sigma(X_j)} \subseteq G_i$ and $|S_\sigma \cap G_i|=1$ by construction.  
However, since $|S_\sigma \cap G_{i'}|=1$ and since $G_{i', X_j \mapsto \sigma(X_j)} \subseteq G_{i'}$, we have that 
$|S_\sigma \cap  G_{i', X_j \mapsto 1-\sigma(X_j)}|=0$, as  $f_{i'} \in S_\sigma$ but $f_{i'} \notin  G_{i', X_j \mapsto 1-\sigma(X_j)}$.
Therefore, $|S_\sigma \cap G_{X_j \mapsto \sigma(X_j)}^{({C}_i,{C}_{i'})}|=1$, and hence $S_\sigma$ is a feasible solution to $\Izcal$.

\end{proof}
\begin{lemma}[Yes case]\label{lm:hard2:yes}
    If $\phi$ has a satisfying assignment, then there exists a feasible solution to $\Izcal$ with cost at most $k$.
\end{lemma}
\begin{proof}
  Suppose there is an assignment $\sigma: X \rightarrow \{0,1\}$ to $X$ such that $\phi$ is satisfiable. Then, consider the solution $S_\sigma \subseteq F$ of size $k$ obtained from Claim~\ref{cl:assgnsol}.
As $S_\sigma$ is a feasible solution, we have $|S_\sigma \cap G_i|=1$, for all $i \in [m]$. Therefore, let $f_i \in G_i$ be the facility that was picked from $G_i$ during the construction of $S_\sigma$. Since $\sigma$ satisfies clause $C_i$ of $\phi$, we have that the weight of the edge between $f_i$  and $c_i$ is $1$. Hence, the cost of $S_\sigma$ is $m=k$.  
\end{proof}

\begin{lemma}[No case]\label{lm:hard2:no}
    If every assignment to $\phi$ satisfies at most  $(7/8+\epsilon)$ fraction of clauses, then every feasible solution to $\Izcal$ has cost more than $\frac{(1-8\epsilon)D\cdot k}{8}$.
\end{lemma}
\begin{proof}
We will prove the contrapositive of the statement.
Suppose there is a feasible solution $S$ to $\Izcal$ of size $k$ with cost at most  $\frac{(1-8\epsilon)D\cdot k}{8}$. We will show an assignment $\sigma_S : X \rightarrow \{0,1\}$ to the variables of $\phi$ such that $\sigma_S$ satisfies at least $(7/8+\epsilon)$ fraction of clauses.
Since $S$ satisfies the diversity constraints on the clause groups, we have that $|S \cap G_i|=1$, for every $i \in [m]$. Let $f_i \in G_i$ be the facility in $S$, for the clause group $G_i$. Let $\sigma_i: X\vert_{C_i} \rightarrow \{0,1\}$ be the partial assignment to the variables of $C_i$ corresponding to facility $f_i \in G_i \cap S$. We claim that the partial assignments $\{\sigma_i\}_{i \in [m]}$ are consistent, i.e., there is no variable $X_j \in X$ that receives different assignments from $\sigma_{i}, \sigma_{i'}$, for some $i,i' \in [m]$. Suppose, for the contradiction, there exist $X_j \in X$, and $i,i' \in [m]$  such that $\sigma_{i}(X_j) \neq \sigma_{i'}(X_j)$. Without loss of generality, assume that  $\sigma_{i}(X_j)=1$ and  $\sigma_{i'}(X_j)=0$, and consider the assignment group $G^{(C_i,C_{i'})}_{X_j \mapsto 1}$. Then note that both $f_i, f_{i'} \in G^{(C_i,C_{i'})}_{X_j \mapsto 1}$, since $f_i \in G_i$ corresponds to $\sigma_i$ such that $\sigma_i(X_j)=1$, whereas  $f_{i'} \in G_{i'}$ corresponds to $\sigma_{i'}$ such that $\sigma_{i'}(X_j)=0$. Hence, $|S \cap G^{(C_i,C_{i'})}_{X_j \mapsto 1}|=2 > \beta^{(C_i, C_{i'})}_{X_j \mapsto 1}=1$, contradicting the fact that $S$ is a feasible solution to $\Izcal$. Therefore, the partial assignments $\{\sigma_i\}_{i \in [m]}$ are consistent. Now consider the global assignment $\sigma_S: X \rightarrow \{0,1\}$ obtained from these partial assignments.\footnote{If a variable  is not assigned by any partial assignment, we assign it an arbitrary value from $\{0,1\}$.\label{foot:unassgnvar}} The following claim says that $\sigma_S$ satisfies at least $(\frac{7}{8}+\epsilon)m$ clauses, contracting Theorem~\ref{thm:3satnphard}.
\begin{claim}\label{cl:hard2:nocase}
    $\sigma_S$ satisfies at least $(\frac{7}{8}+\epsilon)m$ clauses of $\phi$.
\end{claim}
\begin{proof}
Let $C' \subseteq C$ be the clients that have a center in $S$ at a distance $1$, and let $|C'|=\ell$. Note that the closest center for client  $c_i \in C \setminus C'$ in $S$ is at a distance $D$ since $|S \cap G_i|$=1, and the distance between $c_i$ and facilities in $G_i$ is either $1$ or $D$. Therefore, the cost of $S$ is $\sum_{c \in C'} d(c,S) + \sum_{c \in C \setminus C'} d(c,S)        = \ell\cdot 1+ (m-\ell)D$. Since, we assumed that the cost of $S$ is at most  $\frac{(1-8\epsilon)D\cdot k}{8}$, we have that $|C'| =\ell > (\frac{7}{8}+\epsilon)m$. This means that for at least $(\frac{7}{8}+\epsilon)m$ clients in $\Izcal$, there is a facility in $S$ at a distance $1$. Hence, for every such client $c_i \in C'$, the corresponding clause $C_i$ is  satisfied by the partial assignment $\sigma_i$, implying that the number of clauses satisfied by $\sigma_S$ is at least $(\frac{7}{8}+\epsilon)m$.
\end{proof}
This finishes the proof the lemma.
\end{proof}

We finish the proof of the theorem by using $g(n)=2D+2$.

\end{proof}

\subsubsection{Proofs}
\xhdr{Proof of Theorem~\ref{thm:hard:np2}}
Fix $g(n)$ and $\epsilon>0$. Let $\phi$ be an instance of \threesat obtained from  Theorem~\ref{thm:3satnphard}. Using the construction of Theorem~\ref{thm:redthreesat} on $g(n), \epsilon/8$, and $\phi$, we obtain an instance $\Izcal$ of $g(n)$-\FRkMedO in $(|\phi|^{\bigO(1)})$ such that 
\begin{itemize}
    \item If $\phi$ has a satisfying assignment, then there exists a feasible solution to $\Izcal$ with cost $k$
    \item If every assignment to $\phi$ satisfies $<(\frac{7+\epsilon}{8})$ fraction of clauses, then every feasible solution to $\Izcal$ has cost $> (1-\epsilon) \frac{g(n)-2}{16}k$.
\end{itemize}
Therefore, it is \np-hard to decide if a given instance of $g(n)$-\FRkMedO has cost at most $k$ or $> (1-\epsilon) \frac{g(n)-2}{16}k$. Finally, observe that $\Izcal$ is defined on a tree metric (in fact, a depth $2$ rooted tree with root $s$).

For general metrics, we obtain slightly better constants in the lower bound. The idea is that, in the construction of Theorem~\ref{thm:redthreesat}, instead of adding the dummy vertex $S$, we add the missing edges on the graph on $C \cup F$ with weight $D$. Therefore, the distance aspect ratio $g(n)=D$, and hence the bound follows from Lemma~\ref{lm:hard2:no}.

Finally, for the upper bound, consider an instance $\Izcal$ of $g(n)$-\FRkMedO of size $n$, for some polynomial $g(n)$ with $\opt$ as the optimal cost. 
Let $d_{\max}$ and $d_{\min}$ be the largest and the smallest distances in the metric $d$
of $\Izcal$, respectively. Note that $g(n)=\frac{d_{\max}}{d_{\min}}$.
Let $S$ be a feasible solution obtained from the oracle $\mathcal{O}$ in $(n)^\bigO(1)$ time. Then, we claim that $S$ is a factor $g(n)$-approximate solution to $\Izcal$. This follows since, $S$ is a feasible solution with cost
\[
=\sum_{c \in C} d(c,S) \le n \cdot d_{\max}\le g(n)\cdot \opt,
\]
since $\opt \ge n \cdot d_{min}$. 

\xhdr{Proof of Theorem~\ref{thm:hard:np1}}
Suppose there is an algorithm $\mathcal{A}$ that, given an instance $\Izcal$ of \FRkMedO of size $n$ outputs a feasible solution $S_\Izcal$ in $(n)^{\bigO(1)}$ time with cost at most $p(n) \cdot \opt(\Izcal)$,  for some polynomial function $p$, where $\opt(\Izcal)$ is the optimal cost of $\Izcal$. 
Let $\Izcal$ be the hard instance of $g(n)$-\FRkMedO on tree metric obtained from Theorem~\ref{thm:hard:np1} with $g(n)=32 p(n)+2$ and $\epsilon=1/2$.
We will use $(n)^{\bigO(1)}$-time algorithm $\mathcal{A}$ to construct a polynomial time algorithm $\mathcal{B}$ that decides if  (Yes) there is a feasible solution to $\Izcal$ with cost at most $k$ or (No) every feasible solution to $\Izcal$ has cost $> (1-\epsilon) \frac{g(n)-2}{16}\cdot k = \frac{g(n)-2}{32}\cdot k$, contradicting the assumption $\p \neq \np$. 
Our algorithm $\mathcal{B}$ first computes a feasible solution $S$ to $\Izcal$ using $\mathcal{A}$. Then,  $\mathcal{B}$ says Yes if the cost of $S$ is at most  $p(n)\cdot k$, and No otherwise.
To see that $\mathcal{B}$ correctly decides on $\Izcal$, note that the cost of $S$ is at most $p(n)\cdot \opt(\Izcal)$. If $\opt(\Izcal) \le k$, then the cost of $S$ is at most $p(n)\cdot k =(\frac{g(n)}{32}-2)\cdot k$, while if $\opt(\Izcal)> (1-\epsilon) \frac{g(n)-2}{16}\cdot k = \frac{g(n)-2}{32}\cdot k=p(n)\cdot k$, finishing the proof.

\subsection{Parameterized inapproximability}\label{ss:hard:fpt}
In this section, we strengthen Theorem~\ref{thm:hard:np1} to rule out $n^{o(k)}$ time algorithms for obtaining the corresponding guarantee. Our lower bound is based on the following assumption, called \gapeth.

\begin{hypothesis}[(Randomized) Gap Exponential Time Hypothesis (\gapeth)~\cite{dinur2016mildly,manurangsi2017abirthday}]\label{hyp: gapeth}
    There exists constants $\epsilon, \tau >0$ such that no randomized algorithm when given an instance $\phi$ of \threesat on $n'$ variables and $O(n')$ clauses  can distinguish the following cases correctly with probability $2/3$ in time $O(2^{\tau n'})$.
    \begin{itemize}
        \item there exists an assignment for $\phi$ that satisfies all the clauses
        \item every assignment satisfies $< (1-\epsilon)$ fraction of clauses in $\phi$.
    \end{itemize}
\end{hypothesis}

In particular, \gapeth implies the following statement.
\begin{theorem}\label{thm:gapeth}
    Assuming \gapeth, there exist $\epsilon>0$ such that there is no $2^{o(n')}$-time algorithm that given an instance $\phi$ of \threesat on $n'$ variables and $O(n')$ clauses can decide correctly with probability $2/3$ if there exists an assignment for $\phi$ that satisfies all the clauses or every assignment satisfies $< (1-\epsilon)$ fraction of clauses in $\phi$.
\end{theorem}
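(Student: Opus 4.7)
The statement is essentially a standard unpacking of the \gapeth hypothesis, passing from the specific constants $(\epsilon,\tau)$ in Hypothesis~\ref{hyp: gapeth} to an $\epsilon$-gap version ruling out the whole class of $2^{o(n')}$-time algorithms. My plan is to reuse exactly the $\epsilon>0$ that is guaranteed to exist by \gapeth and to derive a contradiction by a one-line padding/limit argument on the running time.

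More precisely, fix the constants $\epsilon,\tau>0$ promised by Hypothesis~\ref{hyp: gapeth}. I will claim that the theorem holds for this $\epsilon$. Suppose toward a contradiction that there is a randomized algorithm $\mathcal{A}$ that, in time $T(n')=2^{o(n')}$, decides the $\epsilon$-gap \threesat problem with success probability $\geq 2/3$. By the definition of $o(\cdot)$, for every constant $c>0$ there exists $N_c$ such that $T(n')\leq 2^{c n'}$ for all $n'\geq N_c$. Applying this with $c=\tau$ yields $T(n')\leq 2^{\tau n'}$ for all sufficiently large $n'$, i.e., $T(n')=O(2^{\tau n'})$. For the finitely many smaller instances one can hard-code the correct answer in constant time, preserving the $O(2^{\tau n'})$ bound. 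Thus $\mathcal{A}$ is a randomized $O(2^{\tau n'})$-time algorithm that distinguishes the two cases of the $\epsilon$-gap \threesat problem with probability $2/3$, which directly contradicts Hypothesis~\ref{hyp: gapeth}.

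There is no real technical obstacle here: the only bookkeeping item is that \gapeth, as stated, restricts attention to \threesat instances with $m=O(n')$ clauses, and the contrapositive theorem must be formulated on the same class of instances, which it is. The choice of success probability $2/3$ is also exactly the one in the hypothesis, so no probability amplification is needed. Hence the proof reduces to the routine observation that $2^{o(n')}\subseteq O(2^{\tau n'})$ for every fixed $\tau>0$ (and in particular for the $\tau$ from \gapeth), which yields the claimed lower bound.
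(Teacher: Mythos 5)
Your proof is correct and matches the paper's treatment: the paper states Theorem~\ref{thm:gapeth} as an immediate consequence of Hypothesis~\ref{hyp: gapeth}, relying on exactly the routine observation that a $2^{o(n')}$-time algorithm would run in time $O(2^{\tau n'})$ for all sufficiently large $n'$, contradicting the hypothesis. Your added bookkeeping (hard-coding the finitely many small instances, noting that the clause bound and success probability carry over unchanged) is exactly the right way to make this explicit.
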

We show the following hardness results.

\enuhard*

Towards proving this, we show the following reduction which is similar to Theorem~\ref{thm:redthreesat}.

\begin{theorem}\label{thm:redgapthreesat}
    Given a polynomial $p:\mathbb{N} \rightarrow \mathbb{R}_{\ge 1}$, a constant $1\ge \epsilon>0$, an integer $\kappa \in \mathbb{Z}_{+}$, and an instance  $\phi$ of \threesat on $n'$ variable set $X=\{X_1,\dots, X_{n'}\}$ with $m=O(n')$ clauses $C = \{C_1,\dots, C_m\}$, there is a $2^{O(n'/k)}(n')^{\bigO(1)}$-time algorithm that computes an instance $\Izcal$ of $g(n)$-\FRkMedO of size $n$, such that the following holds.
    \begin{enumerate}
        \item Parameters: $|C|=\kappa, |F|=\kappa \cdot 2^{O(n'/\kappa)}, n=k(2^{O(n'/\kappa)}+1)+1, k=\kappa, t\le \kappa+2n'\kappa^2$
        \item (Yes case) If there is a satisfying assignment $\sigma$ to $X$ that satisfies all the clauses of $\phi$, then there is a feasible solution $S_\sigma$ to $\Izcal$ that has cost at most $k$
        \item (No case) If every assignment satisfies  $<(1-\epsilon)$ fraction of clauses, then every feasible solution to $\Izcal$ has cost $> p(n)\cdot k$.
    \end{enumerate}
\end{theorem}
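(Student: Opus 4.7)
The plan is to parameterize the reduction of Theorem~\ref{thm:redthreesat} by packing the $m=O(n')$ clauses of $\phi$ into only $\kappa$ blocks, so that the number of clients---and hence of centers---drops from $m$ to $\kappa$ while fair-range constraints continue to force consistency across blocks. I would partition $\{C_1,\dots,C_m\}$ into blocks $\tilde C_1,\dots,\tilde C_\kappa$ of size $O(m/\kappa)=O(n'/\kappa)$ and, for each block $\tilde C_j$, let $X^{(j)}\subseteq X$ be the variables appearing in $\tilde C_j$, so $|X^{(j)}|\le 3|\tilde C_j|=O(n'/\kappa)$. For each block I create one client $c_j$ and, for every partial assignment $\sigma_j\colon X^{(j)}\to\{0,1\}$, a facility $f_{j,\sigma_j}$; this yields $|C|=\kappa$ and $|F|=\kappa\cdot 2^{O(n'/\kappa)}$ in $2^{O(n'/\kappa)}(n')^{O(1)}$ time. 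The metric is the shortest-path metric on a depth-$2$ tree: a dummy root $s$ is joined to each $c_j$ by a weight-$1$ edge, and $c_j$ is connected to $f_{j,\sigma_j}$ by an edge of weight $1$ if $\sigma_j$ satisfies every clause of $\tilde C_j$ and of weight $D$ otherwise, where $D$ is fixed below.

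The fair-range structure mirrors Theorem~\ref{thm:redthreesat}: for each $j\in[\kappa]$ a block group $G_j$ containing all facilities of block $j$ with $\alpha_j=\beta_j=1$, forcing a unique partial assignment per block; and for every pair $j\ne j'$, every variable $X_v\in X^{(j)}\cap X^{(j')}$, and every $a\in\{0,1\}$, an assignment group $G^{(j,j')}_{X_v\mapsto a}$ consisting of those facilities in block $j$ that assign $a$ to $X_v$ together with those in block $j'$ that assign $1-a$, with $\alpha=\beta=1$. This produces at most $\kappa+2n'\kappa^2$ groups and $k=\kappa$, matching the claimed parameters, and exactly as in Claim~\ref{cl:assgnsol}, for every global assignment $\sigma$ the set $S_\sigma=\{f_{j,\sigma|_{X^{(j)}}}\}_{j\in[\kappa]}$ is feasible, so a feasible solution can be certified in polynomial time and $\Ical$ lies in $g(n)$-\FRkMedO.

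For the Yes case, a satisfying $\sigma^\star$ yields $S_{\sigma^\star}$ with every $c_j$ served by its selected facility at distance $1$ (its block's clauses are all satisfied by $\sigma^\star|_{X^{(j)}}$), giving $\cost(S_{\sigma^\star})=\kappa=k$. For the No case, I would argue as in Lemma~\ref{lm:hard2:no}: any feasible $S$ picks exactly one facility per block, and the assignment groups force the selected partial assignments to agree on every shared variable, hence extend to a global assignment $\sigma_S$. Since $\sigma_S$ falsifies more than $\epsilon m$ clauses and each block holds $O(m/\kappa)$ clauses, by pigeonhole at least $\max(1,\lceil\epsilon\kappa\rceil)$ blocks have their selected facility at distance $D$, whence $\cost(S)\ge\max(1,\lceil\epsilon\kappa\rceil)\cdot D$. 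Choosing $D=\Theta(p(n)\kappa/\epsilon)$ then gives $\cost(S)>p(n)\kappa=p(n)k$, while the aspect ratio $\Delta\le 2D+2$ remains polynomial in $n$, so $g(n)=O(D)$ is a suitable polynomial.

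The main obstacle I expect is handling the small-$\kappa$ regime, where $\epsilon\kappa<1$ and pigeonhole only guarantees a single bad block: the naive bound $\epsilon\kappa\cdot D$ on the cost collapses, so one bad block must by itself exceed $p(n)\kappa$. Taking $D=\Theta(p(n)\kappa/\epsilon)$ absorbs both regimes uniformly at the cost of an $O(\kappa)$-factor blow-up in the aspect ratio, which remains polynomial. Everything else---verifying that the metric is a tree, that every variable of $\phi$ lies in some block (after discarding variables absent from $\phi$), and that the group count satisfies $t\le\kappa+2n'\kappa^2$---then follows directly from the Theorem~\ref{thm:redthreesat} template.
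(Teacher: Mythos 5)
Your proposal is correct and follows essentially the same route as the paper: partition the clauses into $\kappa$ super-clauses, one client and $2^{O(n'/\kappa)}$ partial-assignment facilities per block on a depth-$2$ tree, block groups and pairwise assignment groups with $\alpha=\beta=1$ to force consistency, and a large weight $D$ on unsatisfying facilities. The only (immaterial) differences are that you argue the No case directly via a pigeonhole count of bad blocks rather than by the paper's contrapositive, and you take $D=\Theta(p(n)\kappa/\epsilon)$ where the paper's $D=p(n)/\epsilon$ already suffices since the number of bad blocks exceeds $\epsilon\kappa$, giving cost $>\epsilon\kappa D=p(n)k$.
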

\begin{proof}
Let $(\phi,X)$ be the given instance of \threesat. We construct an instance $\Izcal$ of \FRkMedO using $(\phi,X)$ as follows. 

\xhdr{Construction}
Without loss of generality, we assume that $k$ divides $m$, and let  $m/k=\ell\in \mathbb{Z}_+$. We start by partitioning the clauses $C_1,\dots,C_m$ of $\phi$ into $\kappa$ parts $\tilde{C}_1,\dots, \tilde{C}_\kappa$ arbitrarily, where each part contains $\ell=m/k$ clauses of $\phi$. We call each part $\tilde{C}_i, i\in [\ell]$, a \emph{super clause}.
Let $D \ge 1$ be a fixed real, which will be decided later. 
Let $\rho_i$ be the number of partial assignments to the variables of clauses in $\tilde{C}_i$. Then, note that $\rho_i \le 2^{O(m/\kappa)}=2^{O(\ell)}$, since $\tilde{C}_i$ contains $\ell$ clauses of $\phi$.
For every super clause $\tilde{C}_i$ of $\phi$, we create a client $c_i$, and create ${\rho_i}=2^{O(m/k)}$ facilities $f^1_i,\dots,f^{\rho_i}_i$ in $\Izcal$, corresponding to the partial assignments to the variables of clauses in $\tilde{C}_i$. Next, we add a dummy node $s$.
Now, we construct a metric over $C \cup F \cup \{s\}$ of $\Izcal$ as follows. First, we create a weighted bipartite graph with left partition $C$ and right partition $F$. For each $i \in [\kappa]$, add edges between $c_i$ and $f^j_i$, for $j \in [\rho_i]$. Furthermore, if the partial assignment corresponding to $f^j_i$ satisfies all the clauses in $\tilde{C}_i$, corresponding to $c_i$, assign the weight of the edge to be $1$, otherwise assign the weight to be $D$. Finally, add unit weight edges from $s$ to all clients.
Next, we create the groups in $\GG$ as follows. Specifically, we create two types of groups -- \emph{\sclause groups} and \emph{assignment groups}. For every \sclause $\tilde{C}_i$ of $\phi$, create a \sclause group $G_i$ that contain all the facilities $f^1_i,\dots, f^{\rho_i}_i$, and set $\alpha_i=\beta_i=1$.\footnote{In fact, we can set $\alpha_i=1$ and $\beta_i=m$, which captures the lower bound setting of~\cite{thejaswi2021diversity}. See Section~\ref{ss:lowerbound:hard} more details.}
Let $G_{\tilde{C}}=\{G_1,\dots,G_\kappa\}$ be the set of \sclause groups.
For a \sclause $\tilde{C}_i$, we say that $\tilde{C}_i$ \emph{contains} variable $X_j \in X$ if there is a clause in $\tilde{C}_i$ that contains $X_j$.
Now, for every variable $X_j$, for every assignment $a \in \{0,1\}$, and for every pair of \sclauses $\tilde{C}_i$ and $\tilde{C}_{i'}$ that both contain $X_j$,  we create an assignment group $G_{X_j \mapsto a}^{(\tilde{C}_i,\tilde{C}_{i'})}$ that contains all facilities in $G_{i} \in G_{\tilde{C}}$ that assign $a$ to $X_j$ and all facilities in $G_{{i'}} \in G_{\tilde{C}}$ that assign $1-a$ to $X_j$.
We set the corresponding requirements as $\alpha_{X_j \mapsto a}^{(\tilde{C}_i,\tilde{C}_{i'})} =  \beta_{X_j \mapsto a}^{(\tilde{C}_i,\tilde{C}_{i'})}=1$.
Finally, we set $k=\kappa$. This completes the construction.\\

We first verify the parameters of the instance $\Izcal$. Since, we create a client for every \sclause, we have $|C|=\kappa=k$. For every \sclause $\tilde{C}_i$, we create $\rho_i$ facilities, and hence $|F|=\sum_{i \in [\kappa]} \rho_i = k\cdot 2^{O(m/\kappa)}$. Additionally, we have a dummy node in the metric space, implying $n=\kappa\cdot (2^{O(m/k)}+1)+1$. Finally, we create $\kappa$ clause groups, and at most $2n'\kappa^2$ assignment groups, and hence $t \le \kappa+2n'\kappa^2$, as desired. Now, we claim that $\Izcal$ is an instance of \FRkMedO.
\begin{claim}\label{cl:assgnsoleth}
    For every assignment $\sigma: X \rightarrow \{0,1\}$, there is a feasible solution $S_\sigma$ to $\Izcal$. Therefore, $\Izcal$ is an instance of \FRkMedO.
\end{claim}
\begin{proof}
     Consider the solution $S_\sigma \subseteq F$ of size $k$ that contains, for every $i \in [k]$, a facility $f^j_i\in G_i, j \in [\rho_i]$ such that $f^j_i$ corresponds to the partial assignment on the variables of the clauses in $\tilde{C}_i$ due to $\sigma$. We claim that $S_\sigma$ is a feasible solution to $\Izcal$. To see this, note that, for every \sclause group $G_i \in G_{\tilde{C}}$, we have $|S_\sigma \cap G_i|=1$, by construction. Furthermore, for a variable $X_j$, and \sclauses $\tilde{C}_i$ and $\tilde{C}_{i'}$ containing $X_j$, we claim that $|S_\sigma \cap G_{X_j \mapsto \sigma(X_j)}^{(\tilde{C}_i,\tilde{C}_{i'})}|=1$. To see this, let $G_{i, X_j \mapsto \sigma(X_j)} \subseteq G_i$ be the set of facilities of $G_i$ that correspond to the partial assignments to the variables of $\tilde{C}_i$ that assign $\sigma(X_j)$ to $X_j$. Similarly, let  $G_{i', X_j \mapsto 1-\sigma(X_j)} \subseteq G_i$ be the set of facilities of $G_i$ that correspond to the partial assignments to the variables of $\tilde{C}_i$ that assign $1-\sigma(X_j)$ to $X_j$. Then, note that $G_{X_j \mapsto \sigma(X_j)}^{(\tilde{C}_i,\tilde{C}_j)} = G_{i, X_j \mapsto \sigma(X_j)} \cup G_{i', X_j \mapsto 1-\sigma(X_j)}$. Now, let
$f_i\in G_i$ and $f_{i'} \in G_{i'}$ be the facilities corresponding to the partial assignment to the variables of clauses of $\tilde{C}_i$ and $\tilde{C}_{i'}$, respectively, due to $\sigma$. 
First note that $f_i,f_{i'} \in S_\sigma$.
Next we have, $f_i \in G_{i, X_j \mapsto \sigma(X_j)}$, hence  $f_i \in G_{X_j \mapsto \sigma(X_j)}^{(\tilde{C}_i,\tilde{C}_j)}$. Now, observe that $|S_\sigma \cap  G_{i, X_j \mapsto \sigma(X_j)}|=1$, since $G_{i, X_j \mapsto \sigma(X_j)} \subseteq G_i$ and $|S_\sigma \cap G_i|=1$ by construction.  
However, since $|S_\sigma \cap G_{i'}|=1$ and since $G_{i', X_j \mapsto \sigma(X_j)} \subseteq G_{i'}$, we have that 
$|S_\sigma \cap  G_{i', X_j \mapsto 1-\sigma(X_j)}|=0$, as  $f_{i'} \in S_\sigma$ but $f_{i'} \notin  G_{i', X_j \mapsto 1-\sigma(X_j)}$.
Therefore, $|S_\sigma \cap G_{X_j \mapsto \sigma(X_j)}^{(\tilde{C}_i,\tilde{C}_{i'})}|=1$, and hence $S_\sigma$ is a feasible solution to $\Izcal$.
\end{proof}
\begin{lemma}[Yes case]\label{lm:hard2:yes}
    If $\phi$ has a satisfying assignment, then there exists a feasible solution to $\Izcal$ with cost at most $k$.
\end{lemma}
\begin{proof}
  Suppose there is an assignment $\sigma: X \rightarrow \{0,1\}$ to $X$ such that $\phi$ is satisfiable. Then, consider the solution $S_\sigma \subseteq F$ of size $k$ obtained from Claim~\ref{cl:assgnsoleth}.
  
As $S_\sigma$ is a feasible solution, we have $|S_\sigma \cap G_i|=1$, for all $i \in [m]$. Therefore, let $f_i \in G_i$ be the facility that was picked from $G_i$ during the construction of $S_\sigma$. Since $\sigma$ satisfies all the clauses in $\tilde{C}_i$ of $\phi$, we have that the weight of the edge between $f_i$  and $c_i$ is $1$. Hence, the cost of $S_\sigma$ is $k$.  
\end{proof}

\begin{lemma}[No case]\label{lm:hard2:no}
    If every assignment to $\phi$ satisfies at most  $(1-\epsilon)$ fraction of clauses, then every feasible solution to $\Izcal$ has cost more than $(1+\epsilon\cdot (D-1))k$.
\end{lemma}
\begin{proof}
We will prove the contrapositive of the statement.
Suppose there is a feasible solution $S$ to $\Izcal$ of size $k$ with cost at most  $(1+\epsilon\cdot (D-1))k$. We will show an assignment $\sigma_S : X \rightarrow \{0,1\}$ to the variables of $\phi$ such that $\sigma_S$ satisfies at least $(1-\epsilon)$ fraction of clauses.
Since $S$ satisfies the diversity constraints on the \sclause groups, we have that $|S \cap G_i|=1$, for every $i \in [k]$. Let $f_i \in G_i$ be the facility in $S$, for the \sclause group $G_i$. Let $\sigma_i: X\vert_{\tilde{C}_i} \rightarrow \{0,1\}$ be the partial assignment to the variables of the clauses in $\tilde{C}_i$ corresponding to facility $f_i \in G_i \cap S$. We claim that the partial assignments $\{\sigma_i\}_{i \in [k]}$ are consistent, i.e., there is no variable $X_j \in X$ that receives different assignments from $\sigma_{i}, \sigma_{i'}$, for some $i,i' \in [k]$. Suppose, for the contradiction, there exist $X_j \in X$, and $i \neq i' \in [k]$  such that $\sigma_{i}(X_j) \neq \sigma_{i'}(X_j)$. Without loss of generality, assume that  $\sigma_{i}(X_j)=1$ and  $\sigma_{i'}(X_j)=0$ and consider the assignment group $G^{(\tilde{C}_i,\tilde{C}_{i'})}_{X_j \mapsto 1}$. Then, note that both $f_i, f_{i'} \in G^{(\tilde{C}_i,\tilde{C}_{i'})}_{X_j \mapsto 1}$, since $f_i \in G_i$ corresponds to $\sigma_i$ such that $\sigma_i(X_j)=1$, whereas  $f_{i'} \in G_{i'}$ corresponds to $\sigma_{i'}$ such that $\sigma_{i'}(X_j)=0$. Hence, $|S \cap G^{(\tilde{C}_i,\tilde{C}_{i'})}_{X_j \mapsto 1}|\geq 2 > 1= \beta^{(\tilde{C}_i, \tilde{C}_{i'})}_{X_j \mapsto 1}$, contradicting the fact that $S$ is a feasible solution to $\Izcal$. Therefore, the partial assignments $\{\sigma_i\}_{i \in [k]}$ are consistent. Now consider the global assignment $\sigma_S: X \rightarrow \{0,1\}$ obtained from these partial assignments.$^{\ref{foot:unassgnvar}}$  The following claim says that $\sigma_S$ satisfies at least $(1-\epsilon)m$ clauses, contracting Theorem~\ref{thm:3satnphard}.
\begin{claim}\label{cl:hard2:nocase}
    $\sigma_S$ satisfies at least $(1-\epsilon)m$ clauses of $\phi$.
\end{claim}
\begin{proof}
    Let $C' \subseteq C$ be the clients that have a center in $S$ at a distance $1$, and let $|C'|=\mu$. Note that the closest center for client  $c_i \in C \setminus C'$ in $S$ is at a distance $D$. This is due to the fact that $|S \cap G_i|$=1, and the closest center to $c_i$ in $S_\sigma$ is in $G_i$, which is at a distance $D$ from $c_i$. Therefore, the cost of $S$ is $\sum_{c \in C'} d(c,S) + \sum_{c \in C \setminus C'} d(c,S)        = \mu\cdot 1+ (k-\mu)\cdot D$. Since, we assumed that the cost of $S$ is at most  $(1+\epsilon\cdot (D-1))k$, we have that $|C'| =\mu > (1-\epsilon)k$.  
    This means that for more than $(1-\epsilon)k$ clients in $\Izcal$, there is a facility in $S$ at a distance $1$. We call such a client, a \emph{good} client for $S$, the corresponding \sclause $\tilde{C}_i$ a \emph{good} \sclause for $\sigma_S$.   
    Now, note that a \sclause contains $\ell=m/k$ clauses of $\phi$. 
    Hence, all the clauses in a good \sclause $\tilde{C}_i$ for $\sigma_S$, corresponding to a good client $c_i\in C'$ for $S$, are satisfied by the partial assignment $\sigma_i$, implying that the number of clauses satisfied by $\sigma_S$ is at least $(1-\epsilon)m$.
\end{proof}
This finishes the proof the lemma.
\end{proof}

We finish the proof of the theorem by using $D=\frac{p(n)}{\epsilon}$.

\end{proof}

\xhdr{Proof of Theorem~\ref{thm:hard:enu1}}
Suppose there is an algorithm $\cal{A}$ that, given an instance $\Izcal$ \FRkMedO of size $n$ on tree metric, runs in time $f(k)n^{O(k/h(k))}$, for some non-decreasing and unbounded functions $f$ and $h$, and produces a feasible solution with cost at most $p(n)\cdot \opt(\Izcal)$, for some polynomial $p$. Then, using $\cal{A}$, we will design an algorithm $\cal{B}$ that, given an instance $\phi$ of \threesat on $n'$ variables and $O(n')$ clauses and any $\epsilon'>0$, correctly decides if $\phi$ has a satisfying assignment or every assignment satisfies $< (1-\epsilon')$ fraction of clauses in $\phi$, and runs in time $2^{o(n')}$, contradicting Theorem~\ref{thm:gapeth}.

Given a \threesat formula $\phi$ on $n'$ variables and $O(n')$ clauses, and $1 \ge \epsilon>0$, the algorithm $\cal{B}$ does the following, using the algorithm $\cal{A}$ for \FRkMedO that runs in time $f(k)n^{O(k/h(k))}$. Without loss of generality, we assume that $f(k)\ge 2^k$. Given $\phi$, let $\kappa$ be the largest integer such that $f(\kappa)\le n'$.  Note that, the value of $\kappa$ thus  computed depends on $n'$, and hence let $\kappa =q(n')$, for some non-decreasing and unbounded function $q$.
Since, $f(\kappa)\ge 2^{\kappa} \ge \kappa$, and $f(\kappa)\le n'$, we have that $\kappa \le 2\log n'$ and $q(n') \le n'$.
Given $\phi$ and  $\epsilon$, $\cal{B}$ first uses the reduction of Theorem~\ref{thm:redgapthreesat} on $\kappa=q(n')$ and polynomial $p$, to obtain an instance $\Izcal$ of \FRkMedO. Next, $\cal{B}$ runs algorithm $\cal{A}$ to obtain a feasible solution $S$ with cost  $p(n)\cdot \opt(\Izcal)$, returns Yes if cost of $S$ is at most $p(n)\cdot k$, and No otherwise. To see the correctness of $\cal{B}$ on $\phi$, note that if $\phi$ has a satisfying assignment then $\Izcal$ has a feasible solution with cost $k=\kappa$. In this case, the cost of $S$ is at most $p(n)\cdot k$. On the other hand, if every assignment satisfied $<(1-\epsilon)$ fraction of clauses of $\phi$, then every feasible solution to $\Izcal$ has cost  $> p(n)\cdot k$, and hence the cost of $S$ is  $> p(n)\cdot k$. Therefore, $\cal{B}$ correctly decides if $\phi$ has a satisfying assignment or every assignment satisfies $< (1-\epsilon')$ fraction of clauses. Finally, the running time of $\cal{B}$ is bounded by
\[
O(f(k)n^{O(k/h(k))}) \le O(n' \cdot (2k\cdot 2^{O(n'/k)})^{\frac{k}{h(k)}}) \le O(n'\cdot 2^{o(n')}),
\]
as desired, since $k=\kappa \le 2\log n'$.

\subsection{Hardness for the lower-bound only \FRkMedO (and \FRkMeansO)}\label{ss:lowerbound:hard}
In this section, we sketch the changes required in the reductions of Theorem~\ref{thm:redthreesat} and Theorem~\ref{thm:redgapthreesat}, such that these reductions construct instances with lower-bound only requirements.
Note that, in both the constructions, we create two types of groups: \emph{clause groups} and \emph{assignment groups}, and make the lower and upper bound requirements for every group to be $1$. Therefore, without particularly fixing any construction, we focus on showing how to transform the group requirements to have lower bound only requirements such that  Theorem~\ref{thm:redthreesat} and \ref{thm:redgapthreesat} remain true.

As mentioned before, every group $G_i$ constructed by Theorem~\ref{thm:redthreesat} and \ref{thm:redgapthreesat} for the instance $\cI$ of \FRkMedO has $\alpha_i=\beta_i=1$. For the lower-bound only construction, we simply drop the $\beta_i$'s. Hence, $G_i$ has a lower-bound only requirement $\alpha_i=1$.  Let us denote the obtained instance as $\cI_L$, to denote the fact that it is obtained from $\cI$ by keeping only the lower-bounds.
To argue that Theorem~\ref{thm:redthreesat} and Theorem~\ref{thm:redgapthreesat} remain true, we claim that a solution $S$ is feasible to $\cI$ if and only if $S$ is feasible to $\cI_L$.  Since we only relaxed the requirements for $\cI_L$, any feasible solution to $\cI$ is also a feasible solution to $\cI_L$, proving the forward direction. Now, for the reverse direction, consider a feasible solution $S$ to $\cI_L$. We claim that, for every group $G_i$ of $\cI$, it holds that $|S\cap G_i| =1$, and hence $S$ satisfies the upper bound and lower bound requirements for $G_i$, since $\alpha_i=\beta_i=1$. Therefore, $S$ is a feasible solution to $\cI$, finishing the claim. Towards this, suppose $G_i$ is a clause group. Then, observe that there are $k$ clause groups in $\cI_L$ which are mutually disjoint with each other. Since $|S| \leq k$, and each clause group has a lower-bound requirement of $1$, it holds that $|S \cap G_i|=1$, as required for $\cI$. Finally, suppose $G_i$ is an assignment group. Without loss of generality, suppose $G_i$ corresponds to the assignment group $G^{(C_i,C_{i'})}_{X_j \mapsto a}$, for some pair of (super) clauses $C_i$ and $C_{i'}$, and $a \in \{0,1\}$. Suppose for the contradiction, we have that $|S \cap G^{(C_i,C_{i'})}_{X_j \mapsto a}| = 2$. Recall that $G^{(C_i,C_{i'})}_{X_j \mapsto a}$ contains all the facilities of clause group $G_i$ that assign $a$ to $X_j$ and all facilities in clause group $G_{i'}$ that assign $1-a$ to $X_j$. Let us denote by $G^{C_i}_{X_j \mapsto a} \subseteq G_i$, and $G^{C_{i'}}_{X_j \mapsto 1-a} \subseteq G_{i'}$, for these facilities respectively. Therefore, $G^{(C_i,C_{i'})}_{X_j \mapsto a} = G^{C_i}_{X_j \mapsto a} \cup G^{C_{i'}}_{X_j \mapsto 1- a}$.

Since, we established that $|S \cap G_i|=|S \cap G_{i'}|=1$, it must be the case that  $|S \cap G^{C_i}_{X_j \mapsto a}| = |S \cap G^{C_{i'}}_{X_j \mapsto 1- a}| =1$, as $|S \cap G^{(C_i,C_{i'})}_{X_j \mapsto a}| = 2$. However, this means that $|S \cap G^{C_i}_{X_j \mapsto 1- a}| = |S \cap G^{C_{i'}}_{X_j \mapsto a}| =0$, again due to $|S \cap G_i|=|S \cap G_{i'}|=1$, since $G^{C_i}_{X_j \mapsto 1- a} \subseteq G_i$ and $G^{C_{i'}}_{X_j \mapsto a} \subseteq G_{i'}$. But since the assignment group $G^{(C_i,C_{i'})}_{X_j \mapsto 1-a} = G^{C_i}_{X_j \mapsto 1-a} \cup G^{C_{i'}}_{X_j \mapsto a}$, and therefore $|S \cap G^{(C_i,C_{i'})}_{X_j \mapsto 1-a}|=0$, while the lower bound requirement of $G^{(C_i,C_{i'})}_{X_j \mapsto 1-a}$ is set to $1$. This contradicts the fact that $S$ is a feasible solution to $\cI_L$. Therefore, $S$ is a feasible solution to $\cI$.

\newpage
\section{A Polynomial Time Approximation Algorithm}
\label{app:polytime}

In this section, we design a polynomial time factor $\bigO(\log k)$-approximation for \CFRkMed with constant number of groups (when $t$ is constant). We formally state this result in the following theorem.

\polyapx*

For the sake of clarity, we focus on presenting our results for \CFRkMed. Although the extension to \CFRkMeans is straightforward, we will indicate the parts of the analysis that differ for \CFRkMeans.

On a high level, our algorithm works in two phases. In phase $1$, which is described in Section~\ref{ss:treeemb}, we embed the given instance $\Ical$ of \CFRkMed in general metrics into another instance $\Ical'$ of \CFRkMed on a special metric, which we call, \emph{\met} metric, such that the optimal cost of $\cI'$ is at most a constant factor of the optimal cost of $\cI$.
In  phase $2$, described in Section~\ref{ss:apxcliquestar}, we design a polynomial time $\bigO(\log k)$-approximation for \CFRkMed on \met metrics, thus, yielding $\bigO(\log k)$-approximation for the \CFRkMed in general metrics. We note that the transformations of our algorithm only modifies the underlying metric of $\cI$, leaving everything else untouched, i.e., the instance $\cI'$ is exactly same as $\cI$, except that the underlying metric $d'$ in $\cI'$ is different than the metric $d$ of $\cI$. 

\xhdr{Preliminaries}
For an instance $\cI$ of \CFRkMed, we assume that the metric $d$ is specified by a weighted graph $H$ on a vertex set $V$ such that  $V \supseteq C \cup F$, and $d$ corresponds to the shortest path metric on $H$.
Furthermore, we say $d$ is \emph{\kmet}, if $H$ consist of a $k$-clique $K_k$ and the renaming $n-k$ vertices are connected to exactly one vertex in $K_k$, i.e., they are pendant vertices to the clique nodes.
We say that a given instance $\cI$ of fair-range clustering is defined over a tree (or \kmet) metric if $d$ is a tree (\kmet, resp.).

\subsection{Embedding general metric spaces into \kmet metric}\label{ss:treeemb}
In this section, we show the following result that allows us to embed the instace $\cI$ using  an $\eta$-approximation algorithm for $k$-median into \kmet metric such that 
the optimal cost of the embedded instance is at most $\bigO(\eta)$ times the optimal cost of $\cI$. 

\cliquestaremb*
\begin{proof}
Given an instance $\Ical$ of \CFRkMed in metric space $d$, we first obtain $S\subseteq F$ by running algorithm $\Acal$ on $\cI$. Note that, $S$ might be infeasible for $\cI$, as $\Acal$ is an $\eta$-approximation algorithm for \kMed. However, it holds that $\cost_{\cI}(S) \le \eta\cdot \opt(\cI)$, since $\opt(\cI)$ is at most the optimal cost of \kMed  on $\cI$.
Using $S$, we construct the new metric $d'$ as follows. For every $s\in S$ and for every client $c\in C$ that is assigned to $s$, add an edge $(c,s)$ with weight $d(c,S)$. 
Next, for every facility $f \in F$, add an edge $(f,s_f)$, where $s_f \in S$ is the closest facility to $f$ in $S$, with weight $d(f,s_f)$.
Finally, for every $s \neq s' \in S$, add edge $(s,s')$ with weight $d(s,s')$. Let $H'$ be the resultant graph. Then, $d'$ is defined as the shortest path metric on $H'$. Note that $d'$ is \kmet metric.

Consider an optimal solution $O$ to $\cI$.
For a client $c \in C$,  let $o_c \in O$ be the facility serving $c$ in  $O$. 
Similarly,  for $c\in C$,  let $s_c \in S$  be the facility serving $c$ in $S$, and let $s_{o_c} \in S$ be the closest facility to $o_c$ in $S$.
Then, we have the following claim.
\begin{claim}\label{cl:optind1}
For every client $c \in C$ and the corresponding facilities $o_c \in O$ and $s_c \in S$, we have that
\[ d'(c,o_c) \leq 4 \cdot d(c,s_c) + 3 \cdot d(c,o_c).\]
\end{claim}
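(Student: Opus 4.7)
\textbf{Proof plan for Claim~\ref{cl:optind1}.}

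The plan is to trace the shortest $c$-to-$o_c$ path in the constructed graph $H'$ and bound each of its segments using triangle inequality in the original metric $d$, together with the minimality properties of $s_c$ and $s_{o_c}$. By construction of $H'$, the client $c$ has exactly one incident edge, namely to $s_c$, and the facility $o_c$ has exactly one incident edge into the clique on $S$, namely to $s_{o_c}$. Hence any path from $c$ to $o_c$ must pass through both $s_c$ and $s_{o_c}$, and a natural candidate path is $c \to s_c \to s_{o_c} \to o_c$, which has length exactly
\[
d(c, s_c) + d(s_c, s_{o_c}) + d(s_{o_c}, o_c)
\]
in $H'$. This immediately gives $d'(c, o_c) \le d(c, s_c) + d(s_c, s_{o_c}) + d(s_{o_c}, o_c)$.

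Next I would bound the two non-trivial terms using the original metric $d$. For the last term, since $s_{o_c}$ is by definition the closest facility in $S$ to $o_c$, we have $d(s_{o_c}, o_c) \le d(s_c, o_c)$, and then by triangle inequality $d(s_c, o_c) \le d(s_c, c) + d(c, o_c)$, so $d(s_{o_c}, o_c) \le d(c, s_c) + d(c, o_c)$. For the middle term, two applications of the triangle inequality give
\[
d(s_c, s_{o_c}) \le d(s_c, c) + d(c, o_c) + d(o_c, s_{o_c}) \le 2\,d(c, s_c) + 2\,d(c, o_c),
\]
where in the last step I reused the bound on $d(o_c, s_{o_c})$ derived above.

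Substituting these two bounds into the path-length expression yields
\[
d'(c, o_c) \le d(c, s_c) + \bigl(2\,d(c, s_c) + 2\,d(c, o_c)\bigr) + \bigl(d(c, s_c) + d(c, o_c)\bigr) = 4\,d(c, s_c) + 3\,d(c, o_c),
\]
which is exactly the claimed inequality. I do not anticipate any real obstacle here; the only care needed is in verifying that $H'$ indeed forces the path to pass through $s_c$ and $s_{o_c}$ (so that the candidate path above is essentially the only choice), and in repeatedly invoking the fact that $s_{o_c}$ is a minimizer over $S$ of the distance to $o_c$ in $d$, rather than in $d'$.
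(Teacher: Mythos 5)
Your proposal is correct and follows essentially the same route as the paper: decompose $d'(c,o_c)$ along the path $c \to s_c \to s_{o_c} \to o_c$, bound $d(s_c,s_{o_c})$ by the triangle inequality through $c$ and $o_c$, and replace $d(o_c,s_{o_c})$ by $d(o_c,s_c)$ using the minimality of $s_{o_c}$ over $S$. The only (harmless) difference is that you write the path-length bound as an inequality rather than the paper's equality, which is if anything the more careful statement.
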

\begin{proof}
The scenario of this claim is illustrated in Figure~\ref{fig:polytime-apx}. In the metric space $d'$,  we have that,
%
\begin{align*}
d'(c,o_c) & = d(c,s_c) + d(s_c,s_{o_c}) + d(s_{o_c},o_c) \\
              & \overset{(i)}{\le} d(c,s_c) + d(s_c,c) + d(c,o_c) + d(o_c,s_{o_c}) + d(s_{o_c},o_c) \\
			  &= 2d(c,s_c) + d(c,o_c) + 2d(s_{o_c},o_c)\\
              & \overset{(ii)}{\leq} 2d(c,s_c) + d(c,o_c) + 2d(o_c,s_c)\\
              & \overset{(iii)}{\leq} 2d(c,s_c) + d(c,o_c) + 2(d(o_c,c) + d(c,s_c))\\
              &= 4 \cdot d(c,s_c) + 3 \cdot d(c,o_c),
\end{align*}
where, inequality ($i$) holds because of the triangle inequality $d(s_c,s_{o_c}) \le  d(s_c,c) + d(c,o_c) + d(o_c,s_{o_c})$, inequality ($ii$) holds because $s_{o_c}$ was the closest facility in $S$ to $o_c$, and hence $d(o_c,s_{o_c}) \le d(o_c,s_c)$, and finally inequality ($iii$) holds because of the triangle inequality $d(o_c,s_c) \le d(o_c,c) + d(c,s_c)$. Concluding this claim.
\end{proof}

\begin{figure}[t]
  \begin{minipage}[c]{0.45\textwidth}
    \includegraphics[width=\linewidth]{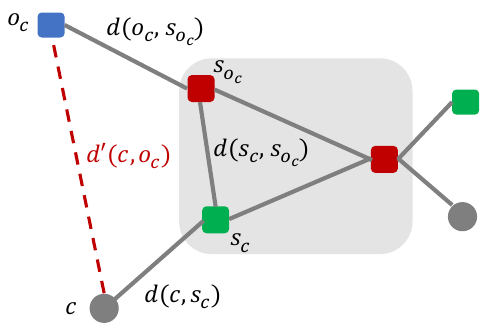}
  \end{minipage}
  \hfill
  \begin{minipage}[c]{0.5\textwidth}
    {\caption{\label{fig:polytime-apx}
    An illustration of the clique-star metric $d’$ for a client $c$.
    Facilities are represented as squares and clients as circles. The shaded area highlights the cluster centers $S$ selected by the $\eta$-approximation algorithm for vanilla \kMed. In this example, client $c$ is assigned to facility $s_c \in S$, while in the optimal solution, it is served by facility $o_c$, which is connected to center $s_{o_c} \in S$ in $d'$. Our goal (Claim~\ref{cl:optind1}) is to bound the rerouting cost $d’(c, o_c)$ in terms of distances in the original metric space $d$.
    }}
  \end{minipage}
\end{figure}

By summing the guarantee of Claim~\ref{cl:optind1} over all $c \in C$, we obtain,
\begin{align*}
 \cost_{\Ical'}(O) =
\sum_{c \in C} d'(c,o_c) & \leq  4 \cdot \sum_{c \in C} d(c,s_c) + 3 \cdot \sum_{c \in C} d(c,o_c) \\
 & \overset{(iv)}{\leq} 4 \eta \cdot \opt(\cI) + 3 \cdot \opt(\cI) \\
 & \leq (4 \eta + 3)\cdot \cost_{\cI}(O),
\end{align*}
where, inequality~($iv$) holds because $\sum_{c \in C} d(c,s_c) = \cost_{\cI}(S)  \le \eta\cdot \opt(\cI)$, due to algorithm $\Acal$, and $\sum_{c \in C} d(c,o_c) = \opt(\cI)$. Finally, note that $O'$ is an optimal solution to $\cI'$, and hence, we have that $\cost_{\cI'}(O') \le \cost_{\cI'}(O)$, as desired.
Since $\Acal$ runs in polynomial time, and construction of \kmet $d'$ can be done in polynomial time, the overall running time is $n^{\bigO(1)}$, which concludes the proof of lemma.
\end{proof}

\subsection{$O(\log k)$-approximation algorithm for \kmet metrics}\label{ss:apxcliquestar}
In this section, we show a $O(\log k)$-approximation algorithm for \CFRkMed on \kmet metrics. Towards this, in section~\ref{ss:kmettotree}, we show how to embed \kmet metric instance of \CFRkMed into tree, at a cost of a multiplicative factor $O(\log k)$ in the cost. Finally, in section~\ref{ss:treedp}, we show how to solve \CFRkMed exactly on trees using a polynomial time dynamic program, resulting in a $\bigO(\log k)$-approximation for \CFRkMed on \kmet metrics.

\subsubsection{Embedding \kmet metric into trees}\label{ss:kmettotree}

\treemetric*
\begin{proof}
	Recall that the \kmet metric $d'$ has a complete graph $Q$ on $k$ nodes, while remaining $n-k$ vertices are pendant to the nodes of $Q$. According to the seminal result of \citet{bartal1998onapproximating}, any graph with $k$ nodes can be embedded into a tree metric  on  $\bigO(k)$ nodes with distortion at most a $\bigO(\log k)$  in the distances. \footnote{The original construction of Bartal is randomized, however, as mentioned in\cite{adamczyk2019constant}, this construction can be derandomized.} 
	We use this tree embedding result to obtain $\cI''$ on a tree metric $d''$,  by simply replacing the complete graph $Q$ of \kmet metric $d'$ by the tree $T$ obtained from~\cite{bartal1998onapproximating}.
   Without loss of generality,  we assume that all the nodes of $Q$ of $d'$ are mapped to leaves of the $T$. By relabling the vertices of $T$, we assume that a vertex $u$ in $Q$ of $d'$ is mapped to $u$ in $T$ of $d''$. 
    Note that, for any pair $u,v \in Q$, we have $d'(u,v) \le d''(u,v) \le \bigO(\log k) \cdot d'(u,v)$. 

	Now, conside any subset $S'$ of facilities. 	We claim that $ \cost_{\Ical'} (S') \leq  \cost_{\Ical''}(S') \leq  \bigO(\log k) \cdot \cost_{\Ical'}(S')$. Since $d''$ dominates $d$, we have that  $ \cost_{\Ical'} (S') \leq  \cost_{\Ical''}(S')$. Furthermore, consider a client $c \in C$ and let $f \in S'$ be the facility serving $c$. Let $q_c , q_{f_c} \in Q$ be the nodes of $Q$ that are closest to $c$ and $f_c$ in $Q$, respectively. Then, since $d''(c,f_c) = d'(c,q_c) + d''(q_c,q_{f_c}) + d'(q_{f_c},f)$, we have,
	\[
	d''(c,{f_c}) \leq d'(c,q_c) + \bigO(\log k) \cdot d'(q_c,q_{f_c}) + d'(q_f,f) \leq \bigO(\log k) \cdot d'(c,{f_c}),
	\]
	since $d'(c,{f_c})=d'(c,q_c) + d'(q_c,q_{f_c}) + d(q_{f_c},f_c)$.

	By summing the above equation over all $c \in C$, we obtain:
	\begin{align*}
		\cost_{\cI''} (S') = \sum_{c \in C} d''(c,f_c) \le  \bigO(\log k) \sum_{c \in C}  d'(c,f_c)) = \bigO( \log k) \cdot \cost_{\Ical'}(S').
	\end{align*}
	The running time of the algorithm is polynomial, as the algorithm for embedding to tree metrics by \citet{fakcharoenphol2004atight} runs in polynomial time and other operations also  take polynomial time. This concludes the proof of lemma.
\end{proof}

\subsubsection{An exact dynamic programming algorithm for tree metrics}\label{ss:treedp}
The final ingredient for our approach is an exact algorithm for solving \CFRkMed on tree metrics. Our approach is inspired by the classical dynamic-programming algorithm for the vanilla $k$-median problem on tree metrics due to \citet{tamir1996pmedian} and its extension to the capacitated $k$-median problem by \citet{adamczyk2019constant}.

Let $T(e, \kappavec, b)$ denote the minimum clustering cost for the subtree rooted below edge $e$, where $\kappavec = (\kappa_1,\dots,\kappa_t)$ specifies that exactly $\kappa_i \in [t]$ facilities from group $G_i$ are opened, and $b \in \{-n, \dots, n\}$ represents the number of clients routed (in or out) through edge $e$. Here, $b < 0$ (or $b > 0$ resp.) indicates a capacity deficit (or surplus) of $|b|$ clients---meaning that $-b$ clients are routed upward (or $+b$ clients downward) through $e$. We now prove the following claim.

\exacttree*
\begin{proof}
Given an instance of \CFRkMed on the tree metric $d''$, we first preprocess the tree into a rooted binary tree using standard techniques, placing all clients and facilities are placed at the leaves. The root has one child, and all internal vertices have left and right children.
This can be done adding dummy nodes and zero-distance edges ensuring that the total number of nodes and edges remains in $\bigO(n)$, and the distances (for brevity, the metric $d''$) remain unchanged. For each non-root node $v$, let $e$ denote the edge connecting $v$ to its parent, and $e^\ell, e^r$ the edges connecting $v$ to its left and right children, respectively. The root node is connected by a special edge $e^\partial$. See Figure~\ref{fig:binary-tree} for an illustration.

On this transformed tree, we proceed bottom-up, starting from edges connected to leaf nodes and moving towards the root. In the base case, if an edge $e$ connects to a client, we set $T(e, \kappavec, b) = \infty$ for all $\kappavec \in [k]^t$ and $q \in \{-n, \dots, n\}$, as no facility is available to serve the client directly at this point. 
If an edge $e$ connects to a facility $f$ belonging to group $G_i$ with capacity $\zeta(f)$, then for $\kappavec=(0,\dots,1,\dots,0)$ with $\kappa_i=1$ and all other other entries zero, and for $b \in \{0,\dots,\zeta(f)\}$ we set $T(e,\kappavec,b) = d''(e) \cdot |b|$; otherwise, we set $T(e,\kappavec,b) = \infty$.

For each edge $e$, computing $T(e,\kappavec,q)$ would requires minimizing over all valid decompositions such that $\kappavec^\ell + \kappavec^r = \kappavec$ (element-wise) and $b^\ell + b^r = b$, where $\kappavec^\ell, \kappavec^r \in [k]^t$. After identifying the tuples $\kappavec^\ell, \kappavec^r, b^\ell, b^r$, we set:
\begin{equation} \label{eq:opt-assign}
T(e,\kappavec,b) = \min_{\substack{\kappavec^\ell+\kappavec^r=\kappavec \\ b^\ell + b^r = b}} \{T(e^\ell,\kappavec^\ell,b^\ell) + T(e^r, \kappavec^r,b^r)\} + |b| \cdot d''(e),
\end{equation}
where $d''(e)$ is the length of edge $e$ in the tree. The final solution is obtained by minimizing over the pseudo-root edge $e^\partial$ over all feasible $\kappavec \in [k]^t$ satisfying the fair-range constraints, \ie, $\alphavec \leq \kappavec \leq \beta$, as follows:
\begin{equation}
\min_{\substack{\kappavec \in [k]^t \\ \text{s.t.~} \alphavec \leq \kappavec \leq \betavec}} 
T(e^\partial, \kappavec, 0)
\end{equation}
The optimal subset of facilities can be recovered by storing the corresponding facility subset, rather than just the clustering cost, at each entry $T(e,\kappavec,b)$.

\xhdr{Optimality of the solution} We prove optimality by induction. In the base case, for edges connecting to leaf nodes, each entry $T(e, \kappavec, b)$ stores the optimal cost for that corresponding configuration.
For the induction step, assume that $T(e^\ell,\kappavec^\ell,b^\ell)$ and $T(e^r, \kappavec^r,b^r)$ store the optimal costs for all configurations $\kappavec^\ell, \kappavec^r \in [k]^t$ and $b^\ell, b^r \in \{-n,\dots,+n\}$. In Eq.\ref{eq:opt-assign}, we iterate over all valid decompositions satisfying $\kappavec^\ell + \kappavec^r = \kappavec$ and $b^\ell + b^r = b$, selecting the tuple $(\kappavec^\ell,\kappavec^r, b^\ell, b^r)$ that minimizes the total cost. 
Note that $b$ clients are routed through edge $e$ (either inwards or outwards) via edge $e$, incurring an additional cost of $|b| \cdot d''(e)$. 
Thus, the value stored in $T(e,\kappavec,q)$ is guaranteed to be optimal among all such combinations.

\xhdr{Running time} For each $e$, computing all configuration $(e,\kappavec,q)$ and their corresponding tuples $(\kappavec^\ell, \kappavec^r,b^\ell,b^r)$ minimizing Eq~\ref{eq:opt-assign} requires $\bigO(k^{2t} \cdot n)$ time. Since the binary tree has $\bigO(n)$ edges, the total running time of the algorithm is $k^{2t} \cdot n^{\bigO(1)}$, concluding the proof.
\end{proof}

\begin{figure}[t]
  \begin{minipage}[c]{0.5\textwidth}
    \includegraphics[width=\linewidth]{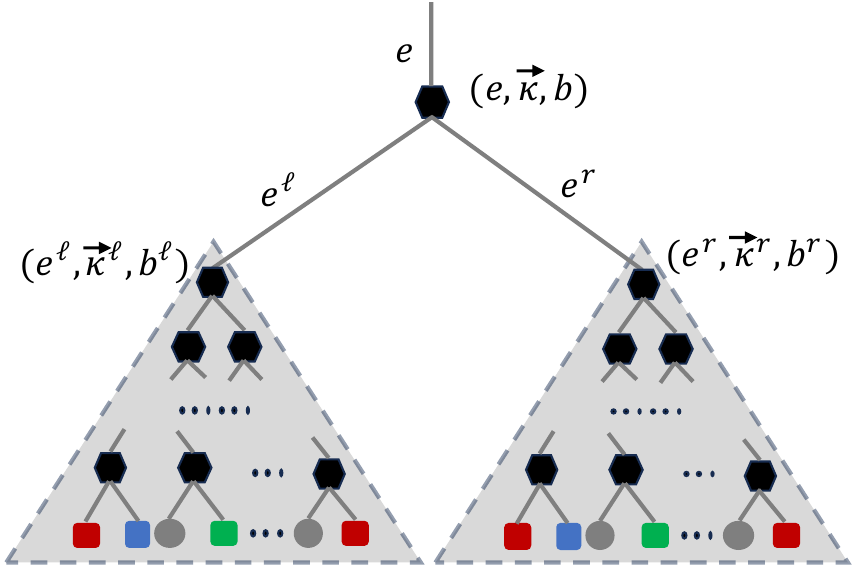}
  \end{minipage}
  \hfill
  \begin{minipage}[c]{0.48\textwidth}
    {\caption{\label{fig:binary-tree}
   An illustration of the dynamic programming algorithm. Clients are represented as circles, facilities as squares, and internal (dummy) nodes as hexagons. The color of each facility (red, blue, green) indicates its group membership. In the dynamic programming step to compute $T(e, \kappavec, b)$, we find the minimum cost over all decompositions of the subtree solutions connected via edges $e^\ell$ and $e^r$, such that $\kappavec^\ell + \kappavec^r = \kappavec$ for all $\kappavec^\ell, \kappavec^r \in [k]^t$ and $b^\ell + b^r = b$. There is an additional cost for re-routing $|b|$ clients, which is $d''(e) \cdot |b|$, where $d''(e)$ is the distance of edge $e$ in the tree metric $d''$.
    }}
  \end{minipage}
\end{figure}

\subsection{Proof of Theorem~\ref{thm:polyapx}}
Our algorithm for \CFRkMed (and \CFRkMeans) is as follows. Let $\cI$ be the input instance with metric $d$. First, we use Lemma~\ref{lemma:cliquestaremb} on $\cI$ to obtain an instance $\cI'$ on a \kmet metric $d'$, using $\bigO(1)$-approximation algorithm for $k$-median (or $k$-Means)~\cite{cohenaddad2025a2epsilon, aryalocal, ahmadian2019better}. Then, we obtain $\bigO(\log k)$-approximate ($\bigO(\log^2 k)$-approximate, resp.) solution $S$ to $\cI'$ using the algorithm of Section~\ref{ss:apxcliquestar}. Finally, we return $S$ as the solution. We claim that $S$ is an $\bigO(\log k)$-approximate ($\bigO(\log^2 k)$-approximate, resp.) solution to $\cI$. 

First, note that $S$ is a feasible solution to $\cI$ due to the guarantee of Section~\ref{ss:apxcliquestar}, and due to the fact that $\cI$ and $\cI'$ are same, except with different metrics. Then, note that,   $\opt(\cI) \leq \cost_{\cI}(S) \leq \cost_{\cI'}(S')$, because the metric $d'$ dominates $d$, i.e., $d'(u,v) \geq d(u,v)$, for all pairs $u,v \in C \cup F$. We finish the proof by observing that, for \CFRkMed, we have
\[
\cost_{\cI'}(S) \leq\bigO(\log k) \cdot \opt(\cI') \leq  \bigO(\log k) \cdot  \opt(\cI),
\]
due to Lemma~\ref{lemma:cliquestaremb}, since $\eta = O(1)$. Similarly, for \CFRkMeans, we have,
\[
\cost_{\cI'}(S) \leq\bigO(\log^2 k) \cdot \opt(\cI') \leq  \bigO(\log^2 k) \opt(\cI),
\]
Finally, all the parts of our algorithm, individually run in polynomial time, concluding the proof.

\newpage
\section{A Fixed Parameter Tractable Time Approximation Algorithm}
\label{app:fpt-apx}

In this section, we present $(3+\epsilon)$- and $(9+\epsilon)$-approximation algorithm for \CFRkMed and \CFRkMeans, respectively, running in $\fpt(k)$-time when the number of groups $t$ is constant. We formally state the result below.

\fptapx*

Our algorithm proceeds in three phases. Given an instance of \CFRkMed (or \CFRkMeans), Phase~1 (Section~\ref{app:fpt-apx:coresets}) reduces the number of clients to a small, weighted set called a coreset, such that the clustering cost on the coreset approximates that of the original instance. In Phase 2 (Section~\ref{app:fpt-apx:feasiblecp}), we transform the given instance with intersecting groups to multiple simplified instances with $k$-disjoint facility sets and replace clients with the coreset. In these simplified instances, selecting exactly one facility from each group satisfies the fair-range constraints, by construction. However, choosing arbitrary facilities may result in unbounded clustering cost relative to the optimal.
In Phase~3 (described in Section~\ref{app:fpt-apx:opg}), we design an approximation algorithm for each simplified instance using the leader-guessing framework. Applying this framework directly would not ensure capacity and fairness constraints are met, so we modify this framework to properly account for these requirements. By combining all three phases, we obtain approximation algorithms for \CFRkMed (or \CFRkMeans) with the claimed guarantees.
Throughout, we describe our algorithm in the context of \CFRkMed, noting only the necessary modifications for extending the results to \CFRkMeans.

\subsection{Constructing coresets for \CFRkMed and \CFRkMeans}
\label{app:fpt-apx:coresets}

Our first step is to reduce the number of clients using coresets---a small, weighted subset of clients that approximates the clustering cost of the original client set within a bounded factor. To accommodate this, we consider each client to be associated with a weight. We now formally define the weighted variants of the capacitated fair-range $k$-median and $k$-means problems. We remark that, the unweighted variant corresponds to each client having unit weight.
    
\begin{definition}[The weighted capacitated fair-range $k$-median (and $k$-means) problem]
An instance $\CFRkClustIns$ of the capacitated fair-range $k$-median (or $k$-means) problem can be extended to its weighted version, denoted as $\WCFRkClustIns$, by associating each client $c \in C$ with a weight $\omega(c) \geq 1$. The clustering cost of a solution $S \subseteq F$ is then defined as
$\sum_{c \in C} \omega(c) \cdot d(c,\rho(c))$
for the $k$-median objective, and 
$\sum_{c \in C} \omega(c) \cdot d(c,\rho(c))^2$
for the $k$-means objective.
\end{definition}

When clients are associated with positive real weights, it may be necessary to assign each client fractionally to multiple cluster centers, via an assignment function  $\mu: C \times S \rightarrow \mathbb{R}_{\geq 0}$. This variant of the problem is particularly relevant to us, as the coresets we construct will be weighted---where the weight, loosely speaking, represents the number of original clients that a coreset client stands in for. To properly handle these weighted clients, we require the flexibility to assign their weights fractionally across multiple centers, in a way redistributing portions of the original client mass. We formally define the fractional weighted variant of the capacitated fair-range clustering problem as follows.

\begin{definition}[The fractional weighted capacitated fair-range $k$-median (and $k$-means) problem]
An instance $\WCFRkClustIns$ of the weighted capacitated fair-range clustering problem can be extended to its factional variant, where the $c \in C$ with weight $\omega(c)$ can be assigned fractionally via a function $\mu: C \times S \rightarrow \RR_{\geq 0}$ such that
$\mu$ is a proper assignment for all clients, \ie, for all $c \in C, \sum_{f \in S} \mu(c,f) = \omega(c)$. The clustering cost of a solution $S \subseteq F$ is $\fraccost(C,S) = \sum_{c \in C, f \in S}\,\mu(c,f)\,d(c,f)$ for $k$-median and $\fraccost(C,S) = \sum_{c \in C, f \in S}\,\mu(c,f)\,d(c,f)^2$ for $k$-means.
\end{definition}

With necessary definitions in place, let us formally define coresets for \CFRkMed and \CFRkMeans.

\begin{definition}[Coresets]
For a given instance $\CFRkClustIns$ of \CFRkMed (or \CFRkMeans), a coreset is a pair $(W, \omega)$, where $W \subseteq C$ is a subset of clients and $\omega: W \rightarrow \RR_{\geq 0}$ is a weight function assigning non-negative weights to clients in $W$. The coreset approximates the clustering cost of the full client set $C$ such that, for every subset of centers $S \subseteq F$ of size $|S| = k$, and for some small $\epsilon_1 > 0$, the following holds:
\[
\fraccost(W, S) \in (1 \pm \epsilon_1)\, \cost(C, S).
\] 
\end{definition}

For any sufficiently small constant  $\epsilon_1 > 0$, \citet{cohenaddad2019on} constructed coresets of size $\bigO(k^2 \epsilon_1^{-3} \log^2 n)$ for capacitated $k$-median and $\bigO(k^5 \epsilon_1^{-3} \log^5 n)$ for capacitated $k$-means. The result is formally stated in the following theorem.

\begin{theorem}[\citet{cohenaddad2019on}, Theorem 11, 12]
For every $\epsilon_1 > 0$, there exists a randomized algorithm that, given an instance $(k,C,F,\zeta)$ of the capacitated $k$-median or $k$-means problem, computes a coreset $(W, \mu)$, where $W \subseteq C$ and $\mu: W \rightarrow \RR_{\geq 0}$ in $n^{\bigO(1)}$ time. The coreset has size $|W|= \bigO(k^2 \epsilon_1^{-3} \log^2 n)$ for the capacitated $k$-median problem and $|W|= \bigO(k^5 \epsilon_1^{-3} \log^5 n)$ for the capacitated $k$-means problem.
\end{theorem}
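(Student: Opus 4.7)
The plan is essentially a direct appeal to Theorems~11 and~12 of~\cite{cohenaddad2019on}, repackaged in our notation. Given the instance $(k,C,F,\zeta)$, I would feed it unchanged into their coreset construction, which produces in $n^{\bigO(1)}$ randomized time a weighted subset $(W,\omega)$ of clients such that, for every facility subset $S \subseteq F$ with $|S|=k$, the fractional cost $\fraccost(W,S)$ lies in $(1\pm\epsilon_1)\cost(C,S)$ simultaneously. The sizes promised by their construction are exactly $\bigO(k^2\epsilon_1^{-3}\log^2 n)$ for capacitated $k$-median and $\bigO(k^5\epsilon_1^{-3}\log^5 n)$ for capacitated $k$-means, matching the statement verbatim.

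The only point worth flagging is that the cost-preservation must hold in the \emph{fractional} sense: because the weights $\omega(c)$ returned by the construction are real-valued and since capacities $\zeta$ may bind, a weighted client may need to be split across several open facilities via an assignment $\mu: W \times S \to \RR_{\ge 0}$ with $\sum_{f \in S}\mu(c,f)=\omega(c)$ and $\sum_{c \in W}\mu(c,f)\le \zeta(f)$. This is precisely the form of the guarantee in~\cite{cohenaddad2019on}, so no adaptation of their argument is required.

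Since the lemma is stated for the (plain) capacitated objectives, and nothing about the coreset construction references the group structure $\GG$ or the vectors $\alphavec,\betavec$, there is no genuine obstacle in the proof itself; the ``fair-range'' aspect only enters through the downstream use of the coreset, where the same $(W,\omega)$ will be reused inside each feasible constraint pattern (via Lemma~\ref{lemma:feasiblecp}) to build an instance of \OPGWCkMedDis. In that downstream step one has to check that the $(1\pm\epsilon_1)$ cost preservation is preserved under restriction to the facility subsets selected by the pattern---this is immediate because the coreset guarantee quantifies over \emph{all} size-$k$ subsets of $F$. Consequently, the proof reduces to citing~\cite{cohenaddad2019on} and noting the quantifier order of their guarantee; there is no computational or combinatorial step that needs re-derivation here.
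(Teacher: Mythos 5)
Your proposal is correct and matches the paper exactly: this statement is an imported result, cited verbatim from \citet{cohenaddad2019on}, and the paper offers no independent proof beyond the observation you make—that the coreset guarantee quantifies over all size-$k$ facility subsets, so it transfers unchanged to the fair-range setting (this transfer is recorded separately in Corollary~\ref{cor:coreset}). Nothing further is required.
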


The above coreset applies to any subset of facilities $S \subseteq F$, regardless of whether it satisfies the fair-range constraints or not. Thus, a coreset constructed for the vanilla capacitated $k$-median (or $k$-means) problem can be directly used for our setting. We state this formally in the following corollary.

\begin{corollary}[Coresets for \CFRkMed and \CFRkMeans]
\label{cor:coreset}
For every $\epsilon_1 > 0$, there exists a randomized algorithm that, given an instance $\CFRkClustIns$ of \CFRkMed or \CFRkMeans, computes a coreset $(W, \mu)$, where $W \subseteq C$ and $\mu: W \rightarrow \RR_{\geq 0}$ in $\poly(n,k)$ time. The coreset has size $|W|= \bigO(k^2 \epsilon_1^{-3} \log^2 n)$ for \CFRkMed and $|W|= \bigO(k^5 \epsilon_1^{-3} \log^5 n)$ for \CFRkMeans.
\end{corollary}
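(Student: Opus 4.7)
The plan is to invoke the coreset construction of~\citet{cohenaddad2019on} as a black box and show that its guarantee transfers to the fair-range setting for free. Their Theorems~11 and~12 yield, for any instance of the (vanilla) capacitated $k$-median (resp.\ $k$-means) problem, a weighted subset $(W,\omega)$ with $W \subseteq C$ and $\omega: W \to \RR_{\geq 0}$, of size $\bigO(k^2 \epsilon_1^{-3} \log^2 n)$ (resp.\ $\bigO(k^5 \epsilon_1^{-3} \log^5 n)$), constructible in randomized polynomial time. Crucially, the guarantee $\fraccost(W,S) \in (1\pm\epsilon_1)\,\cost(C,S)$ is stated for \emph{every} $k$-subset $S \subseteq F$ of facilities that respects the capacity function $\zeta$, and makes no reference whatsoever to the demographic structure $\GG$, the group requirements $\alphavec,\betavec$, or the identity of feasible center sets.

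The key observation is that introducing fair-range constraints $\alphavec \le |S \cap G_i| \le \betavec$ only \emph{restricts} the collection of admissible center sets; it alters neither the underlying metric $d$, the facility capacities $\zeta$, the client weights, nor the clustering cost functional. In particular, any feasible solution to the \CFRkMed (or \CFRkMeans) instance $\CFRkClustIns$ is automatically a capacity-respecting $k$-subset of $F$, and hence the Cohen-Addad et al.\ guarantee applies verbatim to such an $S$. The coreset $(W,\omega)$ produced by their algorithm on the capacitated instance $(k,C,F,\zeta)$ therefore serves as a coreset for the fair-range instance with the identical size bounds and the same $n^{\bigO(1)}$ construction time. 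There is no real obstacle here: the proof amounts to the containment of feasible-solution families, and no modification of the underlying coreset construction is needed.
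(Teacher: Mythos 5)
Your proposal is correct and is essentially identical to the paper's own argument: the paper likewise observes that the Cohen-Addad et al.\ coreset guarantee holds for \emph{every} $k$-subset $S \subseteq F$, and since the fair-range constraints only restrict which center sets are admissible (leaving the metric, capacities, weights, and cost functional untouched), the vanilla capacitated coreset transfers verbatim with the same size bounds and running time.
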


\subsection{Enumerating feasible constraint patterns}
\label{app:fpt-apx:feasiblecp}
In this subsection, we reduce \CFRkMed (and \CFRkMeans) to multiple instances of a simpler problem variant, which has a restricted structure that allow us to design efficient algorithmic solutions. To do so, we adopt the framework of \citet{thejaswi2022clustering, thejaswi2024diversity} to our setting.
For simplicity, we describe our approach in the context of \CFRkMed, however, since finding a feasible solution is independent of the clustering objective, the approach naturally extends to \CFRkMeans.
We begin by introducing the notions of characteristic vector and \pattern.

\begin{definition}[Characteristic vector]
Given an instance $\CFRkClustIns$ of \CFRkMed with $\GG=\{G_i\}_{i \in [t]}$. A characteristic vector of a facility $f \in F$ is a binary vector $\charvec_f \in \{0,1\}^t$,  where the $i$-th index $\charvec_f[i]$ is set to $1$ if $f \in G_i$ and $0$ otherwise.
Consider the set $\{\charvec_f\}_{f\in F}$ of characteristic vectors of facilities in $F$. For each $\gammavec \in \{0,1\}^t$, let $\patternset{\gammavec} = \{f \in F: {\charvec}_f=\gammavec\}$ denote the set of all facilities with characteristic vector $\gammavec$. 
Finally, $\PP=\{E(\gammavec)\}_{\gammavec \in \{0,1\}^t}$ induces a partition on $F$.
\end{definition}

\begin{definition}[Constraint pattern]
Given a $k$-multiset $\EE=\{E(\gammavec_{i})\}_{i \in [k]}$, where each $E(\gammavec_{i}) \in \PP$, the \pattern associated with $\EE$ is the vector obtained by the element-wise sum of the characteristic vectors $\{\gammavec_{1},\dots,\gammavec_{k}\}$, \ie, $\sum_{i \in [k]} \gammavec_{i}$.
A \pattern is said to be feasible if $\alphavec  \leq \sum_{i \in [k]} \gammavec_{i} \leq \betavec$, where the inequalities are taken element-wise.
\end{definition}
In the following lemma, we establish that enumerating all feasible constraint patterns for \CFRkMed can be done in $\fpt(k,t)$-time.

\feasiblecp*
\begin{proof}
The partition $\PP$ can be constructed in $\bigO(|F|\,t)$ time, as there are at most $|F|$ facilities and computing each characteristic vector takes $\bigO(t)$ time. We then enumerate all possible $k$-multisets over $\PP$ and check if they form a feasible constraint pattern. Since there are \( \binom{|\PP| + k - 1}{k} \) such multisets, enumeration takes $\bigO(|\PP|^k t)$ time. Given $|\PP| \leq 2^t$ and $|F| \leq n$, the running time is $\bigO(2^{tk} \, t n)$.
\end{proof}

Given $(\gammavec_i)_{i \in [k]}$ be a feasible \pattern. 
By selecting exactly one facility from each $E(\gammavec_{i})$ arbitrarily for every $i \in [k]$ yields a feasible solution for \CFRkMed.
However, such an arbitrary selection may not guarantee a bounded approximation factor relative to the optimal solution. Therefore, we need to select facilities more carefully.
We remark that a feasible constraint pattern $(\gammavec_i)_{i \in [k]}$ may contain two (or more) identical vectors, \ie, $\gammavec_i = \gammavec_j$ for some $i \neq j$, when multiple facilities must be selected from the same group. In such cases, we duplicate facilities in $E(\gammavec_i)$, introducing infinitesimal small distortion $\epsilon_2 > 0$ to create a new group $E(\gammavec_j)$. As a result, we treat every collection $\EE$ corresponding to a feasible constraint pattern as consisting of disjoint groups.
This transformation reduces the problem to a simpler variant: the one-per-group weighted capacitated $k$-median (or $k$-means) problem with disjoint facility groups, where the goal is to choose exactly one facility from each group $E(\gamma_i)$, for every $i \in [k]$ and the clients are replaced by a coreset with associated weights. We now define this problem formally.

\begin{definition}[The one-Per-Group Weighted Capacitated $k$-Median (or $k$-means) problem]
\label{def:OPGWCkMed}
An instance $\OPGWCkMedIns$ of the One-Per-Group Weighted Capacitated $k$-Median (or $k$-means) problem is defined by a positive integer $k$, a coreset $(W, \omega)$, where $\omega: W \rightarrow \mathbb{R}_{\geq 0}$, a collection of disjoint facility groups $\EE = \{E_i\}_{i \in [k]}$, facility capacities $\zeta: \bigcup_{i \in [k]} E(\gamma_i) \rightarrow \mathbb{R}_{\geq 0}$. The task is to choose exactly one center from each group $E_i$, for all $i \in [k]$, and assign clients (fractionally) to the selected centers $S$ via an assignment function $\mu: W \times S \rightarrow \mathbb{R}_{\geq 0}$, such that: 
\squishlist
\item for all clients $c \in W$, $\sum_{f \in S} \mu(c, f) = \omega(c)$,
\item  for all selected facilities $f \in S$, $\sum_{c \in W} \mu(c, f) \leq \zeta(f)$.
\squishend
The objective is to minimize the clustering cost ${\sum_{c \in W, f \in S} \mu(c, f) \cdot d(c, f)}$ for $k$-median and ${\sum_{c \in W, f \in S} \mu(c, f) \cdot d(c, f)^2}$ for $k$-means. We succinctly denote these problems as \OPGWCkMedDis and \OPGWCkMeansDis, respectively.
\end{definition}

\subsection{An \fpt approximation algorithm for \OPGWCkMedDis and \OPGWCkMeansDis}
\label{app:fpt-apx:opg}

In this subsection, we give a $(3 + \epsilon_3)$-approximation algorithm for \OPGWCkMedDis, for any $\epsilon_3 > 0$. The proof for \OPGWCkMeansDis is similar and can be obtained by replacing the distances with squared distances, resulting in the claimed $(9 + \epsilon_3)$-approximation.

\begin{lemma} \label{lemma:opg-fpt-apx}
There exists a randomized $(3+\epsilon_3)$-approximation algorithm, for every $\epsilon_3 > 0$, for \OPGWCkMedDis in time $(\bigO(|W|\, k\, \epsilon_3^{-1} \log n))^{\bigO(k)} \cdot n^{\bigO(1)}$, where $|W|$ is the size of the coreset. With the same running time, there exists a randomized $(9+\epsilon_3)$-approximation algorithm for \OPGWCkMeansDis.
\end{lemma}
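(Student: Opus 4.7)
The plan is to follow the leader-guessing framework of \citet{cohenaddad2019tight}, adapted to handle the per-group selection requirement and the facility duplicates introduced in Section~\ref{app:fpt-apx:feasiblecp}. Fix an optimal solution $S^* = \{f_1^*, \ldots, f_k^*\}$ with $f_i^* \in E(\gamma_i)$, let $W_i^* \subseteq W$ denote the (fractional) cluster served by $f_i^*$, and for each $i$ let $\ell_i^* \in W_i^*$ be a client of maximum weight in $W_i^*$ that is closest to $f_i^*$ (the \emph{leader} of cluster~$i$); set $r_i^* := d(\ell_i^*, f_i^*)$. The algorithm enumerates every $k$-tuple $(\ell_1,\dots,\ell_k) \in W^k$ of candidate leaders and every $k$-tuple $(\tilde r_1,\dots,\tilde r_k)$ of candidate radii drawn from the geometric grid $\{(1+\epsilon_3/c)^j \cdot d_{\min} : 0 \le j \le O(\epsilon_3^{-1}\log n)\}$ (for a suitable constant $c$), so that some guess satisfies $\ell_i = \ell_i^*$ and $\tilde r_i \in [r_i^*, (1+\epsilon_3/c) r_i^*]$ for every $i$. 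This enumeration takes $(|W|\cdot O(\epsilon_3^{-1}\log n))^k$ time since $\Delta = n^{O(1)}$.

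For each such guess, I reduce facility selection to a bipartite feasibility check. Build a bipartite graph with the $k$ ``slots'' $\{(i,\gamma_i)\}_{i\in[k]}$ on one side and all original facilities $f\in\bigcup_i E(\gamma_i)$ (undoing the $\epsilon_2$-duplication through the mapping $M$) on the other side, with an edge from slot $i$ to facility $f$ iff $f\in E(\gamma_i)$ (up to the duplication map) and $d(\ell_i, f) \le \tilde r_i$. A perfect matching $i\mapsto f_i$ gives a candidate center set $S=\{f_1,\dots,f_k\}$ in which no original facility is selected twice and each slot is filled by a facility within the guessed radius of its leader. Given $S$, I then solve the transportation LP on $(W,S,\zeta,\omega)$ to decide whether the fractional capacity-respecting assignment exists and compute $\fraccost(W,S)$. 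The algorithm outputs the best $S$ encountered; this whole inner loop is polynomial, so the overall running time is as claimed.

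The approximation guarantee is analyzed for the correct guess $(\ell_i = \ell_i^*,\ \tilde r_i \approx r_i^*)_{i\in[k]}$. For this guess, the true optimum $S^*$ is itself a valid matching, so a feasible selection $S$ does exist and the transportation LP has cost at most $\fraccost(W,S)$. For each selected $f_i$ and each client $c \in W_i^*$, the triangle inequality gives
\[
d(c,f_i) \;\le\; d(c,f_i^*) + d(f_i^*,\ell_i^*) + d(\ell_i^*,f_i) \;\le\; d(c,f_i^*) + r_i^* + (1+\tfrac{\epsilon_3}{c})\,r_i^*.
\]
Since $\ell_i^*$ is the closest client in $W_i^*$ to $f_i^*$, one has $r_i^* \le d(c,f_i^*)$ for all $c\in W_i^*$, so the right-hand side is at most $(3+\epsilon_3)\,d(c,f_i^*)$. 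Reassigning every client in $W_i^*$ to $f_i$ (the same partition as in $S^*$) is feasible w.r.t.\ capacities because $S^*$ was, and yields a $(3+\epsilon_3)$-approximate cost against $\opt$; the optimal transportation LP on $S$ can only do better. For \OPGWCkMeansDis the same argument with squared distances and the inequality $(a+b+c)^2 \le (3+\epsilon_3)(a^2+b^2+c^2)\cdot(1+O(\epsilon_3))$ yields the $(9+\epsilon_3)$ factor.

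The main obstacle I expect is the facility-selection step, and in particular two issues the paper flags explicitly: (i) two different slots $(i,\gamma_i)$ and $(j,\gamma_j)$ may both want to grab the same facility lying in both their $\tilde r$-balls, and (ii) after duplication in Section~\ref{app:fpt-apx:feasiblecp} we must never take two copies of the same original facility, since the capacity $\zeta$ lives on the original. The bipartite matching view handles both simultaneously: edges use the original-facility side (through~$M$) so a matching automatically forbids duplicate selection, and competing demands for the same facility are resolved globally by the matching. The only remaining subtlety is that, for the \emph{correct} guess, $S^*$ itself induces a matching in this graph, so Hall's condition holds and feasibility is guaranteed; this is what makes the $(3+\epsilon_3)$ (respectively $(9+\epsilon_3)$) bound go through despite the intertwined capacity/fairness/duplication constraints.
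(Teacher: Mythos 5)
There is a genuine gap in the facility-selection step. Your bipartite-matching construction cleanly resolves which slot gets which facility and prevents picking two copies of a duplicated facility, but it selects an \emph{arbitrary} perfect matching, and your capacity argument then asserts that ``reassigning every client in $W_i^*$ to $f_i$ \ldots is feasible w.r.t.\ capacities because $S^*$ was.'' That implication is false: feasibility of $S^*$ gives $\sum_c \mu^*(c,f_i^*) \le \zeta(f_i^*)$, but says nothing about $\zeta(f_i)$ for the facility your matching happens to assign to slot $i$. A facility in $E(\gamma_i)$ within distance $\tilde r_i$ of $\ell_i^*$ may have capacity far below $\zeta(f_i^*)$, in which case the cluster-preserving reassignment you use to witness a $(3+\epsilon_3)$-cost solution is not capacity-feasible, and the transportation LP on your $S$ may be infeasible or have cost unrelated to $\opt$. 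The existence of \emph{some} perfect matching (induced by $S^*$) guarantees only that your algorithm returns something, not that what it returns admits a cheap feasible assignment. This is precisely the pitfall the paper flags (``a naive selection might choose $f_j^*$, whose capacity maybe insufficient to serve the intended cluster'').

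The paper's proof closes this hole differently: after brute-forcing (in $O(k^k)$ time) which group each leader should draw from, it selects from $E(\gamma_i)$ the facility of \emph{maximum capacity} within the guessed radius of $\ell_i^*$. For the correct guess, $f_i^*$ is itself a candidate in that ball, so the chosen $f_i$ satisfies $\zeta(f_i) \ge \zeta(f_i^*)$, which is exactly what makes the cluster-preserving reassignment $\mu(c,f_i) := \mu^*(c,f_i^*)$ feasible; the $(3+2\epsilon_3)$ bound then follows from the same triangle-inequality chain you wrote. Conflicts from duplicated facilities are handled greedily via the map $M$ (skip an already-taken original and take the next-highest-capacity candidate). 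If you want to keep your matching formulation, you would need to augment it so that each slot is matched to a facility whose capacity dominates that of the optimal facility for that slot (e.g., restrict slot $i$'s edges to facilities of capacity at least a guessed threshold, or argue a capacity-aware matching exists); as written, an arbitrary perfect matching does not suffice. The rest of your argument (leader/radius enumeration, running time, the squared-distance adaptation for the $(9+\epsilon_3)$ bound) matches the paper.
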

\begin{proof}
Given an instance $\OPGWCkMedIns$ of \OPGWCkMedDis with $\EE=\{E_i\}_{i \in [k]}$.
First we guess a set of $k$-leaders $L^*= (\ell_1^*, \dots, \ell_k^*) \subseteq W$ that will be closest to the facilities in the optimal solution $S^* = \{f_1^*, \dots, f_k^*\}$, with $\ell_i^*$ being closest to $f_i^*$. 
Additionally, we guess the radii $R^* = (r_1^*, \dots, r_k^*)$ corresponding to each of the $k$-clusters in the optimal solution. More precisely, the aspect ratio of the metric space is bounded in interval $[1,(n)^{\bigO(1)}]$ which we discretize to $[[(n)^{\bigO(1)}]_{\epsilon_3}]$, for some $\epsilon_3 >0$. So there are at most $[(n)^{\bigO(1)}]_{\epsilon_3} \le \lceil \log_{1+\epsilon_3} \Delta\rceil = \bigO(\epsilon_2^{-1} \log n)$ discrete possible radii.
Out of these $\bigO(\epsilon_3^{-1} \log n)$ possible radii, we guess the correct radius $r_i^*$ from the leader $\ell_i^*$ to the facility $f_i^*$ that will be serving $\ell_i^*$ in the optimal solution, \ie, $d(\ell_i^*,f_i^*) = r_i^*$.

Although we can guess the set of leaders $L^*$ and their corresponding optimal radii $R^*$ from the optimal solution $S^*$, multiple facilities may lie within each radius, and it remains unclear which one to choose. The selected facilities must satisfy the following conditions: 
($i$) exactly one facility must be chosen from each group $E_i$, 
($ii$) the number of assigned clients should not exceed the capacity of facilities, 
($iii$) the clustering cost must bounded with respect to the cost of the optimal solution,
($iv$) finding the facility set satisfying $i$, $ii$ and $iii$ must be in $\fpt(k,t)$ time.

\xhdr{One-per-group constraint}
As mentioned earlier, there can be multiple facilities belonging to different $E(\gamma)$ may be present within the radius $r^*_{i}$ from a leader $\ell^*_i$, so it is unclear facility belonging to which group we need to select for $\ell_i^*$. As there are $k$  groups, we can find this in $\bigO(k^k)$ time via brute-force enumeration. 
For the remainder of the proof, we assume that the correct group $E(\gamma_j)$ is known for each leader $\ell_i^*$, and without loss of generality, we assume that leader $\ell_i^*$ selects a facility from group $E(\gamma_i)$ for all $i \in [k]$. Specifically, we choose the facility $f_i \in E(\gamma_i)$ with the largest capacity within radius $r_i^*$ from $\ell_i^*$.
Note that $f_i$ and $f_i^*$ (the facility used in the optimal solution) may not belong to the same group. While we enforce $f_i \in E(\gamma_i)$, we make no assumptions about the group membership of $f_i^*$, which may be from any group.

\xhdr{Capacity requirements}
Since $f_i$ has the highest capacity among all candidates in $E_(\gamma_i)$ such that $d(\ell^*_i,f_i) \leq r^*_i$, it can serve at least as much capacity as $f_i^*$, the corresponding facility in the optimal solution, \ie, $\zeta(f_i^*) \leq \zeta(f_i)$. For the purpose of bounding the approximation ratio, we assume that for each $i \in [k]$, all clients fractionally assigned to $f_i^*$ in the optimal solution are reassigned to $f_i$, \ie, $\mu(c,f_i) = \mu^*(c, f_i^*)$. Such an assignment is valid because $\zeta(f_i^*) \leq \zeta(f_i)$.

\xhdr{Approximation factor}
Consider the illustration in Figure~\ref{fig:fpt-apx-factor}. The distance incurred when redirecting any client $c$, originally fractionally served by $f_i^*$, to the selected facility $f_i$ can be expressed as follows:
\begin{align*}
d(c,f_i) & \overset{(i)}{\leq} d(c,f_i^*) + d(f_i^*,\ell_i^*) + d(\ell_i^*,f_i)\\
&\overset{(ii)}{\leq}  d(c,f_i^*) + 2 \cdot (1 + \epsilon_3)\, r_i^*\\
&\overset{(iii)}{\leq} d(c,f_i^*) + 2 \cdot (1 + \epsilon_3)\, d(c,f_i^*)\\
&\overset{(iv)}{\leq} (3 + 2 \epsilon_3)\, d(c,f_i^*),
\end{align*}
where the inequality ($i$) follows from the triangle inequality by splitting the distance appropriately, 
the inequality ($ii$) holds because we have guessed the radius $r_i^*$ such that $d(\ell_i^*, f_i) \leq r_i^*$. The additive factor of $\epsilon_3$ is coming for the distortion in the distances due to discretization of radii.
The inequality ($iii$) follows from the choice of $\ell_i^*$ as the closest client to  $f_i^*$, implying  $d(c, f_i^*) \geq d(\ell_i^*, f_i^*) = r_i^*$, allowing us to replace $r_i^*$ with $d(c, f_i^*)$.
Since only $\mu(c,f_i^*)$ is the fraction of $c$ that is being served by $f_i^*$ in the optimal solution, which we will redirect to $f_i$ in our suboptimal solution, it holds that,
\[ \mu(c,f_i^*) \cdot d(c,f_i) \leq (3 + 2 \epsilon_3) \cdot \mu(c, f_i^*) \cdot d(c,f_i^*)\]
By summing over all clients $c \in W$, the total assignment cost can be bounded as follows.

\begin{align*}
\sum_{c \in W, f_i \in S}\, \mu(c,f_i^*) \cdot d(c,f_i) &\leq (3 + 2 \epsilon_3) \sum_{c \in C, f_i^* \in S^*}\, \mu(c,f_i^*)\cdot d(c,f_i^*) \\
\fraccost(C,S) & \leq (3 + 2 \epsilon_3)\cdot \opt 
\end{align*}

\xhdr{Running time analysis}
Brute-force enumeration over all $k$-multisets of clients in coreset $W$ can be done in $\bigO(|W|^k)$ time.
Since we assume that the aspect ratio of the metric space is bounded by a polynomial in  n , we can discretize the distances into at most $\bigO(\epsilon_3^{-1} \log n)$ distinct values using standard techniques. Enumerating all possible $k$-multisets of discretized radii to guess the optimal radii can thus be done in $\bigO((\epsilon_3^{-1 }\log n)^k)$ time.
Guessing the correct facility group $E(\gamma_i)^*$ for each leader $\ell_i^*$ can be done in $\bigO(k^k))$ time by computing the perfect matching.
Finally, for each guess, we need to compute the cost of the corresponding solution to select the one with the minimum cost. This computation can be performed in $n^{\bigO(1)}$ time.
Thus, the total running time of the algorithm is:
\[
\bigO\left(|W|^k \cdot (\bigO(\epsilon_3^{-1}\log n))^k \cdot \bigO(k^k) \cdot n^{\bigO(1)}\right) 
= \left(\bigO(|W|\, k\, \epsilon_3^{-1} \log n)\right)^{\bigO(k)} \cdot n^{\bigO(1)}.
\]
The proof of $9 + \epsilon_3)$-approximation algorithm for \OPGWCkMeansDis follows by replacing distances with squared distances. This concludes our proof.
\end{proof}

\begin{figure}[t]
  \begin{minipage}[c]{.3\textwidth}
    \includegraphics[width=\linewidth]{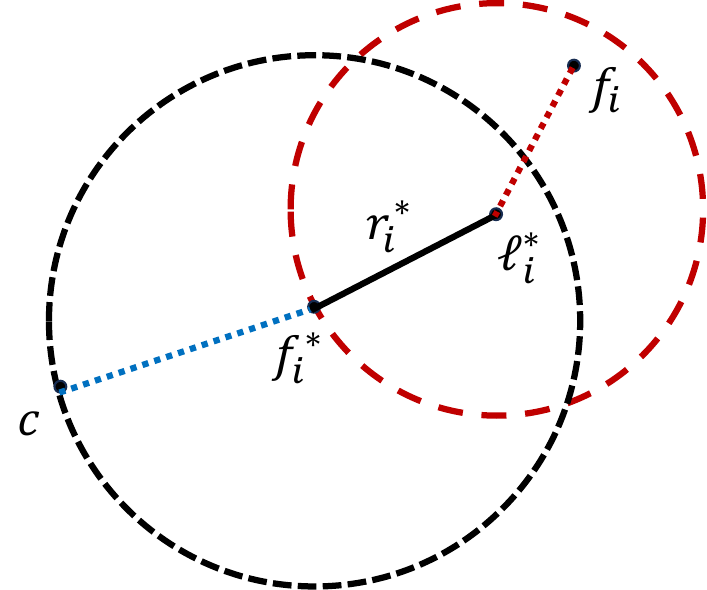}
  \end{minipage}
  \hfill
  \begin{minipage}[c]{0.65\textwidth}
    {\caption{ \label{fig:fpt-apx-factor}
    An illustration for bounding the approximation factor of the \fpt algorithm for \OPGWCkMedDis. Here, $\ell_i^*$ is the leader of cluster $i \in [k]$,$f_i$ is the facility closest to  $\ell_i^*$ within the guessed radius, and $f_i^*$ is the facility that (partially) serves both the client $c$ in the optimal solution $S^*=\{f_i^*\}_{i \in [k]}$. 
    If $\mu(c, f_i^*)$ denotes the fraction of $c$ served by $f_i^*$, we aim to bound the term $\mu(c, f_i^*) \cdot d(c, f_i)$ in terms of  $\mu(c, f_i^*) \cdot d(c, f_i^*)$, the former corresponds to the cost incurred in our approximate solution while the latter is corresponds to the cost incurred in the optimal solution if $c$ were to be (partially) served by $f_i^*$. }}
  \end{minipage}
\end{figure}

\subsection{Proof of Theorem~\ref{thm:fptapx}}

Our algorithm for solving \CFRkMed (or \CFRkMeans) proceeds as follows. For clarity, we describe the approach for \CFRkMed; the extension to \CFRkMeans is similar.
Given an instance \CFRkClustIns of \CFRkMed, we first construct a coreset $(W,\omega)$ for the client set $C$ using Corollary~\ref{cor:coreset}. Next, we partition $F$ into at most $\bigO(2^t)$ disjoint subsets $\PP = \{E(\gammavec)\}_{ \gammavec \in \{0,1\}^t }$, where each subset $E(\gammavec)$ contains facilities with a common characteristic vector $\gammavec \in \{0,1\}^t$. Using Lemma~\ref{lemma:feasiblecp}, we enumerate all feasible $k$-multisets of $\PP$, denoted as $\EE=\{E(\gammavec_{i})\}_{i \in [k]}$. To ensure the groups in \( \EE \) are disjoint, we duplicate facilities if needed and perturb distances slightly by $\epsilon_2 > 0$. This yields a set of at most $\bigO(2^{tk})$ instances of \OPGWCkMedDis, each defined as $J = (W,\EE,\zeta,\omega,k)$. 
For each instance $J$ of \OPGWCkMedDis, for any $\epsilon_3>0$, we compute a $(3 + \epsilon_3)$-approximate solution using in Lemma~\ref{lemma:opg-fpt-apx}, and choose the solution with minimum cost.

Recall that we may have duplicated facility groups in $\EE$ to ensure disjointness when constructing instances of \OPGWCkMedDis (or \OPGWCkMeansDis). If the same facility is selected more than once due to duplication, it would violate capacity constraints—since its capacity cannot be double-counted in the original instance. This issue is unique to the capacitated setting and does not arise in the uncapacitated case.
To handle this, we maintain a mapping $M$ that tracks all duplicates of each facility across groups. When a facility $f_i$ is selected, we set $M(f_i) = 1$. While selecting a facility for each leader $\ell^*_i$, we check if any duplicate of the candidate has already been chosen; if so, we skip it and select the next highest-capacity facility in the group.
This selection is valid and respects capacity constraints, as we always choose the highest-capacity facility available that has not been previously selected.

Finally, by choosing $\epsilon_1, \epsilon_2,\epsilon_3$ appropriately such that $\epsilon = \Theta(\epsilon_1  \epsilon_2  \epsilon_3)$, we have a $(3+\epsilon)$-approximation algorithm for \CFRkMed. 

\xhdr{Running time} Our algorithm for \CFRkMed invokes Lemma~\ref{lemma:opg-fpt-apx} at most $\bigO(2^{tk})$ times, with each invocation running in $(\bigO(|W|\, k\, \epsilon_3^{-1} \log n))^{\bigO(k)} \cdot n^{\bigO(1)}$ time. By Corollary~\ref{cor:coreset}, the coreset size is $|W| = \bigO(k^2 \epsilon_1^{-3} \log^2 n)$ for \CFRkMed and $|W|=\bigO(k^5 \epsilon_1^{-3} \log^5 n)$ for \CFRkMeans. Substituting the appropriate values for $|W|$ and choosing $\epsilon_1, \epsilon_2, \epsilon_3 > 0$ suitably yields the claimed overall running time of $(\bigO(2^t k \epsilon^{-1} \log n))^{\bigO(k)} \cdot n^{\bigO(1)}$. When $t$ is constant the running time is $(\bigO(k \epsilon^{-1} \log n))^{\bigO(k)} \cdot n^{\bigO(1)}$. This concludes our proof.
\hfill\qed



\end{document}